\documentclass[a4paper,12pt]{article}
\usepackage{epsf, amssymb,amsmath,dsfont}
\usepackage{epsfig}
\usepackage{cite}
\usepackage{hyperref}
\usepackage{xcolor}
\usepackage{graphicx}
\usepackage{tikz}
\usepackage{bbold}
\usepackage{mathtools}
\usepackage{comment}
\usepackage[dvipsnames]{xcolor}
\usepackage{ragged2e}

\counterwithin*{equation}{section}

\hypersetup{hidelinks} 

\makeatletter
\renewcommand\section{\@startsection {section}{1}{\z@}%
                                   {-3.5ex \@plus -1ex \@minus -.2ex}
                                   {2.3ex \@plus.2ex}%
                                   {\normalfont\large\bfseries}}

\renewcommand\subsection{\@startsection{subsection}{2}{\z@}%
                                     {-3.25ex\@plus -1ex \@minus -.2ex}%
                                     {1.5ex \@plus .2ex}%
                                     {\normalfont\bfseries}}

\makeatother

\newcommand{\bea}{\begin{eqnarray}}
\newcommand{\eea}{\end{eqnarray}}
\newcommand{\beq}{\begin{equation}}
\newcommand{\eeq}{\end{equation}}
\newcommand{\bem}{\begin{pmatrix}}
\newcommand{\eem}{\end{pmatrix}}
\newcommand{\bl}{\begin{align}}
\newcommand{\el}{\end{align}}
\newcommand{\bld}{\begin{aligned}}
\newcommand{\eld}{\end{aligned}}

\def\ca{{\cal A}}
\def\cb{{\cal B}}

\def\caBH{{\cal M}}

\def\caR{\ca_R}
\def\caRH{\ca_{RH}}
\def\caB{\ca_B}
\def\caBH{\ca_{BH}}

\def\cbR{\cb_R}
\def\cbBH{\cb_{BH}}
\def\cbB{\cb_{B}}
\def\cbRH{\cb_{RH}}

\newcommand{\Ind}{\text{Ind}}

\newcommand{\ket}[1]{|#1\rangle}

\usepackage{amsmath}
\usepackage{amssymb}
\usepackage{amsthm}
\usepackage{mathtools}

\newtheorem{theorem}{Theorem}[section]
\newtheorem{proposition}[theorem]{Proposition}

\theoremstyle{definition}
\newtheorem{definition}[theorem]{Definition}

\theoremstyle{remark}
\newtheorem{remark}[theorem]{Remark}

\begin{document}

\centering 
${}$
\thispagestyle{empty}

\vskip 1.5cm {\large {\bf Algebraic Complexity and Black Hole Complementarity}}
 
\vskip 1cm    { Aude Corbeel$^1$}, { Jingxin Tu$^1$},  { Pim van den Heuvel$^1$}, \\ { Jeremy van der Heijden$^{2}$}, { Erik Verlinde$^1$}\\
{\footnotesize\vskip 0.5cm  \textit{$^1$Institute for Theoretical Physics, University of Amsterdam\\
Science Park 904,
1090 GL Amsterdam,  
The Netherlands\\
\vskip 0.5cm $^2$Department of Physics and Astronomy, University of British Columbia\\ 6224 Agricultural Road, Vancouver B.C. V6T 1Z1, Canada\\}}
{\vskip 0.5cm \texttt{\href{mailto:a.n.i.corbeel@uva.nl}{a.n.i.corbeel@uva.nl}, \href{mailto:j.tu2@uva.nl}{j.tu2@uva.nl}, \href{mailto: p.vandenheuvel@uva.nl}{p.vandenheuvel@uva.nl}, \href{mailto: jeremy.vanderheijden@ubc.ca}{jeremy.vanderheijden@ubc.ca}, \href{mailto: e.p.verlinde@uva.nl}{e.p.verlinde@uva.nl}}}

\vspace{1cm}

\begin{abstract}
\baselineskip=16pt

We develop an operator algebraic generalization of the Yoshida--Kitaev information recovery protocol that applies to von Neumann algebras of arbitrary type. The construction is based on finite-index inclusions, with the Jones basic construction and canonical endomorphisms providing the central algebraic tools. Unlike the finite-dimensional qubit description, the infinite-dimensional theory exhibits new structural phenomena that play an essential role in information recovery. In particular, the diary information is represented non-locally by a choice of Pimsner--Popa basis associated with the inclusion. Exploiting the Temperley--Lieb relations satisfied by the Jones projections, we introduce an algebraic notion of computational complexity and show that it is naturally measured by the Jones--Kosaki index. For irreducible depth-two inclusions, we demonstrate that the information transfer is implemented by an algebraic Fourier transform. Finally, we discuss a potential spacetime interpretation of the construction, including the emergence of an island algebra in terms of (non-)isometric embeddings and an operator algebraic realization of black hole complementarity.
\end{abstract}
\raggedright
\justifying

\newpage

\setcounter{page}{1}
\tableofcontents

\newpage

\section{Introduction}

In the past decade, important steps have been made towards resolving the black hole information paradox. There is a general consensus that, after the Page time, an observer with access to a sufficient amount of the radiation can in principle decode the quantum information escaping from the black hole \cite{hayden2007blackholes}. The decoding operation involves distilling a set of qubits from the Hawking radiation that are maximally entangled with specific modes inside the black hole, and that are later emitted as Hawking quanta. In practice, this operation is difficult to implement and is associated to a high degree of computational complexity \cite{harlow2013quantum, yoshida2017efficient}.

The entanglement structure required for the information release by the black hole and subsequent recovery from the radiation is at odds with the conventional entanglement structure needed for a smooth experience for an infalling observer. Some have hoped to resolve this tension by invoking the principle of black hole complementarity \cite{susskind1993strechted}, but it remains at the heart of the well-known firewall paradox \cite{susskind1993strechted, almheiri2013black, almheiri2013apologia} and of the related puzzle that the effective bulk description of the black hole interior requires a much larger Hilbert space than is permitted by the microscopic theory \cite{page1993information}.

Subsequent developments, including the discovery of the ``quantum extremal surface (QES) prescription'' \cite{lewkowycz2013generalised, faulkner2013quantum, dong2016deriving, dong2018entropy}, made clear that information recovery and a smooth semiclassical geometry can only be reconciled by involving radiation degrees of freedom in the description of the black hole interior. The QES prescription enables one to reproduce the Page curve from a semiclassical perspective \cite{almheiri2019entropy, almheiri2020page, penington2020entanglement}, and leads to the insight that, after the Page time, the interior structure of black holes contains an ``island region” which becomes part of the entanglement wedge of the radiation \cite{almheiri2020page}.

While these developments represent an important step towards resolving the black hole information paradox, several challenging questions concerning the microscopic description of the black hole interior still persist. The current consensus is that such a microscopic description involves concepts such as computational complexity \cite{stanford2014complexity, brown2020python, engelhardt2021world, engelhardt2022finding} and non-isometric quantum error-correcting codes \cite{akers2022black}. In these approaches, black holes are modeled as ordinary quantum systems with a finite number of degrees of freedom, and the transmission of information is often described using simple qubit models \cite{hayden2007blackholes, yoshida2017efficient, akers2022black}. This means that their operator algebraic structure differs significantly from the one used in the bulk effective field theory. Crucially, the operator algebras associated to spacetime regions in the effective quantum field theoretic description are based on type III$_1$ von Neumann algebras, while the microscopic, finite-dimensional models are formulated in terms of type I algebras. One can argue that one of the deeper challenges that prevents us from completely resolving the information paradox stems from this mismatch. 

Recent studies of the role of von Neumann algebras within holography have shed some further light on this distinction. Starting with the development of holographic quantum error-correcting schemes adapted to infinite-dimensional von Neumann algebras \cite{Beny:2007ewj, Beny:2008, Kang:2018xqy, Kang:2019dfi, Hollands_2021, Faulkner:2020iou, Faulkner:2020hzi, Furuya:2020tzv, Gesteau:2021jzp}, it has been argued that the emergence of certain type III$_1$ boundary algebras in the $N\to \infty$ limit is tantamount to obtaining a bulk dual \cite{leutheusser2023causal, leutheusser2023emergent}. These type III$_1$ boundary algebras provide a description of operators acting within spacetime bulk subregions, and exhibit properties that signal the emergence of geometric features, like causality, smoothness of the horizon, and local spacetime symmetries \cite{DeBoer:2019kdj,
Furuya:2023fei, Ouseph:2023juq, deBoer:2025rxx}. Moreover, the theoretical framework of von Neumann algebras leads to a precise formulation of the relevant notions of generalized entropy (see e.g. \cite{Chandrasekaran:2022cip,Chandrasekaran:2022eqq}) needed in the study of the QES prescription and the Page curve. 

The aim of this paper is to bridge part of the gap between the description of information release in the underlying microscopic theory and its description in the emergent bulk spacetime. Our goal is to describe this information transfer and its associated complexity in an algebraic framework that also illuminates the role of black hole complementarity. The algebraic model we develop for this purpose is similar to that introduced by a subset of the authors in \cite{vanderHeijden:2024tdk} in the context of type II$_1$ von Neumann algebras. In that work, a type II$_1$ algebraic formulation of the Hayden–Preskill quantum teleportation protocol, as reformulated by Yoshida and Kitaev \cite{yoshida2017efficient}, was presented using algebraic tools such as algebra inclusions, Jones projections \cite{jones1983index, longo1989index} and the canonical shift \cite{Ocneanu1989operator}. In the present (and in upcoming) work, we will extend and generalize the results of \cite{vanderHeijden:2024tdk} and arrive at a description that is applicable to all types of von Neumann algebras, thereby going beyond the simple qubit models often used in the literature.


One of the central algebraic notions that we will exploit is that of a conditional expectation, which in more intuitive terms may be interpreted as coarse-graining over a subsystem. To simplify our discussions and make the physical interpretation more transparent, we restrict attention to conditional expectations with \emph{finite index}. Intuitively, this means that the coarse-graining is taken over a finite-dimensional quantum system. The existence of such a conditional expectation is related to the statement that, even when the black hole Hilbert space is infinite-dimensional, we want to extract a finite amount of information from it.\footnote{We comment on the applicability of our protocol to infinite index inclusions in the Discussion section.} The information puzzle then amounts to describing the mechanism by which this information is transferred, both from a microscopic and from a spacetime perspective, and to explain how information can be recovered from the radiation algebra. 

The notion of an index was originally introduced by Jones for type II$_1$ algebras \cite{jones1983index} and later extended to more general von Neumann algebras by Kosaki \cite{kosaki1998type}. Jones's index theory was reformulated in a particularly useful way by Longo \cite{longo1989index, Longo:1990zp}, who applied it to type III$_1$ algebras and emphasized the role of isometric and non-isometric embeddings. In the present paper, we will follow both Jones's and Longo’s approach and demonstrate that they provide the necessary tools for information recovery. Because all ingredients are defined for general von Neumann algebras, this framework offers a concrete route toward connecting the microscopic description of the release process to its emergent spacetime description.

This being said, in the present paper, we will remain close to the type II$_1$ language that we used in the previous work \cite{vanderHeijden:2024tdk}, since it allows us to focus on the physics questions related to the black hole information puzzle and initially avoid the full mathematical machinery required to describe inclusions of type III$_1$ algebras. In particular, we will not introduce the concepts of weights, modular crossed products and relative entropy. These aspects will be deferred to an upcoming paper \cite{2Tu2025}, in which we also present the definition of the Jones--Kosaki index in terms of Connes's spatial theory and explain the Page curve in this algebraic language. Throughout the paper, we explicitly indicate which statements hold for von Neumann algebras of arbitrary type and which require the properly infinite setting.

We emphasize that the discussion in this paper is deliberately kept at an abstract level. Our primary goal is to develop an operator algebraic description of information transfer between von Neumann algebras, independent of any particular physical setting. At the same time, motivated by the black hole information problem and in order to make this presentation more accessible to a broader audience, we will adopt suggestive notation and terminology throughout, following \cite{yoshida2017efficient}. It is important to stress, however, that the existence of the algebraic structures in question does not by itself imply a corresponding realization in spacetime. Further discussion of this issue, including possible spacetime interpretations and potential obstacles to implementing the algebraic structures required in quantum gravity, is deferred to the Discussion section. We now briefly summarize the main results using terminology familiar from the black hole information problem.

\subsection*{A physicist's summary of results: an algebraic information puzzle}

\begin{figure}[t]
     \centering       

\begin{tikzpicture}[x=0.75pt,y=0.75pt,yscale=-1,xscale=1]

\draw (141.67,125) node  {\includegraphics[width=203.5pt,height=159.61pt]{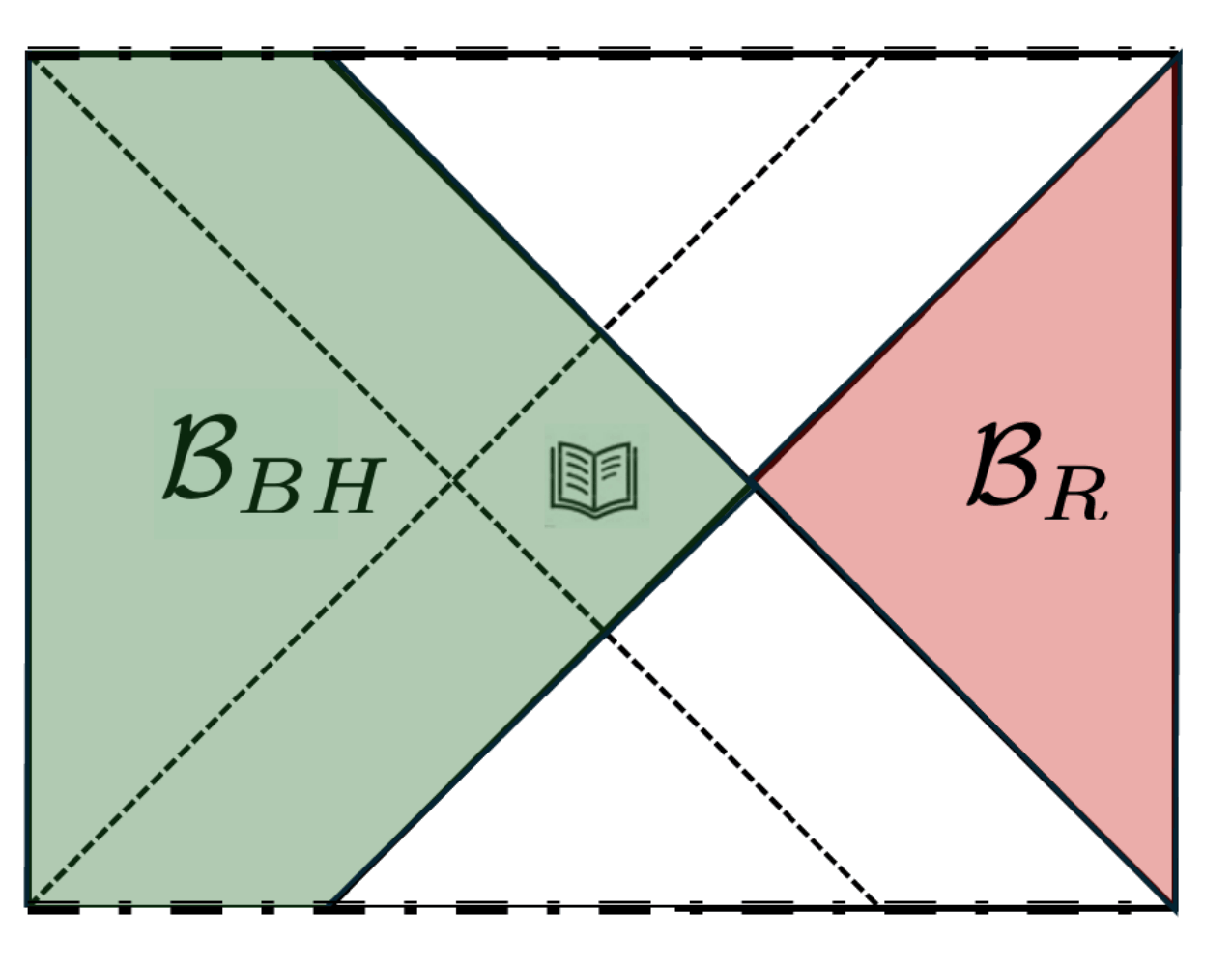}};
\draw (460,125) node  {\includegraphics[width=203.5pt,height=159.61pt]{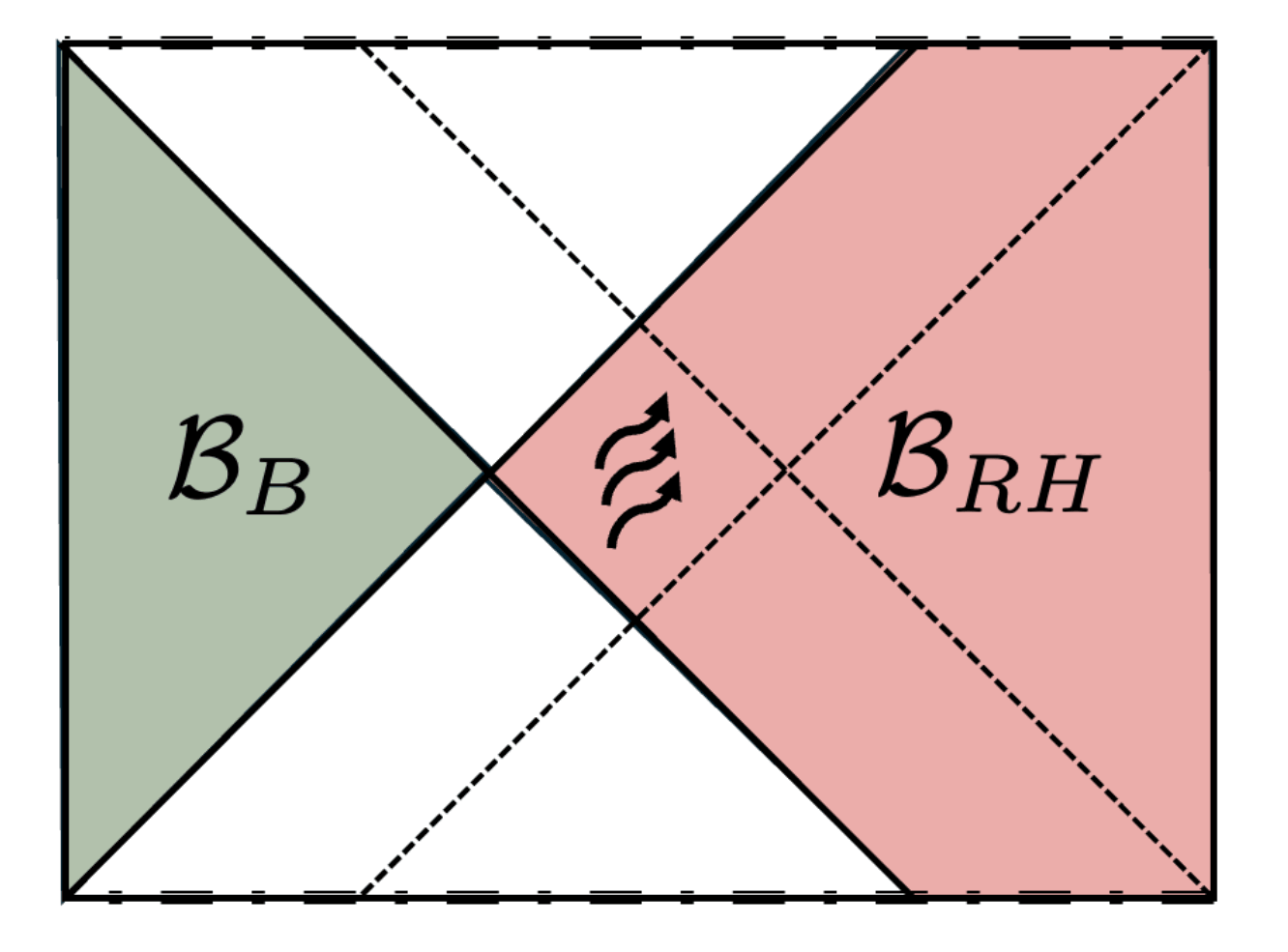}};

\draw (157,95) node [anchor=north west][inner sep=0.75pt]  [font=\small]  {$\ket{0_H}$};
\draw (415,95) node [anchor=north west][inner sep=0.75pt]  [font=\small]  {$\ket{0_I}$};

\end{tikzpicture}
\caption{\small{A pictorial representation of the spacetime algebras before (left) and after (right) the evaporation step. The states $\ket{0_H}$, and $\ket{0_I}$, indicate that the state is in a frozen vacuum near the respective horizons.}}
    \label{Fig:BH-alg-before}
\end{figure}

Our goal is to model a single discrete evaporation step of a black hole that has already passed the Page time. At this stage, the black hole degrees of freedom are sufficiently entangled with the previously emitted Hawking radiation that newly emitted radiation can carry away quantum information from the interior. We will describe this process from two complementary perspectives. The first is a \emph{microscopic} description, where the evaporation process is represented as a unitary evolution of the underlying quantum degrees of freedom. The second is an \emph{effective} description where the same process is reflected by the emission of Hawking particles. Throughout the discussion, we will have in mind the AdS/CFT correspondence, where the microscopic description is provided by the boundary CFT while the effective description is given by bulk effective field theory.\footnote{One of the main advantages of the operator-algebraic framework is that it is formulated independently of this particular setting and may therefore be applicable beyond AdS/CFT.} 

We consider an evaporation step during which a finite amount of quantum information, represented by Alice’s diary, is transferred from the black hole to the radiation. The operators used by Alice to encode the quantum information contained in the diary are denoted by $a$, while the radiation modes eventually received by Bob are denoted by $b$. In the effective description, this process is associated with a small shift in the location of the QES (which will be defined momentarily). We denote the QES before and after the evaporation step by $H$ and $I$, respectively. When the QES is located at $H$, the quantum information contained in Alice’s diary is still localized inside the black hole. The radiation modes $b$ are initially in the vacuum and are maximally entangled with their interior partner modes $\tilde b$, as required by the regularity of the horizon. At this stage, the $b$-modes carry no information about the diary.

As the QES moves from $H$ to $I$, the quantum information is transferred from the black hole to the outgoing radiation. Once the process is complete, the operators $b$ no longer describe vacuum Hawking modes but instead encode the quantum information originally contained in Alice’s diary, although in a complicated form. Simultaneously, the original interior modes $a$ become maximally entangled with their own partner modes $\tilde a$, restoring the local vacuum structure at the new location $I$ of the QES. Thus, from the spacetime perspective, the evaporation step is naturally interpreted as an entanglement swap.

In this setting, we introduce operator algebras as follows (see Figure~\ref{Fig:BH-alg-before} for a pictorial representation\footnote{We comment on potential obstructions to realizing our protocol for algebras associated to spacetime subregions in the Discussion section.}).
The complete algebra on the left of the extremal surface $H$ corresponds to the black hole degrees of freedom and will be denoted by ${\cal B}_{BH}$. The algebra on the right of the QES describes the radiation and is denoted by ${\cal B}_R$. In our model, the fact that $H$ is a quantum extremal surface corresponds algebraically to the statement that the bulk state is \emph{separating} for both algebras ${\cal B}_{BH}$ and ${\cal B}_R$. We will denote the corresponding state by $|\Psi_H\rangle$. It can be thought of as the Hawking state in which the modes $b$ and $\tilde{b}$ are entangled and the information carrying $a$ modes are still inside the black hole. One can think of this state as imposing a vacuum at the QES $H$, i.e. roughly $\ket{\Psi_H}\approx \ket{\Phi}\otimes\ket{0_H}$, where $\ket{\Phi}$ is the microscopic state, and $\ket{0_H}$ is the vacuum near $H$ in the effective description.

A similar description holds after the evaporation step. We then have a separating state $|\Psi_I\rangle$ for two mutually commuting algebras denoted by ${\cal B}_{B}$ and ${\cal B}_{RH}$. The letter $I$ in the state $|\Psi_I\rangle$ indicates that the diary's quantum information has been transferred to the radiation algebra. The state might again be viewed as $\ket{\Psi_I} \approx \ket{\Phi}\otimes\ket{0_I}$.
The $b$-modes were already a part of ${\cal B}_R$ before the evaporation step, and hence afterward the $a$ modes are still contained in ${\cal B}_B$. The difference between the two situations lies in the entanglement structure of the state $|\Psi_I\rangle$ compared to $|\Psi_H\rangle$. While initially in the state $|\Psi_H\rangle$ the quantum information about the diary was contained in ${\cal B}_{BH}$, in the new state $|\Psi_I\rangle$ it has been transferred to the algebra ${\cal B}_{RH}$. 
We will assume that the following inclusions of algebras hold 
\beq
{\cal B}_{B}\subset {\cal B}_{BH}\qquad \mbox{and} \qquad {\cal B}_{R}\subset {\cal B}_{RH}~.
\eeq 
These encode that the black hole wedge shrinks and the radiation wedge grows during an evaporation step.

In addition to the $b$-modes, the algebra ${\cal B}_{RH}$ also contains the partner modes $\tilde{a}$ that are entangled with the operators $a$ in the state $|\Psi_I\rangle$. After the evaporation step the $a$ and $\tilde{a}$ modes are entangled in a unique ``frozen vacuum" and hence no longer carry any quantum information. In other words, during the evaporation process the information has been transferred from the black hole to the radiation via an entanglement swap that takes the state $|\Psi_H\rangle$ to the state $|\Psi_I \rangle$: 
\beq
S:\, |\Psi_H\rangle\  \to \ |\Psi_I\rangle~. 
\eeq

We will provide an explicit description of this entanglement swap and explain how the quantum information initially encoded in the $a$-modes is transferred from the black hole interior to the radiation modes $b$. The main algebraic ingredients in this process are the projectors onto the vacuum states at the two locations, $H$ and $I$. These projectors correspond algebraically to the so-called Jones projections, which we denote by $e_H$ and $e_I$. We will show that these projections can also be characterized in terms of the non-isometric map from the effective algebra and Hilbert space to the underlying microscopic theory, in which the information transfer is described by unitary evolution.

Another issue on which our algebraic model of the evaporation process sheds light is the notion of black hole complementarity. The basic idea of black hole complementarity \cite{susskind1993strechted} is that the horizon-crossing process can be coherently described in a basis adapted to exterior physics or in a basis appropriate to the infalling observer, without contradiction. In our algebraic model, there will likewise be two different bases that give complementary descriptions of where the diary information is encoded. The novelty of our approach is to relate black hole complementarity to computational complexity.

In the effective description, the quantum information contained in the diary remains inside the black hole interior and simply enters the island region. There is no complicated decoding required and the diary just moves through spacetime without encountering any apparent barriers. There is also
another complementary description in which the information is accessible from the outgoing radiation, where it is stored in a complex way. Our goal is to give a physical explanation for why both descriptions are mutually consistent and to describe their relation. We will find that both complementary descriptions are equally valid and are related by a transformation with very high complexity. To be precise, this works in the most simple and elegant way for algebra inclusions of ``depth two''. In this case, the relation between the $a$ and $b$ modes can be expressed as a so-called algebraic Fourier transform first discovered by Ocneanu \cite{ocneanu1991quantum}. As we shall see, the required operation is essentially the same process responsible for the entanglement swap and the shift of the QES.

The organization of the paper is as follows. In section~\ref{sec:Algebraic Model for Black Hole Evaporation}, we give an algebraic description of the black hole evaporation step from the microscopic perspective and introduce the operator algebras for the black hole and radiation degrees of freedom. We then relate the microscopic unitary evaporation step to an effective description, showing how (non-)isometric maps and an entanglement swap implement the dynamics of the black hole during evaporation. In section~\ref{sec:Information Transfer and Algebraic Complexity}, we show that implementing the entanglement swap has a computational complexity determined by the index of the algebra inclusion. In section~\ref{sec: The Island Algebra}, we develop an algebraic notion of an island region in our model, built from the Jones tower and the canonical shift, that contains the information released by the black hole. In section~\ref{sec: Information recovery and black hole complementarity}, we formulate an algebraic version of the information puzzle and show how information recovery is implemented by an algebraic Fourier transform on the island algebra, providing an operator algebraic realization of black hole complementarity in our model.

\section{An Algebraic Model for the Black Hole Evaporation Step}
\label{sec:Algebraic Model for Black Hole Evaporation}

In this section, we present our algebraic model of the black hole evaporation step from the \emph{microscopic} (or boundary) perspective. In this description, there is a priori no need to introduce entangled Hawking pairs, since these spacetime modes are in a unique “frozen” vacuum state and hence do not carry microscopic quantum information. The \emph{effective} (or bulk) description arises from the microscopic one through an isometry that creates these entangled pairs, while conversely the bulk theory is embedded back into the microscopic description in a non-isometric way. In the microscopic picture, the evaporation is represented by a unitary evolution of the black hole and radiation algebras, and we quantify how much information is transferred in an evaporation step and how it is encoded in these algebras.

\subsection{The microscopic black hole and radiation algebras}

The microscopic system is divided into the degrees of freedom associated with the black hole and the radiation. We assume that the radiation has permanently left the black hole and has been placed in a non-gravitational reservoir in a region that is no longer in causal contact with the black hole. We will consider the operator algebra of the radiation system that is required to purify the black hole state. Let us denote the microscopic operator algebra associated with this radiation by ${\cal A}_R$ and the microscopic algebra corresponding to the black hole by ${\cal A}_{BH}$ . We can identify ${\cal A}_R$ with the commutant algebra of ${\cal A}_{BH}$:
\beq 
\mathcal{A}_R=\{ y\in \mathcal{B}(\mathcal{H})\, |\, [y,x]=0~, \, \forall x\in \mathcal{A}_{BH} \}~.
\eeq
We also assume that the von Neumann algebras are \emph{factors}, which means that the center is trivial, i.e.
\begin{equation}
{\cal A}_{BH}\cap {\cal A}_R =\mathbb{C}1~.
\end{equation} 
We expect that our results can be extended to general inclusions with non-trivial center.

Microscopically, the evaporation process is described by unitary evolution and hence preserves the quantum information. 
After a finite evaporation step, some of the information is transferred from the black hole to the radiation. At an algebraic level, this means that the algebra of the radiation system has increased while the black hole algebra has decreased. One can thus expect the following inclusions of algebras
\beq
{\cal A}_B\subset {\cal A}_{BH}~,\qquad  \qquad {\cal A}_R\subset {\cal A}_{RH}~,
\eeq
where we have the commutant identifications
\beq
{\cal A}_R = {\cal A}_{BH}'~, \qquad \qquad {\cal A}_{RH} = {\cal A}_{B}'~.
\eeq
In writing these inclusions we have excluded the possibility that some of the information contained in the radiation system before the evaporation is being returned to the black hole system. In a physical setting in which the black hole is in contact with the radiation, there will generally be an exchange of quantum information in both directions. We will avoid this issue by defining the radiation algebra as being associated with the radiation degrees of freedom that have permanently left the black hole system. By definition, this algebra can only increase, and hence the above inclusion relations will be obeyed. 

We do not put restrictions on the type of von Neumann algebras for the black hole and radiation systems. Our discussion therefore includes the case where the operators acting on the black hole degrees of freedom form a properly infinite von Neumann algebra, thus extending the setting of \cite{vanderHeijden:2024tdk}.

The microscopic quantum state of the system is denoted by 
\begin{equation}
|\Phi\rangle \in \mathcal{H}~,
\end{equation}
where $\mathcal{H}$ is identified with the GNS Hilbert space of the black hole algebra ${\cal A}_{BH}$. Our discussion applies to general microscopic states, provided they are sufficiently entangled to allow the development of the relevant algebraic tools. We assume that before the evaporation step the state $|\Phi\rangle $ is \emph{cyclic} for the algebra ${\cal A}_{BH}$ (or equivalently \emph{separating} for ${\cal A}_{R}$). Cyclicity means that all states in the Hilbert space $\cal H$ can be obtained by acting with operators in ${\cal A}_{BH}$ on $|\Phi\rangle$, while the separating condition ensures that no non-zero operator in ${\cal A}_{R}$ annihilates $|\Phi\rangle$. Similarly, after the evaporation step the state $|\Phi\rangle$ will be assumed to be at least cyclic for ${\cal A}_{RH}$ (or separating for ${\cal A}_B$).

For type III algebras, we can even go a step further and assume that $|\Phi\rangle$ is cyclic {\it and} separating for both ${\cal A}_{BH}$ as well as ${\cal A}_{B}$ and their commutant algebras. It will be useful to discuss this type III case in more detail. We will make use of known results and methods developed in the mathematical literature, in particular by Longo \cite{longo1989index,longo1990index}. We will slightly rewrite and reformulate his methods and mathematical language to make them more adapted to the current setting\footnote{A detailed translation from our conventions to those of Longo will be presented in Appendix \ref{app:math_background}.}.
In the type III situation, and in other properly infinite cases, it is possible for an algebra inclusion to be unitarily equivalent to the original algebras. This means that we could have chosen to represent the evaporation process in the Schr\"{o}dinger picture by a unitary $U \in B(\mathcal{H})$ which acts on the state as $|\Phi\rangle \to U|\Phi\rangle$. In the Heisenberg picture, the evaporation step would then leave the state invariant, but change the algebras associated with the radiation and the black hole according to:
\beq \label{eq:algebraevolution}
{\cal A}_{B}\equiv U^\dagger {\cal A}_{BH}\, U~,  \qquad \qquad  \caRH \equiv  U^\dagger  {\cal A}_{R} \, U~.
\eeq
We emphasize that this description of the inclusions as a unitary equivalence is only valid for certain von Neumann algebras. The unitary $U$ will be a useful ingredient, but its existence will not be essential for the main part of our discussion.

\begin{figure}
    \centering    \includegraphics[width=0.9\linewidth]{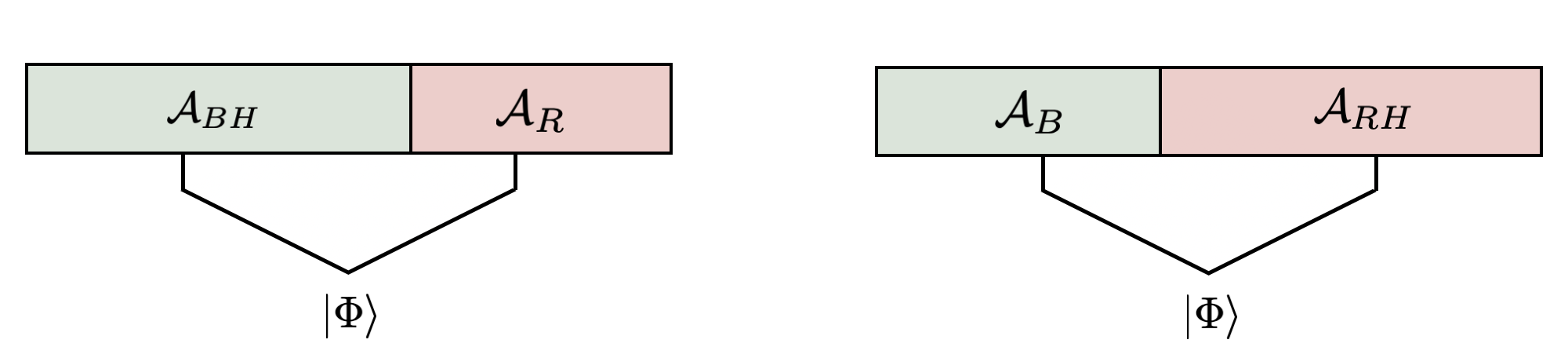}
    \caption{\small{A pictorial representation of the division of the full system into von Neumann subalgebras $\mathcal{A}_{BH},\mathcal{A}_R$ (left) and $\mathcal{A}_{B},\mathcal{A}_{RH}$ (right), corresponding to the situation before and after the discrete evaporation step, respectively. The global state $|\Phi\rangle$ determines the pattern of entanglement in the system. We assume that $|\Phi\rangle$ is at least cyclic for the algebra $\mathcal{A}_{BH}$ (or separating for $\mathcal{A}_R$) before the evaporation step, and cyclic for the algebra $\mathcal{A}_{RH}$ (or separating for $\mathcal{A}_B$) after the evaporation step.}
    }
    \label{fig:algebradivision}
\end{figure}

\subsection{The index as a measure of quantum information}

We are interested in the question of how a certain amount of information that has been thrown into the black hole is transferred to, and eventually retrieved from, the radiation after an evaporation step. To identify the information transferred from the black hole to the radiation, we make use of the concept of {\it conditional expectations}. In the present situation, we can define two separate conditional expectations, one for each algebra inclusion:
\beq 
E_I: {\cal A}_{BH}\to {\cal A}_B~, \qquad  \qquad  E'_H: {\cal A}_{RH} \to {\cal A}_R~.\label{condexp1}
\eeq
Here, the prime of $E'_H$ indicates that this conditional expectation is associated with the commutant of the black hole algebra. A conditional expectation is a linear map from an algebra to a subalgebra obeying various natural properties (see Appendix \ref{app:math_background} for more details). In a simplified setting in which the information is being carried by a finite-dimensional subsystem $I$, the conditional expectation $E_I$ is given by the (normalized) partial trace over this subsystem $I$ before the evaporation step, i.e.
\beq \label{eq:partialtrace}
E_I(x) = \frac{1}{d_I}\mathrm{Tr}_{I}(x)~, \qquad \mbox{with}\qquad x\in {\cal A}_{BH}~,
\eeq
where $d_I$ is the dimension of the subsystem $I$.
Similarly, $E'_H$ would in this case be given by the trace over a finite subsystem $H$ of the emitted Hawking radiation after the evaporation step. The algebraic generalization involves a unital map $E_I$ that leaves the subalgebra invariant, i.e. $E_I(x)=x$ for $x\in {\cal A}_B$. Physically, this coarse-graining operation preserves the information in the black hole that is not transferred to the radiation during this step.

A second important algebraic concept is that of a \emph{relative commutant}. The relative commutant of the black hole algebra before and after evaporation is an algebra we denote as ${\cal A}_I$, consisting of all operators in ${\cal A}_{BH}$ that commute with all the operators in ${\cal A}_B$. Another way to characterize the algebra ${\cal A}_I$ is as the set of operators that are common to ${\cal A}_{BH}$ and ${\cal A}_{RH}$. Hence, this same algebra ${\cal A}_I$ also represents the relative commutant of the radiation algebras ${\cal A}_{RH}$ and ${\cal A}_R$. We can thus define the algebra ${\cal A}_I$ as the \textit{first} relative commutant in two possible ways
\beq
{\cal A}_I\ \equiv\  {\cal A}_{BH}
\cap
 {\cal A}'_B~, 
 \qquad\quad \mbox{or equivalently}\qquad\quad   {\cal A}_I\ \equiv  
 \, {\cal A}_{RH} 
\cap {\cal A}_R'~.
 \label{AI}
\eeq

Intuitively, one might think that all the quantum information that is transferred from the black hole is contained in the algebra ${\cal A}_{I}$. This intuition is not correct in general. Since the algebra ${\cal A}_I$ commutes with ${\cal A}_B$ as well as ${\cal A}_R$, it does not change their embeddings in ${\cal A}_{BH}$ and ${\cal A}_{RH}$. This means that ${\cal A}_I$ generates a ``symmetry" of the common black hole and radiation system. In mathematical terms, this implies that the inclusions ${\cal A}_B\subset {\cal A}_{BH}$ and ${\cal A}_R\subset {\cal A}_{RH}$ are \textit{reducible}: one can exploit the spectrum of operators in ${\cal A}_I$ to project the algebras onto smaller irreducible components.

For \textit{irreducible} inclusions, the first relative commutant $\mathcal{A}_I$ is trivial and consists of (multiples of) the identity. Nevertheless, even in this irreducible case, a non-zero amount of quantum information is transferred from the black hole to the radiation. This already shows that the ``size" of the inclusion is not captured by the first relative commutant alone: the additional degrees of freedom live in higher relative commutants in what is known as the \emph{Jones tower}. In our black hole setting, this is precisely what happens in the irreducible depth two case: $\mathcal{A}_I$ is trivial, yet information is still transferred, and it is encoded in \textit{second} relative commutants that we will introduce later. In this paper, we will discuss both the reducible situation and an irreducible case.

The previous discussion makes clear that the amount of information released by the black hole is not determined by the size of the algebra ${\cal A}_I$. A more precise way to measure the transferred information is to consider the relative size of the algebras ${\cal A}_{BH}$ and ${\cal A}_B$. In mathematical terms, this quantity is known as the \emph{Jones index}. In the finite-dimensional setting, it is simply given by the ratio of the dimensions of the GNS Hilbert spaces associated with the algebra ${\cal A}_{BH}$ and its subalgebra ${\cal A}_{B}$. Jones introduced its infinite-dimensional analogue for type II$_1$ von Neumann factors \cite{jones1983index}, which was subsequently generalized to arbitrary von Neumann algebras by Kosaki \cite{kosaki1991index} and Longo \cite{longo1989index}. The Jones index will be denoted by\footnote{For reducible inclusions there can be more than one conditional expectation. In this work, we assume that we are dealing with the ``minimal" conditional expectation (i.e. with minimal value of the index). For irreducible inclusions, the conditional expectation is always the minimal one.}
\beq
{\rm Ind}(E_I) =\left\lbrack {\cal A}_{BH}:{\cal A}_B\right]~, \qquad   \qquad {\rm Ind}(E'_H) =\left\lbrack {\cal A}_{RH}:{\cal A}_R\right]~. 
\eeq
By thinking of the conditional expectation 
as a partial trace, the index can intuitively be interpreted as the square of the dimension of the Hilbert space associated to the emitted Hawking radiation. More generally, it is given in terms of the quantum (or statistical) dimension $d_I$ via 
\beq
{\rm Ind}(E'_H) = {\rm Ind}(E_I) \ \equiv \  d_I^2~.
\eeq
As we are interested in a physical situation in which a finite amount of information is extracted from the black hole, we assume the index is finite.\footnote{Note that many physically relevant inclusions do not admit a finite-index conditional expectation. We comment on the applicability of our protocol to infinite-index inclusions in the Discussion section.} 

We restrict our discussion to the physical situations in which the released quantum information is carried by a relative commutant. In these special cases, the index is an integer and can be thought of as the number of operators $a_i$ that are added to ${\cal A}_B$ so that all elements of ${\cal A}_{BH}$ can be written as linear combinations of $a_i$ with coefficients in ${\cal A}_B$. The operators $a_i$ can be chosen so that they form an orthonormal basis with respect to the conditional expectation $E_I$: 
\beq 
E_I(a_i^\dagger a_j) =\delta_{ij}~.
\eeq 
We can thus decompose any operator $x\in{\cal A}_{BH}$ as
\beq 
x = \sum_i a_i x_i~,  
\qquad \quad \mbox{where }\qquad x_i= E_I(a^\dagger_i x)~,
\eeq
are elements of ${\cal A}_{B}$. In the mathematics literature, the set of operators $a_i$ is called a \textit{Pimsner--Popa basis} \cite{pimsner1986entropy} (we refer to Appendix \ref{app:math_background} for more details). A similar basis can be defined for the inclusion of radiation algebras. The corresponding set of operators $b_i \in {\cal A}_{RH}$ can be chosen to be orthonormal with respect to the conditional expectation $ E'_H$: 
\beq 
E'_H(b_i^\dagger b_j) =\delta_{ij}~,
\eeq 
and can thus be used to decompose any operator $y\in {\cal A}_{RH}$ as 
\beq  y = \sum_i b_i y_i~, \  
\qquad 
\mbox{where }\qquad y_i = E'_H(b^\dagger_i y)~, \\[-3mm]
\eeq
are elements of ${\cal A}_{R}$.

Our goal is to explain how information initially carried by the operators $a_i \in{\cal A}_{BH} $ is transferred to the operators $b_i\in{\cal A}_{RH} $ by the evaporation step. The operators in a Pimsner--Popa basis do not, in general, form an algebra: multiplying two basis elements does not necessarily produce a linear combination of these basis elements. To circumvent this issue and make our discussion physically more intuitive, we will assume that the operators $a_i$ and $b_i$ are contained in a relative commutant. The simplest situation occurs when the algebra inclusion is fully reducible and the operators $a_i$ and $b_i$ are all contained in the \emph{first} relative commutant ${\cal A}_I$. We will also study the information transfer for an irreducible inclusion for which the operators $a_i$ and $b_i$ are contained in the \emph{second} relative commutant. This terminology and its physical interpretation will be explained further below. 

\subsection{Spacetime vacuum states and projectors}
\label{Spacetime vacuum states and projectors}

An important question is how the information transfer from the black hole to the radiation can be explained in terms of the Hawking process. To address this question, we introduce an additional set of algebras associated with the effective description of the evaporation process.

In the microscopic theory, the evaporation process is represented by a map that acts on the black hole and radiation algebras while leaving the underlying quantum state $|\Phi\rangle$ unchanged,
\beq 
 \bigl(\, \mathcal{A}_{BH} \times \mathcal{A}_R,\  |\Phi\rangle\,  \bigr)\qquad  \longrightarrow \qquad   \bigl(\, \mathcal{A}_B \times \mathcal{A}_{RH},\  |\Phi\rangle\, \bigr)~. 
\eeq 
In the spacetime description, the situation is analogous but with an important difference: in addition to the algebras, the quantum state also changes during the evaporation step,
\beq 
 \bigl(\, \mathcal{B}_{BH} \times \mathcal{B}_R,\  |\Psi_H\rangle\,  \bigr)\qquad  \longrightarrow  \qquad   \bigl(\, \mathcal{B}_B \times \mathcal{B}_{RH},\  |\Psi_I\rangle\, \bigr)~.
\eeq 
As we will explain, this change of state reflects a shift of the minimal quantum extremal surface from position $H$ to position $I$.

Our description of the effective algebra involves particular (non-)isometric maps associated to the conditional expectations $E_I$ and $E'_H$. We denote the isometric maps by $V_I$ and $V_H$. The Hermitian conjugates $V^\dagger_I$ and $V_H^\dagger$ describe their non-isometric counterparts. 
The physical origin of these (non-)isometric maps can be understood from the nature of the vacuum state near the horizon.

According to Hawking's argument, the evaporation process enlarges not only the radiation algebra, but also the black hole algebra, since after the emission it must include the interior partners of the emitted radiation. To describe this process, we make use of the isometry $V_H$ that maps the microscopic global state $|\Phi\rangle$ to a new state
\begin{equation}    \left|\Psi_H\right\rangle=V_H|\Phi\rangle~,\label{PsiH}
\end{equation}
which includes extra ancillary degrees of freedom that we interpret as entangled Hawking pairs across the black hole horizon. Intuitively, one can think of the isometry $V_H$ as adding the radiation modes $b_i$ as well as their Hawking partners $\tilde{b}_i$ in the ``frozen vacuum" state at the black hole horizon (see Figure \ref{fig:V_H}). The definition of the isometry $V_H$ and its conjugate $V_H^\dagger$ is directly related to that of the conditional expectation $E'_H$ \cite{Hollands_2021,longo1990index}. That is, the latter can be expressed through the relation 
\beq
\label{EH'}
E'_H(y) = V_H^\dagger y V^{}_H~, \qquad  y \in {\cal A}_{RH}~.
\eeq
It can be straightforwardly shown that this map satisfies the properties of a conditional expectation (see e.g. Appendix \ref{app:math_background}). In particular, it takes operators from the radiation algebra ${\cal A}_{RH}$ and maps them onto an operator in the subalgebra ${\cal A}_{R}$, while leaving operators that were already in ${\cal A}_R$ invariant.

\begin{figure}[t]

\centering
\includegraphics[width=0.55\linewidth]{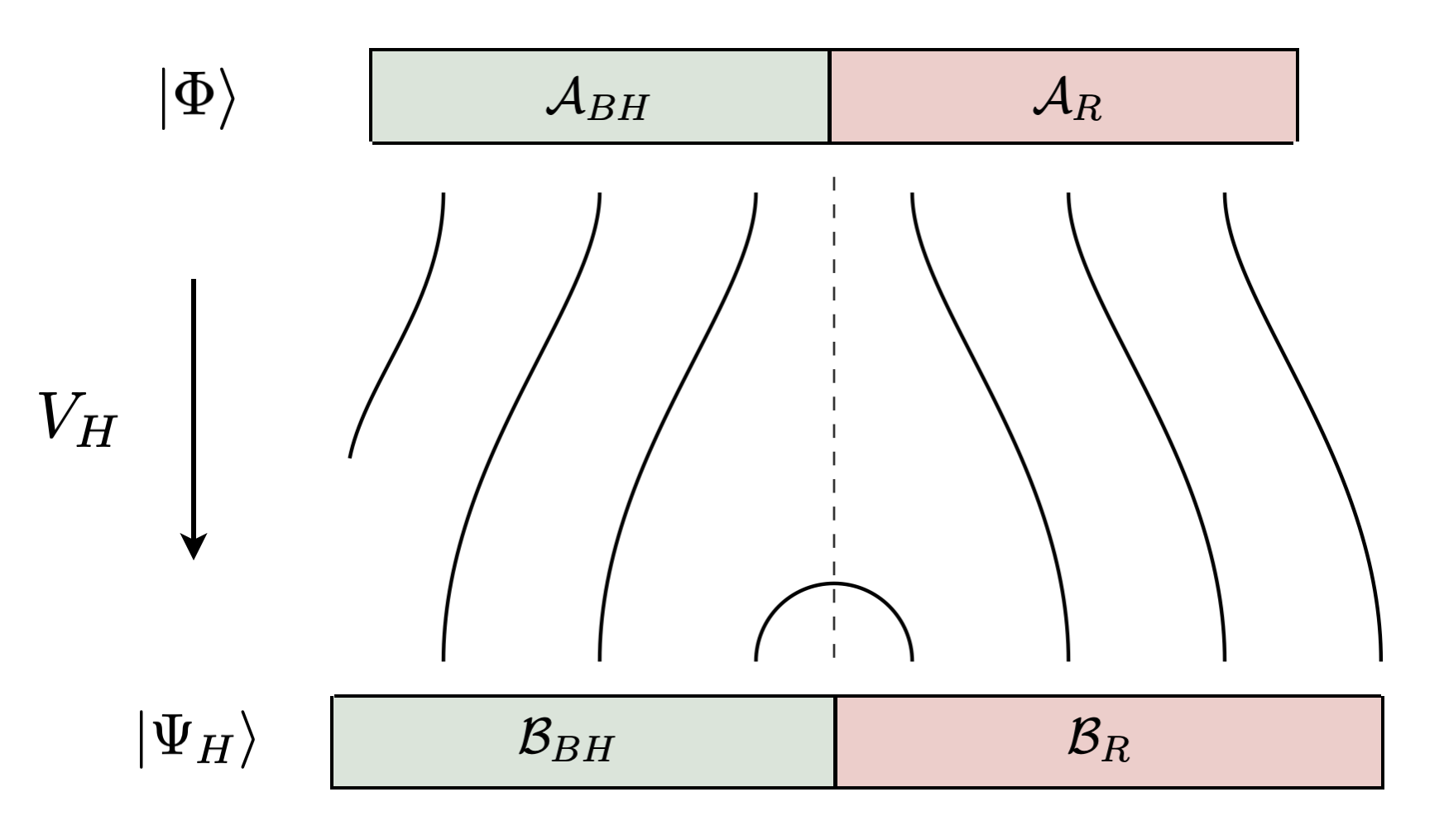}
        \caption{\small{The isometry $V_H$. It maps the state $|\Phi\rangle$ to $|\Psi_H\rangle$, thereby adding an entangled Hawking pair to the system. The lines represent the different degrees of freedom, grouped according to the subalgebras acting on them. The striped line denotes the decomposition into subalgebras and defines the algebraic analogue of a QES at position $H$.} 
   }
    \label{fig:V_H}
\end{figure}

The state $|\Psi_H\rangle$ describes the quantum state before the information is released from the black hole. To describe the state after the information is released to the radiation, we need another isometry that is defined in a similar way but now for the inclusion of the black hole algebras. 
Similarly, for the radiation we can introduce an isometry $V_I$ for the black hole algebra by writing
\beq \label{EI}
E_I(x) = V_I^\dagger x V^{}_I~, 
\qquad  x\in {\cal A}_{BH}~.
\eeq
By acting with this isometry $V_I$ on the underlying microscopic state $|\Phi\rangle$ we create a different state $|\Psi_I\rangle$ given by 
\beq
|\Psi_I\rangle = V_I |\Phi\rangle ~.
\label{PsiI}
\eeq 
 In this new state $|\Psi_I\rangle$, the modes $a_i$ that initially carried the information about Alice's diary are put in a maximally entangled state with partner modes $\tilde{a}_i$ contained in the radiation algebra. As we will explain in detail below, 
this also means that in the state $|\Psi_I\rangle$ the information content of Alice's diary has been transferred to the radiation system. The black hole information puzzle can thus be stated as the problem of transferring the original Hawking state $|\Psi_H\rangle$ into the new state $|\Psi_I\rangle$.

The isometries satisfy
\beq
\label{isomdef}
V_I^{\dagger}V_I\ =\ {\mathbf 1}~,\quad \qquad\mbox{and}\quad  \qquad  V_H^{\dagger}V_H\ =\ {\mathbf 1}~,
\eeq
which ensures that the conditional expectations are unital maps. Crucially, the conjugate maps $V_H^\dagger$ and $V_I^\dagger$ are non-isometric. Together with their corresponding isometries, they give rise to the so-called \emph{Jones projections} $e_H$ and $e_I$ associated with the above inclusions:
\beq \label{eq:Jonesprojections}
V^{}_H V_H^\dagger = e^{}_H~, \quad \qquad\mbox{and}\quad  \qquad  V^{}_I V_I^\dagger = e^{}_I~.
\eeq
In the effective description, these projectors naturally acquire a physical interpretation as projecting on the entangled vacuum states $|\Psi_H\rangle$ and $|\Psi_I\rangle$ of the newly created Hawking modes, respectively, at different stages of the evaporation process. Indeed, one easily verifies that these states obey 
\begin{equation}
    \ e_H |\Psi_H\rangle = |\Psi_H\rangle~, \qquad \text{and} \qquad \ e_I |\Psi_I\rangle = |\Psi_I\rangle~. \label{constraints}
\end{equation}

We will explain in the next section that these projectors are conceptually reminiscent of the final state projection proposed by Horowitz and Maldacena \cite{horowitz2004blackhole}.
The isometries $V_I$ and $V_H$ and their conjugates $V^\dagger_I$ and $V^\dagger_H$ represent intertwining operators between the microscopic and effective description. The effective algebras are viewed as extensions of the microscopic algebras that include the projections $e_H$ and $e_I$, both of which are not part of the microscopic algebras.

So far, all ingredients in our algebraic model are defined in a meaningful way for all types of von Neumann algebras, provided we have a finite index conditional expectation.\footnote{We should note that for type III algebras it is in principle possible to represent the effective and microscopic theory in the same Hilbert space. In this alternative approach, one treats the projectors and (non-)isometries to be part of the original microscopic algebra.}

\subsection{Non-isometric encoding}

In this subsection, we describe the map from the effective to the microscopic description. This map is implemented by non-isometric encoding from the effective algebra into the microscopic algebra, together with the corresponding action on states. It must satisfy the condition that expectation values of effective operators in the states $|\Psi_H\rangle$ and $|\Psi_I\rangle$ coincide with those of the corresponding microscopic operators in the underlying state $|\Phi\rangle$. To construct the effective-to-microscopic map (i.e. the analogue of the holographic bulk-to-boundary map), we use the isometries $V_H$ and $V_I$ and their non-isometric conjugates $V_H^\dagger$ and $V_I^\dagger$. Our construction is closely related to the non-isometric quantum error-correcting codes proposed in recent discussions of the black hole information paradox \cite{akers2022black}, with the discussion here adapted to infinite-dimensional systems.

We define the (non)-isometric maps by making use of the (minimal) conditional expectations associated to the inclusions of the effective algebras before and after evaporation. These will be denoted by
\begin{equation}
    \begin{aligned}
& E_H: \cbBH \to \cbB~, \qquad \mbox{and}\qquad 
& E_I^\prime: \cbRH\to \cbR~. \label{condexp2}
\end{aligned}
\end{equation}
These are the analogues of $E_I, E_{H}^{\prime}$ in \eqref{condexp1}, but now defined on the spacetime algebras.
We assume that these conditional expectations have the same index $d_I$ and can be expressed in a similar way as in (\ref{EH'}) and (\ref{EI}) in terms of the (non-)isometries $V_H$, $V_I$ and their conjugates. The corresponding Jones projections are denoted by $e_H$ and $e_I$ and are expressed in terms of the (non-)isometries via the relations (\ref{eq:Jonesprojections}). At this point, these projections, conditional expectations and isometries are defined from the effective perspective and are not yet related to the microscopic description. Our goal is to derive these relations. 

Before the evaporation step the bulk-to-boundary map is given by
\beq
    V^\dagger_H: \   \bigl(\, \cbBH \times \cbR,\  |\Psi_H\rangle\,  \bigr)  \quad  \longrightarrow \quad   \bigl(\, \caBH \times \caR,\  |\Phi\rangle\, \bigr)~.
\label{bulk-to-boundary}
\eeq
The action of $V^\dagger_H$ on the black hole and radiation algebras can be written out explicitly and leads to the following identifications of the microscopic algebras 
\beq
   \caBH = V^\dagger_H   \,\cbBH  V_H~,       \qquad\mbox{and} \qquad            \caR = V^\dagger_H \,\cbR V_H~. \label{b-to-b-before}
\eeq
We also find that the microscopic quantum state is recovered from the bulk state by 
\beq
|\Phi\rangle = V^\dagger_H |\Psi_H\rangle~. 
\eeq
Note that this last equation is equivalent to equation (\ref{PsiH}) together with the constraint in (\ref{isomdef}).

To verify that (\ref{bulk-to-boundary}) satisfies the requirements of a bulk-to-boundary map, let us compare the expectation values of an arbitrary operator ${ A}\in {\cal A}_{BH}\times{\cal A}_R$ in the microscopic algebra and of its corresponding operator $B \in {\cal B}_{BH}\times {\cal B}_R$ in the effective algebra. It is straightforward to show that these expectation values agree 
\begin{equation}
\label{proof}
\langle \Phi | A | \Phi \rangle
= \langle \Psi_H | V_H \bigl(V_H^\dagger B V_H\bigr) V_H^\dagger | \Psi_H \rangle
= \langle \Psi_H | e_H B e_H | \Psi_H \rangle
= \langle \Psi_H | B | \Psi_H \rangle ~,
\end{equation}
where in the second and third step we made use of the identities (\ref{eq:Jonesprojections}) and (\ref{constraints}). 

The bulk-to-boundary map after the evaporation step proceeds completely analogously, except now we make use of the non-isometry $V_I^\dagger$ to map the spacetime algebras and quantum state to their corresponding quantities in the microscopic description 
\beq
    V^\dagger_I: \   \bigl(\, \cbB \times \cbRH,\  |\Psi_I\rangle\,  \bigr)  \quad  \longrightarrow \quad   \bigl(\, \caB \times \caRH,\  |\Phi\rangle\, \bigr)~.
\label{bulk-to-boundary2}
\eeq
This leads to analogous relations to the ones given in (\ref{b-to-b-before}) but now applied to the microscopic and effective algebras after evaporation 
\beq
   \caB = V^\dagger_I   \,\cbB  V_I~,       \qquad\mbox{and} \qquad            \caRH = V^\dagger_I \,\cbRH V_I~.
\eeq
The non-isometry $V^\dagger_I$ takes the spacetime quantum state to the same underlying microscopic quantum state
\beq
|\Phi\rangle =  V^\dagger_I |\Psi_I\rangle~. 
\eeq
By following the same steps as in (\ref{proof}) one shows that the combined action of the non-isometric map $V^\dagger_I$ on the state and the algebra after evaporation has again the required property to represent a proper bulk-to-boundary map in the sense that it ensures that the expectation values in the microscopic theory agree with those in the effective description. 

\begin{figure}[t!]
    \centering
    \includegraphics[width=0.55\linewidth]{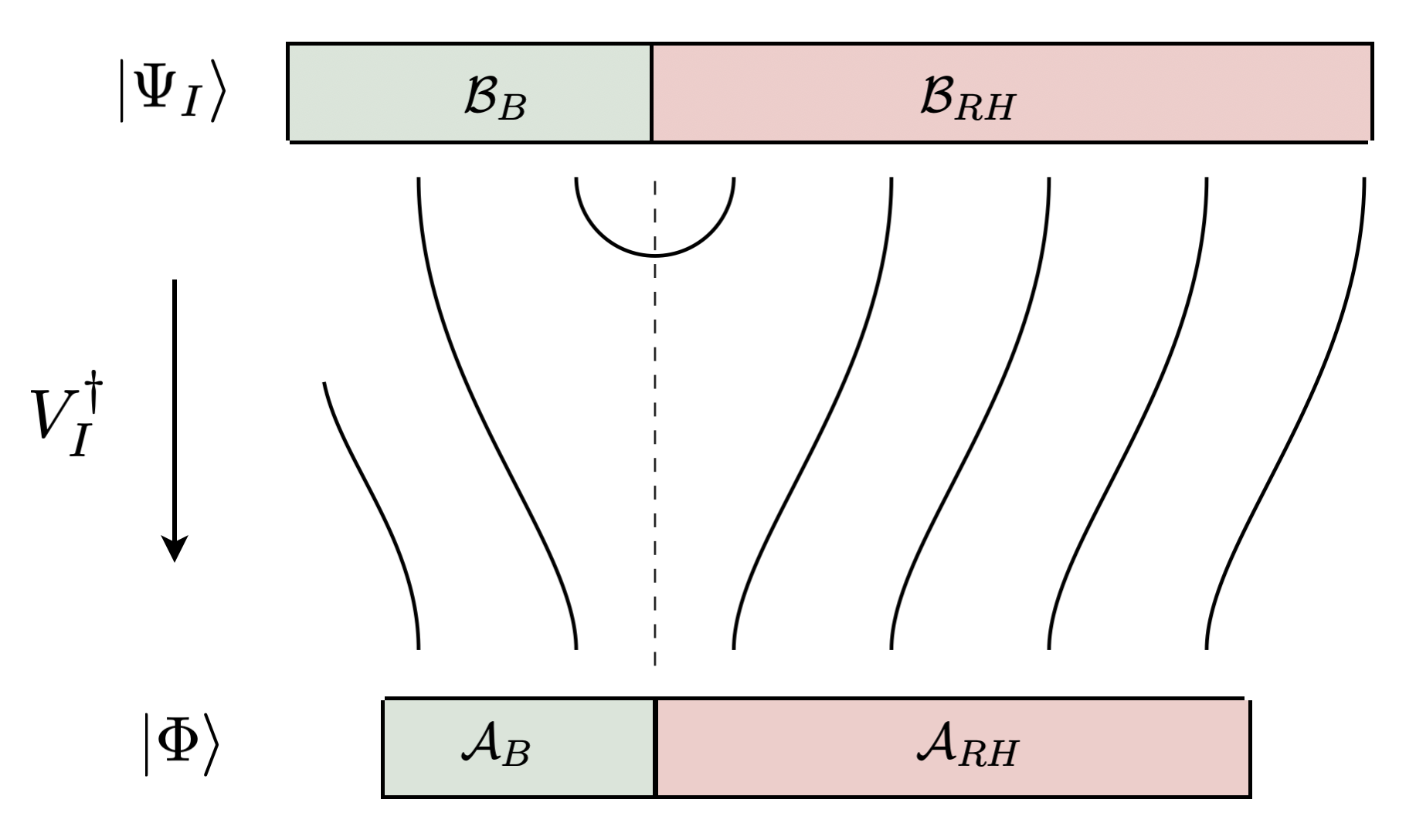}
    \caption{\small{The adjoint map $V_I^{\dagger}$. It maps the state $|\Psi_I\rangle$ to $|\Phi\rangle$, thereby removing an entangled Hawking pair from the system. The lines represent the different degrees of freedom, grouped according to the subalgebras acting on them. The striped line denotes the decomposition into subalgebras and defines the algebraic analogue of a QES at position $I$.}
    }
    \label{fig:V_I}
\end{figure}

An important feature of the evaporation step is that the ``location'' of the QES is different before and after \cite{engelhardt2015quantum,almheiri2019entropy}. In our model, the location of the QES is determined algebraically by requiring that the effective quantum state is cyclic and separating for the black hole and radiation algebras. Recall that  for the type III case the microscopic state $|\Phi\rangle$ can be chosen to be cyclic and separating both before as well as after the evaporation step. It follows that the state $|\Psi_H\rangle$ is cyclic and separating for ${\cal B}_{BH}\times{\cal B}_R$, while $|\Psi_I\rangle$ is cyclic and separating for ${\cal B}_{B}\times{\cal B}_{RH}$. The converse statements are not true.
This follows from (\ref{constraints}), since the states $|\Psi_H\rangle$ and $|\Psi_I\rangle$ 
are annihilated by $(e_H-1)$ and $(e_I-1)$ respectively. The Jones projector $e_H$ is an element of the radiation algebra $\cbRH$, since this operator commutes with the black hole algebra $\cbB$. Similarly, $e_I$ is contained in $\cbBH$ since it commutes with $\cal{B}_\mathrm{R}$. We conclude that the state $|\Psi_H\rangle$ is not separating for ${\cal B}_{RH}$, since its GNS Hilbert space contains null states, and hence it is not cyclic for ${\cal B}_B$. Similarly, the state $|\Psi_I\rangle$ is not separating for ${\cal B}_{BH}$ and not cyclic for ${\cal B}_R$.

\subsection{Effective decomposition of the microscopic evolution}
\label{Bulk decomposition of the microscopic evolution}

For the type III (or properly infinite) case, the microscopic algebras before and after evaporation are related by the unitary $U$ via (\ref{eq:algebraevolution}). In the effective description, the evaporation process is represented in an analogous way: one finds that the effective (spacetime) algebras are mapped onto each other by a unitary $\overline{U}$ that is related to, but distinct from, the microscopic operator $U$:
\beq
  \cbB = \overline{U}^{\dagger} \cbBH \overline{U}~, \qquad\quad \mbox{and} \qquad \quad 
\cbRH = \overline{U}^{\dagger}\cbR \overline{U}~.
\eeq

\begin{figure}[ht]
    \centering
    \includegraphics[width=0.5\linewidth]{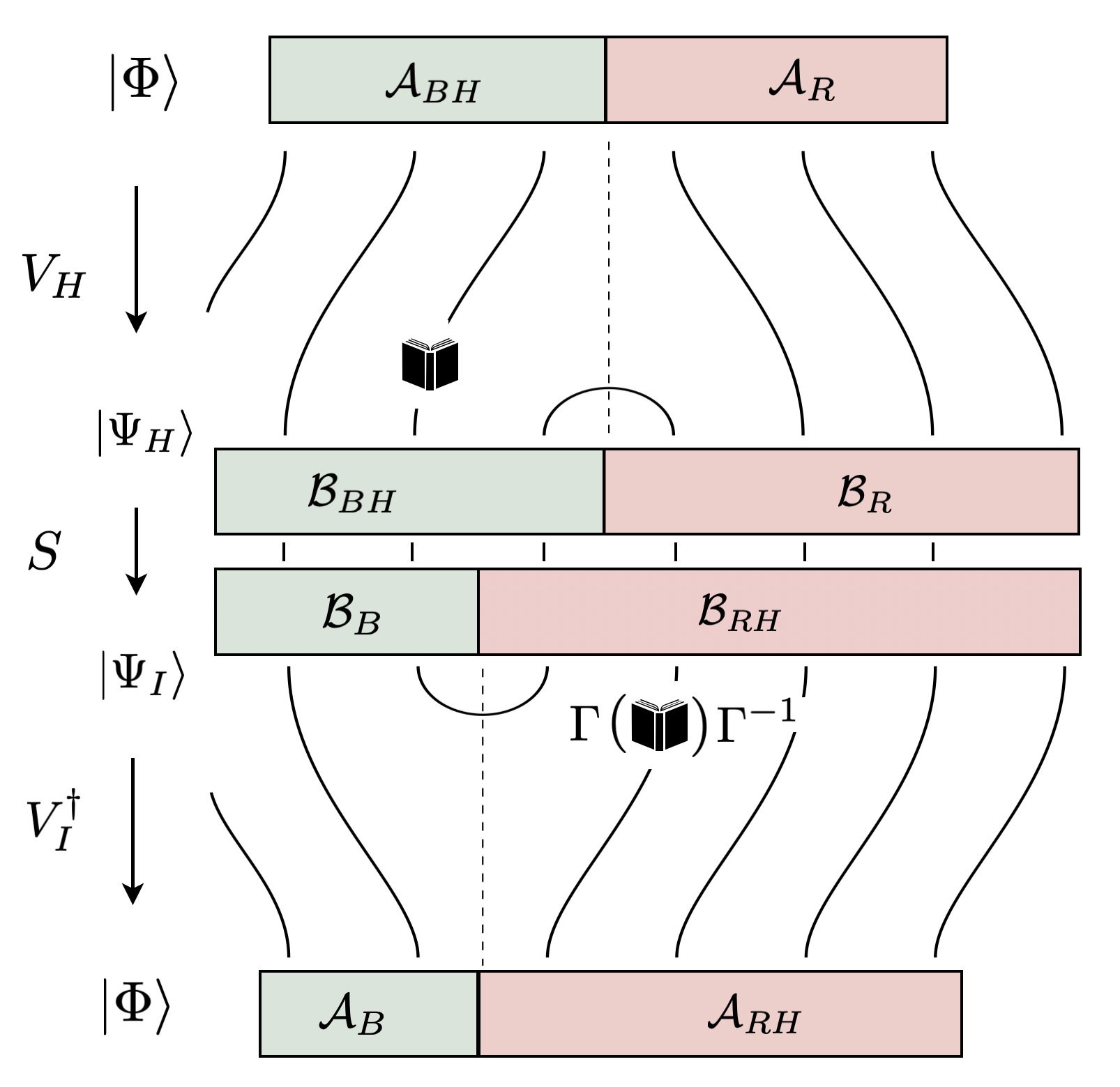}
    \caption{\small{Information transfer. The evaporation step consists of three steps: an isometry $V_H$, the entanglement swap $S$ and an application of the map $V_I^{\dagger}$. The  entanglement structure in the states $|\Psi_H\rangle$ and $|\Psi_I\rangle$ can be exploited to transfer operators from the interior algebra $\mathcal{A}_{BH}$ to the radiation algebra $\mathcal{A}_{RH}$. We have depicted the \emph{reducible} situation, where the diary information is contained in the first relative commutant $\mathcal{A}_H$. In that case, the corresponding radiation operators are simply related to the diary operators by an application of the canonical shift, i.e. $\hat{a}_i=\Gamma a_i\Gamma^{-1}$. At the same time, the entanglement swap amounts to a shift of the QES from the position $H$ to $I$.}
    }
\label{fig:informationtransfer}
\end{figure}

The unitary $\overline{U}$ plays a central role in Longo's approach to the theory of inclusions of type III algebras. Rather than relying directly on its definition in terms of modular conjugation $J^\Phi_{BH}$ of the microscopic theory via the identity $\overline{U}=J_{BH}^{\Phi}U^\dagger J_{BH}^{\Phi}$ (see Appendix \ref{app:math_background}), it is more convenient for our purposes to characterize the relationship between $\overline{U}$ and $U$ using the isometries $V_I$ and $V_H$. These operators implement the intertwiners between the microscopic and spacetime descriptions before and after an evaporation step. They imply that the spacetime evolution step $\overline{U}$ and the microscopic evolution operator $U$ obey the intertwining relation
\beq
    \overline{U} V_I^\dagger = V_H^\dagger U~. \label{Ubar}
\eeq
This relation ensures that the effective description of the evaporation process is compatible with the microscopic evolution step. In particular, it is equivalent to the statement\footnote{One has to remember in the order of applying these operators, that we are thinking in the Heisenberg picture, and thus applying these operators on the algebra, $U^\dagger a U = V_I \bar{U}^\dagger V^\dagger_H a V_H \bar{U}V^\dagger_I$, and in the same way, the relation \eqref{Ubar} states that applying the spacetime evaporation step and then mapping to the microscopic system is the same as first mapping down to the microscopic system and then applying the microscopic evaporation step.}
$U = V_H \overline{U} V_I^\dagger$, i.e.~that the microscopic time step $U$ can be realized by mapping to the effective description via $V_H$, applying the evaporation step $\overline{U}$, and then projecting back to the microscopic description via $V_I^\dagger$.

Therefore, in the effective description, the microscopic evaporation step generated by the unitary $U$ is decomposed into the following three steps:
\begin{enumerate}
    \item[1)] \textit{Micro-to-effective mapping}: An isometry $V_H$ representing the Hawking process, which maps the microscopic Hilbert space $\mathcal{H}$ and algebras to the effective description, and enlarges both the black hole and radiation algebras by adding additional entangled degrees of freedom.

    \item[2)] \textit{Information release}: An entanglement swap $S$ that maps the state $|\Psi_H\rangle$, representing the vacuum state at $H$, onto the new vacuum state $|\Psi_I\rangle$ at the location $I$, together with the spacetime evolution step $\overline{U}$ that shifts the location of the QES.

    \item[3)] \textit{Effective-to-micro mapping}: A non-isometric map $V_I^\dagger$ that removes the auxiliary degrees of freedom introduced in the effective description and returns the system to its underlying microscopic description, i.e.~the microscopic Hilbert space $\mathcal{H}$.
\end{enumerate}
The equivalence between the microscopic unitary $U$ and the composition of these three steps is summarized in the diagram below:

\begin{equation}
    \begin{array}{cccc}

    \text{Effective:}\quad  &\cbBH\times \cbR  &  \xlongrightarrow[\phantom{text to stretch the arrow}]{\text{\normalsize$\overline{U}$}} & \ \cbB\times \mathcal{B}_{RH} \\
   \qquad  &|\Psi_H\rangle  &  \xlongrightarrow[\phantom{text to stretch the arrow}]{{\text{\normalsize$S$}}} & \ |\Psi_I\rangle \\
    \qquad & 

\rotatebox{90}{$\xrightarrow{\phantom{long}{\rotatebox{270}{$V_H$}}\phantom{long}}$} \qquad 
 &  & \rotatebox{90}{$\xleftarrow{\phantom{long}{\rotatebox{270}{$V_I^\dagger$}}\phantom{long}}$}  \qquad   \\
   \text{Microscopic:} \quad & |\Phi \rangle &  & |\Phi \rangle \\
   \qquad &  \caBH\times \caR  &  \xlongrightarrow[\phantom{text to stretch the arrow}]{\text{\normalsize${U}$}} & \ \caB\times \caRH \\
   
    \end{array}
    \label{SummaryDiagram}
\end{equation}

In the next section, we describe the swap operator $S$ in detail and show that it can be implemented either by a suitably rescaled Jones projection or by an algorithm akin to Grover-search. The Jones projection implementation amounts to applying the projector $e_I$ and is represented diagrammatically in Figure~\ref{fig:informationtransfer}. This representation makes clear that the entanglement swap can also be viewed as a quantum teleportation protocol, a perspective we will elaborate on below.

\section{Algebraic Complexity of the Entanglement Swap}
\label{sec:Information Transfer and Algebraic Complexity}

In the effective description, the information transfer during the evaporation step is accomplished by an entanglement swap that changes the structure of the vacuum state, leading to a shift of the QES. In this section, we show that this entanglement swap admits two algebraic realizations of existing qubit protocols: a probabilistic projector-based implementation and a deterministic Grover-type implementation. Our main result is that the Jones index controls the success probability in the former and the computational complexity in the latter.

\subsection{A probabilistic entanglement swap }

In our model, information is released to the radiation through an entanglement swap, represented by an operator $S$ 
\beq 
|\Psi_{H}\rangle \overset{S}{\longrightarrow}  |\Psi_{I}\rangle~.
\eeq
which maps the state $|\Psi_H\rangle$ onto the state $|\Psi_I\rangle$. Two prescriptions can be given for $S$, analogous to the two implementations of the Hayden--Preskill protocol according to \cite{yoshida2017efficient}. Both protocols make use of the Jones projections $e_I$
and $e_H$. The first protocol is probabilistic, while the second version is deterministic but highly complex.

It is a famous result due to Jones that the projectors $e_I$ and $e_H$ satisfy the \emph{Temperley--Lieb algebra} relations \cite{jones1983index}:
\beq
e^{}_H e^{}_I e^{}_H = \frac{1}{ d_I^2}  e^{}_H~, \qquad e^{}_I e^{}_H e^{}_I = \frac{1}{d_I^2} e^{}_I~,
\label{TL2}
\eeq 
where $d_I=\sqrt{\Ind\, E_I}$ is the square root of the Jones index. For our purpose, it will be more convenient to reformulate the Temperley--Lieb algebra as a set of identities for the isometries $V_H$ and $V_I$, which are related to the Jones projections via (\ref{eq:Jonesprojections}). It is easily verified that the identities \cite{Longo:1990zp}
\beq 
\label{isom-TL}
V_I^\dagger V_H = \frac{1}{d_I} {\mathbf 1}~, \qquad 
\qquad\qquad  V_H^\dagger V_I = \frac{1}{d_I} {\mathbf 1}~,
\eeq 
imply and thus are equivalent to the Temperley--Lieb algebra relations. 

The equations (\ref{isom-TL}) will be used to describe both the probabilistic implementation and the Grover-search implementation of the entanglement swap. We find that the probabilistic nature of the former and the computational complexity of the latter are closely related. Both are determined by the value of the inner product between the states $|\Psi_H\rangle$ and $|\Psi_I\rangle$. Using the identities (\ref{isom-TL}) one finds that 
\beq
\langle\Psi_I|\Psi_H\rangle = \langle\Phi|V_I^\dagger V_H |\Phi\rangle = \frac{1}{d_I}~.
\label{overlap}
\eeq 
We are interested in a situation in which the Jones index is large, so that this overlap of the states is small. We find it convenient to re-express the right-hand side in terms of a small angle $\theta$ as
\begin{equation}
\sin{\frac{1}{2}\theta} 
\equiv \frac{1}{d_I}~.
\label{thetadef}
\end{equation}

The simplest way to implement the entanglement swap is by making use of a single Jones projection. In this procedure, one maps $|\Psi_{H}\rangle$ to $|\Psi_I\rangle$ by acting with the projector $e_I$. This algorithm is probabilistic since it does not preserve the norm of the state. 
The probability of successfully projecting onto the desired state is determined by the overlap between the initial and final states. From the result (\ref{overlap}) we thus deduce 
that the probability of success is $1/d_I^2$, precisely the inverse of the index. To obtain a normalized state, the entanglement swap operator $S_I$ is given by the Jones projection $e_I$ rescaled by the factor $d_I$. One easily verifies that
\beq
\label{SI}
S_I: |\Psi_H\rangle \to |\Psi_I\rangle~,  \qquad \mbox{with} \qquad S_I \equiv d_I e_I~. 
\eeq 
Since the projector is part of the extended algebra $\cbBH$ in the effective description, we interpret this process as intrinsic to the black hole dynamics rather than as an action performed by an external, powerful observer. It involves post-selection onto the correct maximally entangled state and effectively implements a ``perfect'' quantum teleportation protocol that requires no additional correcting unitary to recover the state. This post-selection is reminiscent of the ``final state projection'' of \cite{horowitz2004blackhole} at the black hole singularity, but in our model it occurs after a single evaporation step rather than only at the end of the evaporation process.

The probabilistic algorithm is not unitary, but nevertheless it does preserve all the quantum information that is contained in the original state. Indeed, there exists an ``inverse" algorithm that maps the image of $e_I$ back to that of the projector $e_H$. This inverse algorithm consists of the map
\beq
\label{SH}
S_H: |\Psi_I\rangle \to |\Psi_H\rangle~,  \qquad \mbox{with} \qquad S_H \equiv d_I e_H~. 
\eeq 
The first of the Temperley--Lieb relations (\ref{TL2}) can be interpreted as the condition that the composition $S_H S_I$ is equal to the identity when restricted to the image of $e_H$. Similarly, the second relation in (\ref{TL2}) implies that $S_I S_H$ gives the identity on the image of the Jones projector $e_I$.

\subsection{A deterministic entanglement swap}

In the second procedure, a Grover-search algorithm is applied to amplify the probability of obtaining $|\Psi_I\rangle$ without performing any projection. However, this comes at a cost: the low probability of success in the probabilistic procedure translates into a high computational complexity for implementing the deterministic algorithm.

The search algorithm admits a well-known simple geometric interpretation, illustrated in Figure~\ref{Fig:GroverS}. Consider the plane corresponding to the subspace spanned by the states $|\Psi_I\rangle$ and $|\Psi_H\rangle$, where the vertical axis corresponds to $|\Psi_I\rangle$ and the horizontal axis corresponds to its orthogonal complement $|\Psi_I\rangle_{\perp}$ within the subspace. The initial state $|\Psi_H\rangle$ lies at a small angle equal to $\theta/2$ from $|\Psi_I\rangle_{\perp}$ defined in (\ref{thetadef}). 

Grover's algorithm consists of a unitary sequence that progressively rotates $|\Psi_H\rangle$ within this plane toward $|\Psi_I\rangle$. Each iteration is implemented as a composition of two reflections across the axes defined by $|\Psi_{I}\rangle_{\perp}$ and $|\Psi_{H}\rangle$, respectively, 
\beq 
\mathcal{R(\theta)}\equiv \left(2e_H - {1}\right)\left({1} - 2e_I\right)~. 
\eeq 
Since $(1- 2e)^2 = \mathbf{1}$ for any projector $e$, it is easy to verify that $\mathcal{R}(\theta)$ is unitary:
\beq 
\mathcal{R}(\theta)\,\mathcal{R}^{\dagger}(\theta) = (2e_H-1)(1-2e_I)(1-2e_I)(2e_H-1) = \mathbf{1}~.
\eeq
As our notation suggests, the unitary $\mathcal{R}(\theta)$ corresponds to a rotation of the state by an angle $\theta$. Let us now show this explicitly, and along the way determine the number of times that we need to act with $\mathcal{R}(\theta)$ to map the state $|\Psi_H\rangle$ onto the state $|\Psi_I\rangle$. First, we express the operator $\mathcal{R}(\theta)$ in terms of the isometries $V_H$ and $V_I$ as
 \beq 
 {\cal R}(\theta) = \Bigl( 2V_H V_H^\dagger -1\Bigr)\Bigl(1- 2V_I V_I^\dagger\Bigr)~. 
 \eeq 
 Its actions on the states $|\Psi_I\rangle$ and $|\Psi_H\rangle$ can then be derived using the identities (\ref{isomdef}), (\ref{eq:Jonesprojections}) and (\ref{isom-TL}). One finds
 \beq
{\cal R}(\theta)\, |\Psi_H\rangle  = \Bigl (1-{4\over d_I^2}\Bigr)|\Psi_H\rangle+ {2\over d_I} \,|\Psi_I\rangle~,   \qquad {\cal R}(\theta)\, |\Psi_I\rangle = |\Psi_I\rangle- {2\over d_I} \,|\Psi_H\rangle~. 
\label{R-action}
\eeq
The first relation shows that each application of $\mathcal{R}(\theta)$ partially rotates $|\Psi_H\rangle$ towards $|\Psi_I\rangle$. By repeatedly applying ${\cal R}(\theta)$ a certain number of times, the state $|\Psi_H\rangle$ can be rotated (almost) entirely into $|\Psi_I\rangle$. Equation (\ref{R-action}) makes clear that we need to apply ${\cal R}(\theta)$ a number of times\footnote{Here we assume that $d_I$ is large: for small $d_I$ one needs to be lucky. For instance, when $d_I=2$ (corresponding to a single qubit) one has the fortune to obtain the correct result after one single step.} that is of the order $d_I$. 
\begin{figure}
    \centering
    \includegraphics[width=0.4\linewidth]{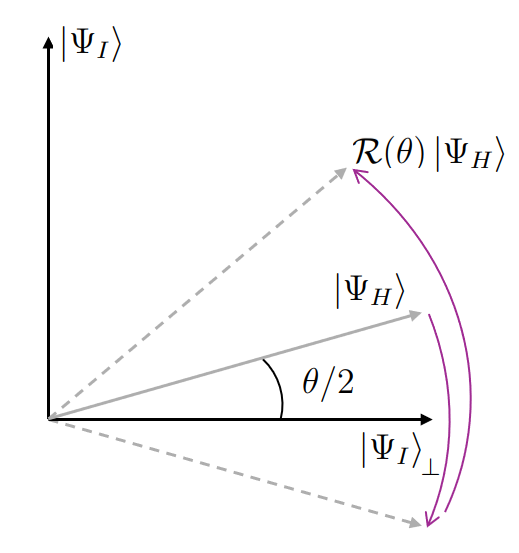}
    \caption{\small{A single iteration of Grover’s search. The rotation $\mathcal{R}(\theta)$ is realized as the composition of two reflections in the plane spanned by $|\Psi_I\rangle$ and $|\Psi_H\rangle$: first about the axis defined by $|\Psi_I\rangle_{\perp}$, and then about the axis defined by $|\Psi_H\rangle$.}}
    \label{Fig:GroverS}
\end{figure}

To evaluate the precise number of times that we need to act, let us explicitly compute the repeated action of ${\cal R}(\theta)$ on the states $|\Psi_H\rangle$ and $|\Psi_I\rangle$. From the relation (\ref{R-action}), one can derive a set of recursive relations for the matrix elements of the $n$-fold action of ${\cal R}(\theta)$. It is clear that these matrix elements are polynomials of order $2n$ in $1/d_I$. After a straightforward calculation, one finds that they are given by Chebyshev polynomials of the first and second kind. An alternative and more convenient way to compute the result is to write the transformation in terms of the angle $\theta$ defined in (\ref{thetadef}). For $n$ iterations, the action on $|\Psi_H\rangle$ is
\beq
\label{RnH}
\bigl({\cal R}(\theta)\bigr)^n \,|\Psi_H\rangle = {\cos{(n\!+\!{1\over 2})\theta}\over \cos{1\over 2}\theta} \left (|\Psi_H\rangle - {1\over d_I}|\Psi_I\rangle\right) + \sin (n\!+\!\textstyle{1\over 2})\theta \, |\Psi_I\rangle ~,
\eeq 
where the term in parentheses together with the factor $1/\cos{1\over 2}\theta$ represents the orthogonal state $|\Psi_I\rangle_\perp$. Using the overlap (\ref{overlap}), we can determine the number of iterations $n_G$ required to get close to the final state $|\Psi_I\rangle$. We thus impose that the first term in (\ref{RnH}) is as small as possible, and hence we find 
\beq 
\left(n_G\!+\!{\textstyle{1\over 2}}\right)\! \theta \approx \frac{\pi}{2}~.
\label{nGapprox}
\eeq 
When the angle between the state $\left (R(\theta)\right)^{n} |\Psi_H\rangle$ and $|\Psi_I\rangle$ is approaching zero, the probability of measuring the state $|\Psi_I\rangle$ as the final outcome of the search is very close to one. The approximation thus becomes more reliable when $d_I$ is sufficiently large. 

We are now ready to define the deterministic swap operator $S_G$. 
 It turns out to be useful to represent $S_G$ as a reflection so that it squares to the identity. This leads us to define the deterministic swap operator $S_G$ as
\begin{equation}
S_G \equiv ~ (2e_I-1)\bigl(\mathcal{R}(\theta)\bigr)^{n_G}~. 
\end{equation}
Here, the number $n_G$ is defined as the integer that gives the best approximation to the equation (\ref{nGapprox}).
One easily verifies that the swap operator $S_G$ obeys the identity $S_G^2 \!=\! {\mathbf 1}$ and, to a very good approximation, exchanges the states $|\Psi_H\rangle$ and $|\Psi_I\rangle$:
\beq
S_G|\Psi_H\rangle\approx |\Psi_I\rangle~, \qquad \qquad 
S_G|\Psi_I\rangle \approx |\Psi_H\rangle~. 
\eeq 
 This property will prove to be useful in the following.

\subsection{An algebraic notion of complexity}

We define the computational complexity associated with the quantum operation $S_G$ as determined by the number of steps in Grover's search algorithm. As shown in (\ref{nGapprox}) the number of iterations $n_G$ depends on the angle $\theta$. For large values of $d_I$ we may approximate $\theta \sim 2/d_I$ and drop the $1\over2$ in this equation. In this way we find that 
\beq 
\quad n_G \sim \frac{\pi d_I}{4} \qquad \text { for large } \quad d_I~.
\eeq 
 Let us denote the complexity of the deterministic swap operator $S_G$ by $\mathcal{C}[S_G]$. 
 We have thus shown that this complexity is directly related to the Jones index and may be expressed as
\begin{equation}
 \mathcal{C}[S_G]\sim \sqrt{\operatorname{Ind} E_I}~,
\end{equation}
up to some arbitrary choice of normalization. 

This simple computation identifies the index as an algebraic measure of computational complexity of the entanglement swap.\footnote{A relation between computational complexity and the Jones index was previously noted by Hollands in \cite{Hollands_2021} using different reasoning. Interestingly, the index of an algebra inclusion has also recently been discussed as a means to understand the volume of certain subregions in the bulk of AdS spacetime \cite{Leutheusser:2025zvp}, suggesting a potential ``volume-complexity" relation.} Since the Jones--Kosaki index is defined for general von Neumann algebras admitting a finite-index conditional expectation, the above relation naturally extends the notion of decoding complexity beyond finite-dimensional Hilbert spaces to type II and type III von Neumann algebras with finite conditional expectation. It would be interesting to investigate the physical implications of this algebraic notion of complexity in greater detail.

This result should be compared with the more familiar finite-dimensional tensor network realization of the Python’s Lunch proposal \cite{brown2020python}. There, the bulk-to-boundary map is modeled by a tensor network in which part of the circuit implements a postselection on a subset of qubits. Recovering the information requires undoing this postselection by a Grover-type unitary, leading to a computational cost that scales as the square root of the Hilbert space dimension of the postselected subsystem. In our algebraic framework, the Jones projection plays the role of the postselection operator, while the Jones index quantifies the size of the corresponding projection. For finite-dimensional qubit models, the Jones index reduces to the Hilbert space dimension of the postselected subsystem, reproducing the Python’s Lunch scaling. 

This postselected subsystem is further related to geometric Python’s Lunch regions by using the fact that locally minimal tensor network cuts correspond to quantum extremal surfaces. Our setup, while remaining at the abstract, non-geometric level of operator algebras, mirrors this structure: we have two QESs at distinct algebraic locations, $H$ and $I$, and the entanglement swap from $|\Psi_H\rangle$ to $|\Psi_I\rangle$ can be viewed as shifting the minimal QES across an intermediate “bulge” region. In the next section, we will try to connect to this picture and define a corresponding island algebra and show how, in our operator-algebraic setting, the index quantifies the effective ``size'' of this region.

\section{The Island Algebra}
\label{sec: The Island Algebra}

We have algebraically described how information is processed by the black hole in both the microscopic and the effective descriptions, identifying the evaporation step as a complex entanglement swap. In recent years, the spacetime understanding of information release (in terms of the Page curve) has been phrased in terms of the appearance of an island region that is part of the entanglement wedge of the radiation. 

In this section, we define the analogue of the island in our algebraic model as an effective algebra that represents part of the radiation system encoded in the  black hole interior, using the notion of Jones tower and the canonical shift.

\subsection{The Jones tower and the canonical shift}

The fact that the map $V_H^\dagger$ is non-isometric implies that the microscopic algebra ${\cal A}_{BH}\times {\cal A}_R$ is a subalgebra of the spacetime algebra ${\cal B}_{BH}\times {\cal B}_R$. A similar statement holds after evaporation and identifies ${\cal A}_{B}\times {\cal A}_{RH}$ with a subalgebra of ${\cal B}_{B}\times {\cal B}_{RH}$. The spacetime algebras can thus be regarded as extensions of the microscopic algebras, and, as we now explain, are obtained from the microscopic algebras via the \textit{basic construction} \cite{jones1983index} (see \cite{vanderHeijden:2024tdk} for a more detailed discussion). 

The maps before and after evaporation can be rewritten as follows in terms of conditional expectations 
by exploiting their definition in terms of the isometries $V_H$ and $V_I$ and their non-isometric conjugates $V_H^\dagger$ and $V^\dagger_I$. 
For the black hole algebras this leads to the relations
\beq 
{\cal A}_{BH} = E_H\bigl({\cal B}_{BH}\bigr)~,\qquad\mbox{and} \qquad {\cal A}_B = E_I\bigl({\cal B}_B\bigr)~,
\eeq 
and for the radiation algebras one finds 
\beq 
{\cal A}_{RH} = E'_I\bigl({\cal B}_{RH}\bigr)~,\qquad\mbox{and} \qquad {\cal A}_R = E'_H\bigl({\cal B}_R\bigr)~.
\eeq 
Comparing these equations with earlier identities (\ref{condexp1}) and (\ref{condexp2}) for the conditional expectations, one draws the important conclusion that the larger microscopic black hole and radiation algebras ${\cal A}_{BH}$ and ${\cal A}_{RH}$ are equal to the smaller spacetime black hole and radiation algebras ${\cal B}_{B}$ and ${\cal B}_{R}$:
\beq 
{\cal A}_{BH} = {\cal B}_{B}~, \qquad\mbox{and} \qquad {\cal A}_{RH} = {\cal B}_R~. \label{AequalsB}
\eeq 
Identifying the spacetime algebras with their corresponding microscopic algebras as above, we find that the sequence of algebraic inclusions for the black hole algebras can be written as
\begin{equation} 
     \caB \subset\caBH \subset \cbBH~.
\end{equation} 
Similarly, for the radiation algebras, we find the following sequence of inclusions
\begin{equation} 
    \caR \subset\caRH \subset \cbRH~.
\end{equation}

We claim that these algebraic inclusions are instances of the Jones basic construction, defined as follows (see also Appendix \ref{app:math_background}). Starting with a given algebra inclusion with a corresponding conditional expectation, one can define the Jones projection that maps the GNS Hilbert space of the larger algebra (defined on a given cyclic and separating state) to the GNS Hilbert space of the smaller algebra. This projection is not part of either of these algebras, and hence can be added to the largest of the two algebras. By taking a double commutant of this extended algebra, one again obtains a von Neumann algebra. In our setup, when this construction is applied to the inclusion of the microscopic algebras, it yields the corresponding effective algebras. In other words, we can write the algebras ${\cal B}_{BH}$ and ${\cal B}_{RH}$ as 
\beq 
{\cal B}_{BH} = \langle {\cal A}_{BH}, e_I\rangle~,
\qquad \qquad 
{\cal B}_{RH} = \langle {\cal A}_{RH}, e_H\rangle~.
\eeq
Here, the notation $\langle\cdot, \cdot\rangle$ means that we take the union of the algebras and subsequently the double commutant to ensure that we end up with a von Neumann algebra. 

We already noted that the projector $e_I$ is part of the spacetime black hole algebra ${\cal B}_{BH}$, while the projector $e_H$ is contained in the spacetime radiation algebra ${\cal B}_{RH}$.
One can repeat the Jones basic construction by adding the remaining projection to each of these algebras. In this way, we get an even longer sequence of algebra inclusions, which we refer to as the \textit{Jones tower}. For the black hole algebra, it reads
\begin{equation} \label{eq:Jonestower}
\caB\subset\caBH\subset\cbBH\subset\langle\cbBH,e_H\rangle~.
\end{equation}
The last and largest algebra in \eqref{eq:Jonestower} represents the commutant of the radiation algebra ${\cal A}_R$. 
By taking the commutants of the full sequence we obtain a similar set of inclusions for the radiation algebras 
\begin{equation} \label{eq:Jonestower2}
\caR\subset\caRH\subset\cbRH\subset\langle\cbRH,e_I\rangle~, 
\end{equation}
where the largest algebra corresponds to the commutant of ${\cal A}_B$. 

An alternative way to define the Jones extension of an algebra is to make use of the modular conjugation operators $J_H$ and $J_I$ associated with the state $|\Psi_H\rangle$ and $|\Psi_I\rangle$.
These modular conjugations can be interpreted physically as mapping the operators in the black hole algebras to their entangled partners in the radiation. For instance, using the relations (\ref{AequalsB}), the following algebras are related by modular conjugation 
\beq \label{eq:M1}
\cbBH = J_H\caRH J_H~,
\qquad \mbox{and}\qquad 
\cbRH = J_I \caBH J_I~. 
\eeq 
The modular conjugations also exchange the subalgebras on both sides, and hence one also has the relations 
\beq \label{M1}
\caBH = J_H\caR J_H~,
\qquad \mbox{and}\qquad 
\caRH = J_I \caB J_I~.
\eeq 
Via the repeated application of the modular conjugation operators, it is not difficult to see that one can define a ``two-step'' map that relates alternate algebras in the Jones tower \cite{longo1989index}:
\beq
 \Gamma\equiv J_H J_I~,\qquad\mbox{and} \qquad  \Gamma^\dagger \equiv J_IJ_H~,
 \eeq
 which we call the \textit{canonical shift} operator.\footnote{The notion of the canonical shift was first introduced by Ocneanu \cite{Ocneanu1989operator} in the context of type II$_1$ algebras, and was presented in the general form by Longo in \cite{longo1989index}. 
Its relevance for the algebraic description of the Hayden--Preskill protocol was discussed in \cite{vanderHeijden:2024tdk}, where the canonical shift was interpreted as a teleportation step that takes the quantum information from the black hole and transports it to the radiation system.} By conjugating the algebras with $\Gamma$ one moves two steps up or down the Jones tower. One easily verifies, using the combined modular conjugations (\ref{eq:M1}) and (\ref{M1}), that the bulk algebras $\cbBH$ and $\cbRH$ can be identified with the images of the original algebras ${\cal A}_B$ and ${\cal A}_R$ under the canonical shift. One has
\beq 
\label{can2}
{\cal B}_{BH} = \Gamma {\cal A}_{B} \Gamma^\dagger~,  \qquad \mbox{and  } \qquad 
{\cal B}_{RH} = \Gamma^\dagger {\cal A}_{R} \Gamma~. 
\eeq
Similarly, one shows that the largest extended algebras in the Jones tower are obtained by applying the canonical shift to the algebras ${\cal A}_{BH}$ and ${\cal A}_{RH}$. This leads to
\beq 
\label{can1}
\langle\cbBH,e_H\rangle   = \Gamma \caBH \Gamma^\dagger~, 
 \qquad \mbox{and  } \qquad   \langle\cbRH,e_I\rangle   = \Gamma^\dagger \caRH \Gamma~. 
\eeq
We will make use of these relations below to describe how the information gets transferred from the black hole to the radiation algebras.

The generalization to properly infinite algebras makes use of Longo's treatment of the Jones basic construction \cite{longo1989index, Longo:1990zp}. There, the operator $\Gamma$ is used to define a \textit{canonical endomorphism} on the operator algebra: 
$\gamma: x\to \Gamma^{\dagger}x\Gamma$. Its inverse $\gamma^{-1}$ corresponds to the canonical shift. In this context, the operator $\Gamma$ represents a unitary map: it can be expressed as the product of the unitary $U$ and the conjugated unitary operator $\overline{U}$. 
The canonical endomorphism can be used to reverse the canonical shift and allows one to identify operator algebras that are separated by two steps in the Jones tower.\footnote{This represents an algebraic version of ``Hilbert's hotel paradox'' that only applies to properly infinite algebras. In this paper, we have avoided this change of terminology, since it breaks the correspondence with other types of algebras and would complicate our description of the effective-to-microscopic map.}

\subsection{A definition of the island algebra}

We are now ready to introduce the algebra associated with the interior region near and between the two QESs. We will refer to this as the \emph{extended island algebra}, denoted ${\cal B}_{HI}$. It is the algebra that includes all the degrees of freedom involved in the effective evaporation step, and is therefore larger than the algebra that we will refer to as the \emph{island algebra}, denoted $\mathcal{B}_I$, whose definition will be introduced shortly.

In the effective description, we define ${\cal B}_{HI}$ as the commutant of the union of the microscopic algebras ${\cal A}_B$ and ${\cal A}_R$: 
\beq
{\cal B}_{HI} \equiv \bigl({\cal A}_B \cup {\cal A}_R\bigr)'= {\cal A}_B' \cap {\cal A}_R'~.
\eeq 
The extended island algebra $\mathcal{B}_{HI}$ thus collects all operators that commute with both the black hole boundary algebra $\mathcal{A}_B$ and the radiation boundary algebra $\mathcal{A}_R$.
Physically, it is the algebra of operators supported in the interior region between the two quantum extremal surfaces $H$ and $I$. By construction, $\mathcal{B}_{H I}$ contains, among other operators, the Jones projections $e_H$ and $e_I$ onto the spacetime vacuum states $\left|\Psi_H\right\rangle$ and $\left|\Psi_I\right\rangle$ near the two surfaces, as well as the Hawking partner modes and the modes that carry the information about Alice's diary. The location of the diary and Hawking partner modes inside the island algebra differ significantly between the reducible and irreducible cases, so we will discuss these two situations separately.

Before turning to this distinction, it is useful to understand more concretely which operators are added at each step of the Jones towers \eqref{eq:Jonestower} and \eqref{eq:Jonestower2}, and how they participate in the information release from the black hole interior. We will keep this discussion general at first, and then restrict to situations in which the operators are contained in specific relative commutants.
\begin{enumerate}
    \item For the first inclusions $\caB\subset\caBH$ and $\caR\subset\caRH$ we already identified the operators with the modes $a_i$ that carry the information about Alice's diary and the emitted Hawking modes $b_i$ collected by Bob.  
    \item For the next inclusions, the operators that are added via the inclusion of black hole algebras ${\cal A}_{BH}\subset {\cal B}_{BH}$ are the Hawking partners $\tilde{b}_i$ of radiation modes $b_i$. Similarly, the internal modes $a_i$ that carry the information have partner modes $\tilde{a}_i$ that are contained in the spacetime radiation algebra ${\cal B}_{RH}\supset {\cal A}_{RH}$. It follows from (\ref{eq:M1}) that the partner modes are related to the original $b$- and $a$-modes via the modular conjugation operators $J_H$ and $J_I$ 
\beq
 {\tilde{b}_i} \equiv J_H b_i J_H~,\qquad\mbox{and} \qquad {\tilde{a}_i} \equiv J_I a_i J_I~.
\eeq 
\item In the last step in each Jones tower, the algebras ${\cal B}_{BH}$ and ${\cal B}_R$ are extended via Jones basic construction to the algebras given in (\ref{can1}). These algebras have their own Pimsner--Popa basis, which we denote by $\hat{a}_i$ and $\hat{b}_i$. The relations (\ref{can1}) imply that these hatted operators are related to the original black hole and radiation modes $a_i$ and $b_i$ via the canonical shift
\beq 
\label{hatGamma}
\hat{a}_i= \Gamma a_i \Gamma^\dagger~, \qquad\mbox{and} \qquad \hat{b}_i = \Gamma^\dagger b_i \Gamma~. 
\eeq 
\end{enumerate}

In the following, we will think of the island region as containing the Hawking partners $\tilde{b}_i$ of the radiation, as well as the operators $\hat{a}_i$ that encode the information in Alice's diary. In the reducible case this interpretation is automatic, since these operators already live in first relative commutants inside $\mathcal{B}_{H I}$; in the irreducible case it relies on the special property that the ``depth'' of the inclusion is equal to two, which ensures that the relevant modes appear in second relative commutants inside $\mathcal{B}_{H I}$. We will explain the physical significance of this property below and discuss its mathematical definition in the appendix.

\paragraph{Reducible inclusions.} The easiest and most straightforward case occurs when the inclusions are completely reducible, which means that the relevant diary and radiation modes are already contained in non-trivial relative commutant algebras inside $\mathcal{B}_{H I}$.

In the microscopic description before and after evaporation, the relative commutant of the black hole algebra with respect to the boundary algebra $\mathcal{A}_B$ is an algebra we denoted by $\mathcal{A}_I$ in (\ref{AI}). It can be defined in two equivalent ways in the microscopic theory:
\beq
{\cal A}_I\ \equiv\  {\cal A}_{BH}\cap {\cal A}'_B~, 
 \qquad\quad \mbox{or equivalently}\qquad\quad   {\cal A}_I\ \equiv {\cal A}_{RH}\cap {\cal A}_R'~.
\eeq
In this microscopic setting, the two relative commutants coincide, and both describe the degrees of freedom that are simultaneously compatible with the black hole and radiation algebras. In the effective description, the corresponding relative commutant algebras appear as distinct subalgebras of the extended island algebra. To make this explicit, we define 
\beq
{\cal A}_H \equiv {\cal A}_{BH}\cap {\cal A}'_B~, \qquad\quad \mbox{and} \qquad \quad {\cal A}_I \equiv {\cal A}_{RH}\cap {\cal A}'_R~,
\eeq
where $\mathcal{A}_H$ contains the internal diary modes $a_i$ and $\mathcal{A}_I$ contains the radiation modes $b_i$ that carry the same information seen from the outside. These algebras represent the first relative commutants associated with the quantum extremal surfaces $H$ and $I$.

Although these two algebras differ in the effective description, there is a direct relationship between ${\cal A}_I$ and ${\cal A}_H$ via the canonical shift $\Gamma$. This can be seen as follows. As we explained, the radiation algebra ${\cal A}_{RH}$ is identified with the bulk radiation algebra ${\cal B}_R$. Its commutant is therefore equal to the bulk black hole algebra ${\cal B}_{BH}$, which in turn is related to ${\cal A}_B$ by the canonical shift $\Gamma$. The bulk commutant of the algebra ${\cal A}_R$ is given by the extended black hole algebra $\langle {\cal B}_{BH}, e_H\rangle $, which is related to ${\cal A}_{BH}$ by the canonical shift. These identifications make clear that in the bulk description ${\cal A}_I$ and ${\cal A}_H$ are related by 
\beq
{\cal A}_I = \Gamma {\cal A}_H \Gamma^\dagger~, \qquad \qquad \mbox{in the effective description~.}
\eeq 
It follows from this relation that the operators $b_i$ and $\hat{a}_i$ form a basis of the same algebra ${\cal A}_I$ and hence must be related by a unitary transformation. Similarly, the operators $a_i$ and $\hat{b}_i$ must also be unitarily related. We thus learn that 
\beq
\hat{a}_i ={\cal U}_{ij}b_j~, \qquad\quad\mbox{and}\qquad\quad  \hat{b}_i = {\cal U}^\dagger_{ij} a_j~. 
\label{unitary}
\eeq 

\tikzset{every picture/.style={line width=0.5pt}} 
\begin{figure}[t]
    \centering
    \begin{tikzpicture}[x=0.75pt,y=0.75pt,yscale=-1,xscale=1]

\draw   (179.47,169.75) -- (329.75,19.47) -- (480.53,170.25) -- (330.25,320.53) -- cycle ;
\draw   (99.47,89.75) -- (179.47,169.75) -- (99.47,249.75) ;
\draw   (560.53,250.25) -- (480.53,170.25) -- (560.53,90.25) ;
\draw  [dash pattern={on 4.5pt off 4.5pt}] (129.93,20.07) -- (278.69,168.83) -- (128.76,318.76) ;
\draw  [dash pattern={on 4.5pt off 4.5pt}] (230.07,19.93) -- (381.31,171.17) -- (231.24,321.24) ;
\draw  [color={rgb, 255:red, 208; green, 2; blue, 27 }  ,draw opacity=1 ] (179.47,169.75) -- (279.68,69.53) -- (381.31,171.17) -- (281.1,271.38) -- cycle ;
\draw  [dash pattern={on 4.5pt off 4.5pt}] (530.07,319.93) -- (381.31,171.17) -- (531.24,21.24) ;
\draw  [dash pattern={on 4.5pt off 4.5pt}] (429.93,320.07) -- (278.69,168.83) -- (428.76,18.76) ;
\draw  [color={rgb, 255:red, 74; green, 144; blue, 226 }  ,draw opacity=1 ] (278.69,168.83) -- (378.9,68.62) -- (480.53,170.25) -- (380.32,270.47) -- cycle ;
\draw [color={rgb, 255:red, 74; green, 144; blue, 226 }  ,draw opacity=1 ]   (350,160) -- (410,160) ;
\draw [color={rgb, 255:red, 74; green, 144; blue, 226 }  ,draw opacity=1 ]   (350,180) -- (410,180) ;
\draw [color={rgb, 255:red, 74; green, 144; blue, 226 }  ,draw opacity=1 ]   (350,150) -- (350,160) ;
\draw [color={rgb, 255:red, 74; green, 144; blue, 226 }  ,draw opacity=1 ]   (410,150) -- (410,160) ;
\draw [color={rgb, 255:red, 74; green, 144; blue, 226 }  ,draw opacity=1 ]   (350,180) -- (350,190) ;
\draw [color={rgb, 255:red, 74; green, 144; blue, 226 }  ,draw opacity=1 ]   (410,180) -- (410,190) ;
\draw [color={rgb, 255:red, 208; green, 2; blue, 27 }  ,draw opacity=1 ]   (250,160) -- (310,160) ;
\draw [color={rgb, 255:red, 208; green, 2; blue, 27 }  ,draw opacity=1 ]   (250,180) -- (310,180) ;
\draw [color={rgb, 255:red, 208; green, 2; blue, 27 }  ,draw opacity=1 ]   (250,150) -- (250,160) ;
\draw [color={rgb, 255:red, 208; green, 2; blue, 27 }  ,draw opacity=1 ]   (310,150) -- (310,160) ;
\draw [color={rgb, 255:red, 208; green, 2; blue, 27 }  ,draw opacity=1 ]   (250,180) -- (250,190) ;
\draw [color={rgb, 255:red, 208; green, 2; blue, 27 }  ,draw opacity=1 ]   (310,180) -- (310,190) ;

\draw (315,32.4) node [anchor=north west][inner sep=0.75pt]    {$\mathcal{B}_{HI}$};
\draw (271,82.4) node [anchor=north west][inner sep=0.75pt]  [color={rgb, 255:red, 208; green, 2; blue, 27 }  ,opacity=1 ]  {$\mathcal{B}_{H}$};
\draw (370,82.4) node [anchor=north west][inner sep=0.75pt]  [color={rgb, 255:red, 74; green, 144; blue, 226 }  ,opacity=1 ]  {$\mathcal{B}_{I}$};
\draw (100,95) node [anchor=north west][inner sep=0.75pt]  [rotate=-45]  {$\mathcal{A}_{B}$};
\draw (129.21,25) node [anchor=north west][inner sep=0.75pt]  [rotate=-45]  {$\mathcal{B}_B(=\mathcal{A}_{BH})$};
\draw (222.85,25) node [anchor=north west][inner sep=0.75pt]  [rotate=-45]  {$\mathcal{B}_{BH}$};
\draw (410.81,45.43) node [anchor=north west][inner sep=0.75pt]  [rotate=-315]  {$\mathcal{B}_{RH}$};
\draw (475,80.43) node [anchor=north west][inner sep=0.75pt]  [rotate=-315]  {$\mathcal{B}_R(=\mathcal{A}_{RH})$};
\draw (540,115.43) node [anchor=north west][inner sep=0.75pt]  [rotate=-315]  {$\mathcal{A}_{R}$};
\draw (271,130.4) node [anchor=north west][inner sep=0.75pt]    {$\textcolor[rgb]{0.82,0.01,0.11}{e_{I}}$};
\draw (370,130.4) node [anchor=north west][inner sep=0.75pt]    {$\textcolor[rgb]{0.29,0.56,0.89}{e_{H}}$};
\draw (261,211.4) node [anchor=north west][inner sep=0.75pt]  [color={rgb, 255:red, 208; green, 2; blue, 27 }  ,opacity=1 ]  {$\hat{b}_{i} ,\tilde{b}_{i}$};
\draw (361,212.4) node [anchor=north west][inner sep=0.75pt]  [color={rgb, 255:red, 74; green, 144; blue, 226 }  ,opacity=1 ]  {$\hat{a}_{i} ,\tilde{a}_{i}$};

\end{tikzpicture}
    \caption{An illustration of the inclusion structure of the different algebras as well as the location of the Jones projectors $e_I$ and $e_H$ and modes $\hat{a}_i,\tilde{a}_i,\hat{b}_i,\tilde{b}_i$. While the diagram suggests an interpretation in terms of spacetime subregions with a given causal structure, we stress that this is not something we have addressed here. One should simply view the above diagram as a convenient representation of the degrees of freedom without any direct reference to spacetime subregions.}
\label{fig:inclusionstructure}
\end{figure}

\paragraph{Irreducible inclusions.}
We now turn to the case in which the inclusion ${\cal A}_B\subset{\cal A}_{BH}$ is irreducible. 
In this case, the relative commutants ${\cal A}_I$ and ${\cal A}_H$ are trivial and consist of multiples of the identity. This means that we have to look further to find the algebra that contains the operators $a_i$ and $b_i$. The place to look is in higher relative commutants, which leads to the notion of the ``depth of an inclusion". In this paper we only consider inclusions with depth two.\footnote{We refer to the appendix for a more general discussion on the notion of depth for an algebra inclusion. }

The extended island algebra ${\cal B}_{HI}$ corresponds to a third relative commutant, since it involves algebras that are separated by three steps in the Jones tower.
The depth two property means that the $a$-modes or the $b$-modes are contained in a second relative commutant, i.e.~in the algebra of operators that commute with the subfactor when one step up in the Jones tower is taken.
This property then also holds for the partner modes $\tilde{a}$, and $\tilde{b}$ as well as for $\hat{a}$ and $\hat{b}$. We therefore introduce the following second relative commutants: 
\beq
{\cal B}_{I}= {\cal B}_{RH} \cap {\cal A}_R'~,\quad \qquad\mbox{and} \quad \qquad {\cal B}_{H} =  {\cal B}_{BH}\cap {\cal A}_B'~.
\eeq
The algebra ${\cal B}_{I}$, which we call the island algebra, contains the operators $\tilde{a}_i$ and $\hat{a}_i$, while the operators $\tilde{b}_i$ and $\hat{b}_i$ are contained in ${\cal B}_H$. To see why these statements hold, we need to look more closely at the sequence of algebra inclusions and explain what is meant by a depth two inclusion. 

It is easily seen from the Jones towers (\ref{eq:Jonestower}) and (\ref{eq:Jonestower2}) that the following inclusion relations hold:
\beq
{\cal A}_H \subset {\cal B}_H\subset {\cal B}_{HI}~,\qquad\quad \mbox{and}\qquad \quad {\cal A}_I \subset {\cal B}_I\subset {\cal B}_{HI}~. 
\eeq 
The three algebras in each sequence represent the first, second and third relative commutants. The mathematical definition of the depth two property is that these two sequences of inclusions represent an instance of the Jones basic construction. This is a powerful condition, which allows one to draw a number of important conclusions. 

First, this property implies that the conditional expectations for the first algebra inclusion in each sequence are still represented by $E_H$ and $E'_I$. From the equations (\ref{condexp2}) it is easily seen that 
\beq 
E_H\!: {\cal B}_H\to  {\cal A}_H~,\qquad\quad \mbox{and}\qquad\quad  E'_I\!: {\cal B}_I\to {\cal A}_I~.
\eeq
Furthermore, the depth two assumption tells us, by definition, that the extended algebra ${\cal B}_{HI}$ is obtained via Jones's basic construction for both sequences of inclusions. This means in particular that it contains both Jones projections $e_H$ and $e_I$. One has 
\beq
{\cal B}_{HI} = \langle {\cal B}_H, e_H\rangle = \langle {\cal B}_I, e_I\rangle~. 
\eeq
Finally, perhaps the most important implication is that 
for the irreducible case the algebras $\mathcal{B}_H$ and ${\cal B}_I$ each have precisely $d_I^2$ independent algebra elements. This follows from the fact that the algebras ${\cal A}_H$ and ${\cal A}_I$ are one-dimensional and that the index of the conditional expectations is still the same. 

We will identify the basis for the algebra ${\cal B}_I$ with the hatted operators $\hat{a}_i$, and for ${\cal B}_H$ we will choose the operators $\hat{b}_i$ as its basis. We could have used the partner modes $\tilde{a_i}$ or $\tilde{b}_i$ as an alternative basis for these algebras. The hatted and tilde-ed operators are related by the modular conjugations $J_H$ and $J_I$, each of which defines an anti-linear endomorphism of the algebra ${\cal B}_I$ or ${\cal B}_H$ respectively. This fact allows us to choose our bases of operators so that $\hat{a}_i = \tilde{a}_i^\dagger $ and $\hat{b}_i = \tilde{b}_i^\dagger$.

In summary, the extended island algebra $\mathcal{B}_{HI}$ is intrinsically defined as the commutant of the boundary algebras $\mathcal{A}_B$ and $\mathcal{A}_R$. In the reducible case, the diary and Hawking partner modes already live in the first relative commutants $\mathcal{A}_H$ and $\mathcal{A}_I$ inside $\mathcal{B}_{HI}$, and the canonical shift $\Gamma$ identifies the corresponding degrees of freedom near the two quantum extremal surfaces $H$ and $I$. In the irreducible, depth two case, the same modes instead populate the second relative commutants $\mathcal{B}_H$ and $\mathcal{B}_I$, but again these algebras sit inside $\mathcal{B}_{HI}$ and are related by the canonical shift.

\section{Information Recovery and Black Hole Complementarity}
\label{sec: Information recovery and black hole complementarity}

We are now ready to address a version of the black hole information problem in our model. A central role in this discussion is played by the island algebra introduced above. We will also explain how a notion of black hole complementarity shows up, which manifests itself most clearly in the irreducible case and is intimately connected to the algebraic complexity of the information transfer.

\subsection{Algebraic formulation of the information puzzle}
 Suppose Alice encodes the quantum information in her diary by making use of the set of operators $a_i$ that are part of the microscopic black hole algebra ${\cal A}_{BH}$ before evaporation.
After the evaporation Bob has gained control of the modes $b_i\in {\cal A}_{RH}$ in the radiation algebra. Our goal is to explain from the effective description how the operators $a_i$ get mapped onto the operators $b_i$ and determine in which way, and under which circumstances, this knowledge can be used to decode the information in Alice's diary. 

To describe the information transfer in a tractable and physically intuitive manner we assume that the operators $a_i$ and $b_i$ each generate an algebra. We will consider two special situations in which the $a_i$ and $b_i$ operators are contained in a simple relative commutant. The two algebras generated by $a_i$ and $b_i$ can then be used to construct two GNS Hilbert spaces on their respective tracial states. We will denote these GNS Hilbert spaces by ${\cal H}_I$ for the operators $a_i$ and by ${\cal H}_H$ for the modes $b_i$. The dimensions of these Hilbert spaces are, for the two cases that we consider, equal to the index, which therefore necessarily is an integer. 
In fact, the basis of states in each Hilbert space can be identified with the corresponding operators: 
\beq
{\cal H}_I = {\rm Span}\Bigl\{|a_i\rangle: \, i=1,\dots, d_I^2
\Bigr \}~, \quad\mbox{and} \quad 
{\cal H}_H = {\rm Span}\Bigl\{|b_i\rangle: \, i=1,\dots,d_I^2
\Bigr \}~. 
\eeq 
We have deliberately not yet given names to the algebras associated with the operators $a_i$ and $b_i$ because the identification of these algebras will depend on whether the algebraic inclusion ${\cal A}_B\subset {\cal A}_{BH}$ is reducible or irreducible. In both cases the black hole information puzzle amounts to finding the map ${\cal H}_H \to {\cal H}_I$ and identifying how the microscopic state $|\Phi\rangle$ is represented in each of these GNS Hilbert spaces. 

Let us start with the second question. The state $|\Phi\rangle$ can be mapped to an operator $\Phi_I$ in the algebra generated by $a_i$ by imposing that the expectation values in the microscopic state and in the GNS representation agree for all operators $a$ in that algebra
\beq
{\rm Tr}_I \bigl(\Phi_{I\!}^\dagger \,a\,\Phi_I\bigr) =\langle\Phi|a|\Phi\rangle ~.
\eeq 
Here ${\rm Tr}_I$ denotes the natural trace on the algebra generated by the operators $a_i$. 
Similarly, there exists an operator $\Phi_H$ in the algebra generated by $b_i$ that for all operators $b$ in that algebra obeys
\beq
{\rm Tr^{}_{}}_H\bigl(\Phi_H^\dagger{}_{{}_{\,}} b \, \Phi_H\bigr) =\langle\Phi|\,b\,|\Phi\rangle~. 
\eeq 
An alternative and equivalent way to obtain the operators $\Phi_I$ and $\Phi_H$ is to start with the state $|\Phi\rangle$ and construct the density matrices $\rho_I$ and $\rho_H$ for the algebras generated by $a_i$ and $b_i$ by tracing out the commutant algebras. 
We subsequently construct their GNS purifications, which we then denote by $|\Phi_I\rangle$ and $|\Phi_H\rangle$. In this way, we have represented the microscopic state $|\Phi\rangle$ by two different GNS-states 
\beq
|\Phi_I\rangle \in {{\cal H}_I}~,\qquad \mbox{and} \qquad |\Phi_H\rangle \in {{\cal H}_H}~.
\eeq 
These two states carry the same quantum information, but are represented in two complementary ways. The version of the black hole information problem in our model amounts to finding the map between these two states: 
\beq
|\Phi_H\rangle  \qquad \longrightarrow \qquad |\Phi_I\rangle~. 
\eeq 
To resolve this puzzle, we follow our description of the effective evaporation process and study its implementation in the GNS representation of the extended island algebra ${\cal B}_{HI}$. 

\subsection{Reducible inclusions: Information transfer via a unitary}

This task is relatively simple for the \textit{reducible} situation. The identity (\ref{unitary}) shows that the transfer of information from the black hole to the radiation involves the application of the canonical shift $\Gamma$ followed by an appropriate unitary operator. The shift $\Gamma$ can be interpreted as a quantum teleportation step, as illustrated by the pictorial representation of the black hole evaporation process in Figure~\ref{fig:informationtransfer}. 
This leads to a canonical identification between the algebras ${\cal A}_I$ and ${\cal A}_H$. 

In the microscopic theory, the operators $a_i$ and $b_i$ are part of the same algebra and are mapped onto each other by the unitary ${\cal U}_{ij}$. When translated to the GNS Hilbert spaces ${\cal H}_I$ and ${\cal H}_H$ this implies that the bases $|a_i\rangle$ and $|b_i\rangle$ are simply related by
\beq 
|a_i\rangle = {\cal U}_{ij}|b_j\rangle~, 
\qquad\quad \mbox{for the reducible case}~.
\eeq 
After this change of basis the Hilbert spaces ${\cal H}_I$ and ${\cal H}_H$ are directly identified, and also the states $|\Phi_I\rangle$ and $|\Phi_H\rangle$ have become identical. 
We conclude that in the completely reducible situation the information is transferred in a rather direct fashion from the black hole to the radiation and essentially happens via the canonical shift. This is the situation studied in \cite{vanderHeijden:2024tdk}.

\subsection{Irreducible inclusions: Information transfer via a Fourier transform}

We now turn to the case of an \textit{irreducible} inclusion. In this case the relation between the states $|\Phi_I\rangle$ and $|\Phi_H\rangle$ takes a rather non-trivial form and turns out to be given by the algebraic analogue of a Fourier transform.

As a first step, we note that the Hilbert spaces ${\cal H}_I$ and ${\cal H}_H$ can be identified, via the canonical shift $\Gamma$, with the GNS Hilbert spaces for the subalgebras ${\cal B}_I$ and ${\cal B}_H$ of the island algebra ${\cal B}_{HI}$. Strictly speaking, the original operators $a_i$ and $b_i$ are not themselves elements of $\mathcal{B}_I$ and $\mathcal{B}_H$, but their GNS basis states $|a_i\rangle$ and $|b_i\rangle$ can be canonically identified with basis elements $|\hat{a}_i\rangle$ and $|\hat{b}_i\rangle$ in these algebras via the canonical shift $\Gamma$. From now on, we will therefore drop the hats and regard $a_i$ and $b_i$ as basis elements of $\mathcal{B}_I$ and $\mathcal{B}_H$, respectively. In particular, $|\Phi_I\rangle$ and $|\Phi_H\rangle$ can be treated as GNS states associated with these algebras.

To resolve our information puzzle, we need to find a map from ${\cal B}_I$ to ${\cal B}_H$ and vice versa. This problem has been solved by Ocneanu \cite{Ocneanu1989operator} precisely for the case of irreducible depth two inclusions, i.e.~for the case where there are only trivial operators in the first relative commutant. Ocneanu showed that, in the depth two case, there exists a natural map ${\cal F}$ from the first subalgebra ${\cal B}_I$ to the second subalgebra ${\cal B}_H$ given by the algebraic equivalent of a Fourier transform. 
The transform $\cal F$ and its inverse ${\cal F}^{-1}$ thus act as 
\beq 
{\cal F}\!: {\cal B}_I \to {\cal B}_H\qquad\quad\mbox{and}\qquad\quad {\cal F}^{-1}\!: {\cal B}_H \to {\cal B}_I~.
\eeq 
The definitions of ${\cal F}$ and ${\cal F}^{-1}$ involve, besides the Jones projections $e_I$ and $e_H$, 
the conditional expectations
\beq
{\cal E}_I\!: {\cal B}_{HI} \to {\cal B}_H \qquad \quad \mbox{and} \qquad \quad {\cal E}'_H\!: {\cal B}_{HI} \to {\cal B}_I~,
\eeq
which are related to the conditional expectations $E_I$ and $E'_H$ by the canonical shift $\Gamma$.
For arbitrary elements $a \in \mathcal{B}_I$ and $b \in \mathcal{B}_H$, one can show that
\beq 
\label{FourierDef}
{\cal F}(a) = d^3_{I\,}{\cal E}_I\bigl( e_H e_{I}a\bigr) \qquad\quad\mbox{and}\qquad\quad {\cal F}^{-1}(b) = d^3_{I\,} {\cal E}'_H\bigl( e_I e_{H}b\bigr) ~.
\eeq 
These formulas may look formidable at first sight, but they have a simple graphical interpretation in terms of planar diagrams: the Fourier transform is a ``one-click rotation'', i.e.~a $90^\circ$ rotation of the diagram representing the operator \cite{Ocneanu1989operator}. This representation is illustrated in Figure~\ref{Fig:Fourier}.

\begin{figure}
    \centering
    \includegraphics[width=0.9\linewidth]{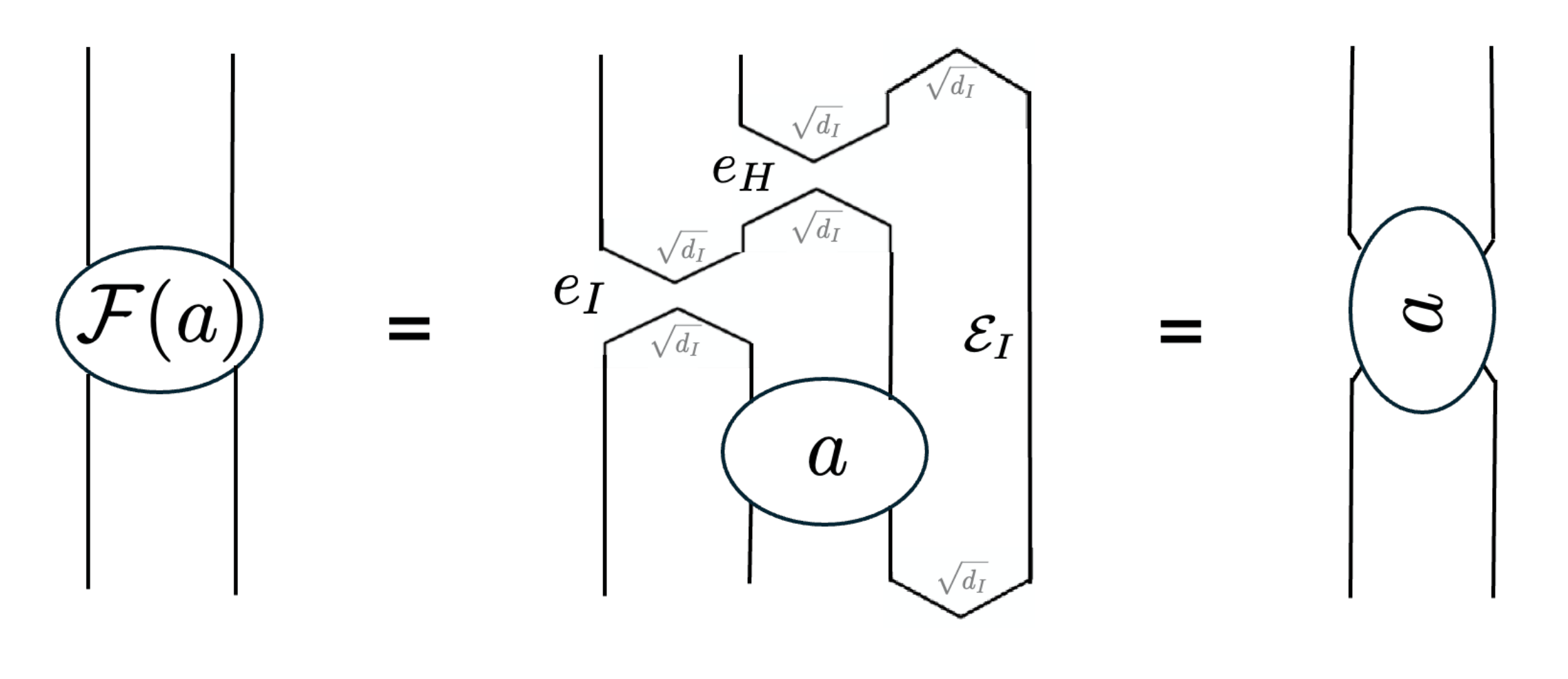}
    \caption{A graphical representation of the Fourier transform. To indicate that the operators $a\in {\cal B}_I$ and ${\cal F}(a)\in \cal B_{H}$ are in the second relative commutant, they are represented graphically by attaching two lines at the bottom and at the top. The first equality gives the definition in terms of the Jones projections and conditional expectation, and includes the normalization factors $\sqrt{d_I}$ (in grey) for each turn in the diagram. The last equality demonstrates the interpretation of the Fourier transform as a $90^\circ$ rotation of the operator $a$.}
    \label{Fig:Fourier}
\end{figure}

The central result of this subsection is that the Fourier transform provides the answer to our algebraic black hole information puzzle. It realizes the map between the interior and radiation representations of the microscopic state: 
\beq
\label{Fourier-transfer}
\bigl|\Phi_H\bigr\rangle = \bigl|\mathcal{F}(\Phi_I)\bigr\rangle~,\qquad
\bigl|\Phi_I\bigr\rangle = \bigl|\mathcal{F}^{-1}(\Phi_H)\bigr\rangle~.
\eeq
We will now show that the inverse transform $\mathcal{F}^{-1}$ decomposes into precisely the three steps of the spacetime evaporation protocol of section~\ref{Bulk decomposition of the microscopic evolution}: an isometric embedding into the island Hilbert space, an entanglement swap, and a non-isometric projection back to the microscopic Hilbert space:
\beq
\label{Inverse-Fourier-transfer}
 \bigl|\Phi_I\bigr\rangle =  \bigl |{\cal F}^{-1}(\Phi_H) \bigr\rangle = \left(V_I^\dagger \circ S_I \circ V_H\right)|\Phi_H\rangle ~.
\eeq
The decomposition of the inverse Fourier transform in these three steps is graphically represented in Figure~\ref{Fig:Fourier-decompose}. We will now discuss each of the steps in more detail.

\begin{figure}[ht]
    \centering
    \includegraphics[width=0.9\linewidth]{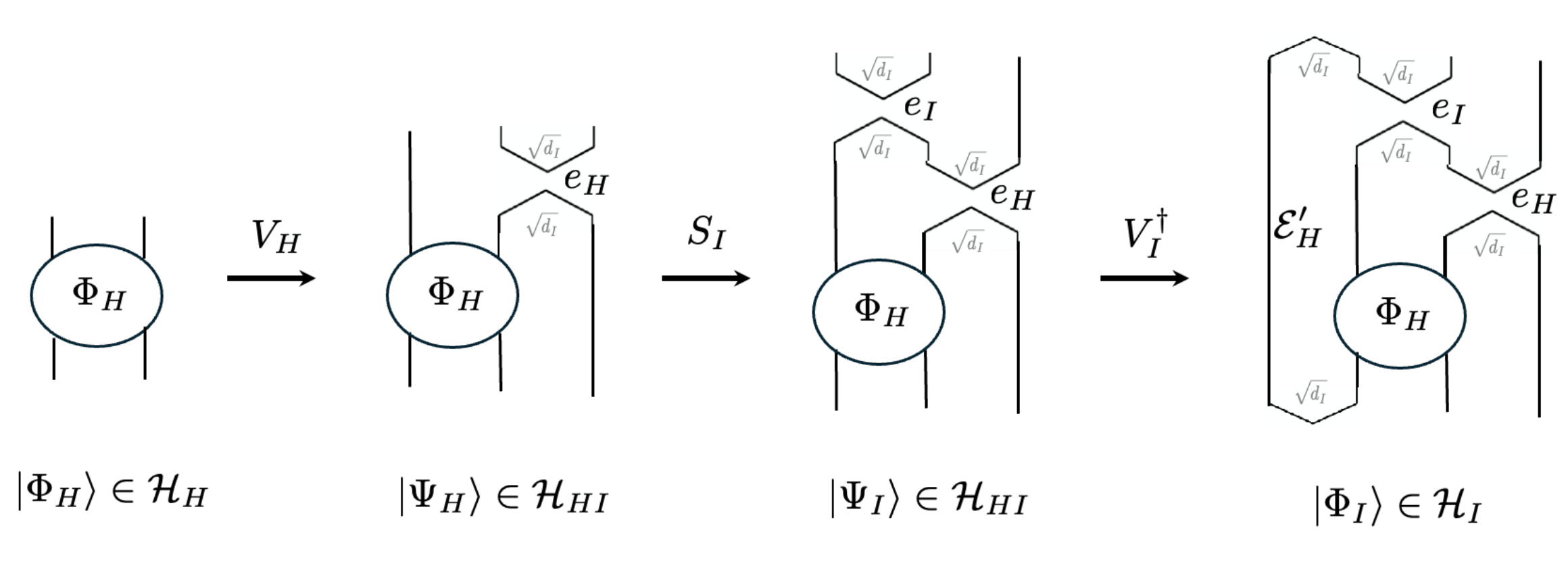}
    \caption{Decomposition of the inverse Fourier transform in terms of (non)-isometries and the entanglement swap illustrating the equivalence with the information transfer via the black hole evaporation process.}
    \label{Fig:Fourier-decompose}
\end{figure}

\paragraph{Step 1: Isometric maps.} The extended Hilbert space is given by the GNS Hilbert space $\mathcal{H}_{HI}$ of the island algebra $\mathcal{B}_{HI}$, defined on its tracial state. Its dimension is $d_I^4$, the square of the Jones index. The Jones projections $e_I$ and $e_H$ project $\mathcal{H}_{HI}$ onto two subspaces of dimension $d_I^2$, which we identify with the images of $\mathcal{H}_I$ and $\mathcal{H}_H$ under the isometries
\beq
V_I: {\cal H}_I\to {\cal H}_{HI}~, \qquad\quad \mbox{and}\qquad \quad V_H:{\cal H}_H\to {\cal H}_{HI}~.
\eeq
In a similar way as in equations \eqref{PsiH} and \eqref{PsiI}, we identify the spacetime states $|\Psi_H\rangle$ and $|\Psi_I\rangle$ with the images of the microscopic states $|\Phi_I\rangle$ and $|\Phi_H\rangle$ under these isometries, 
\beq|\Psi_H\rangle= V_H|\Phi_H\rangle~, \qquad \mbox{and}\qquad |\Psi_I\rangle=V_I|\Phi_I\rangle~.
\eeq
The explicit expression for these states is uniquely fixed by the condition that $|\Psi_H\rangle$ and $|\Psi_I\rangle$ must lie in the image of the Jones projections $e_H$ and $e_I$, and by requiring that these states be properly normalized. This leads to 
\beq
\label{PsiPhi}
|\Psi_I\rangle 
= d_I |e_I \Phi_I\rangle~,  
\qquad \mbox{and}\qquad |\Psi_H\rangle  
= d_I |e_H \Phi_H\rangle~. 
\eeq 
Note that, in contrast to their microscopic counterparts $|\Phi_H\rangle$ and $|\Phi_I\rangle$, the two spacetime states $|\Psi_H\rangle$ and $|\Psi_I\rangle$ are part of the same GNS Hilbert spaces ${\cal H}_{HI}$. 
In the second equation, one recognizes the first step depicted in Figure~\ref{Fig:Fourier-decompose}. 

\paragraph{Step 2: Entanglement swap.}
The action of the entanglement swap operators $S_I$ and $S_H$ in the GNS Hilbert space ${\cal H}_{HI}$ are the same as we found in section~\ref{sec:Information Transfer and Algebraic Complexity} with equations (\ref{SI}) and (\ref{SH}). This step is easily recognized in the second step depicted in Figure~\ref{Fig:Fourier-decompose} as realizing the ``teleportation'' of interior information.

\paragraph{Step 3: Non-isometric maps.} Finally, to explain the third step, we need to define the action of the non-isometries 
\beq
V^\dagger _I: {\cal H}_{HI}\to {\cal H}_{I} \qquad\quad \mbox{and}\qquad \quad V_H^\dagger :{\cal H}_{HI}\to {\cal H}_{H}~.
\eeq
These non-isometric maps reverse the isometries $V_I$ and $V_H$ and thus return the spacetime states $|\Psi_I\rangle$ and $|\Psi_H\rangle$ back to their corresponding microscopic states $|\Phi_I\rangle$ and $|\Phi_H\rangle$. We claim that this is achieved by defining $V_I^\dagger$ and $V_H^\dagger$ in terms of the conditional expectations ${\cal E}_I$ and ${\cal E}'_H$ via the relations 
\beq
\label{nonisom}
V^\dagger_I|\Psi_I\rangle 
= d_I \bigl|{\cal E}'_H (\Psi_I)\bigr\rangle 
\qquad \mbox{and}\qquad V^\dagger_H|\Psi_H\rangle 
= d_I \bigl|{\cal E}_I (\Psi_H)\bigr\rangle~.
\eeq 
Before verifying that these non-isometries indeed have the right properties, we observe that, together with the first two steps of the evaporation protocol, they indeed complete the decomposition of the Fourier transform and its inverse.

Our remaining task is to show that the right-hand side of (\ref{nonisom}) gives back the microscopic states $|\Phi_I\rangle$ and $|\Phi_H\rangle$. For this, we need to evaluate the right-hand side by inserting the operator expressions $\Psi_I = d_I e_I \Phi_I$ and $\Psi_H= d_Ie_H\Phi_H$ that follow from (\ref{PsiPhi}). A straightforward calculation then gives
\beq
d^2_I {\cal E}'_H (e_I\Phi_I) =  \Phi_I   \qquad \quad \mbox{and}\qquad\qquad   d^2_I {\cal E}_I (e_H\Phi_H) = \Phi_H~. 
\eeq 
These equations are most easily derived using the graphical representations of the conditional expectations and Jones projections. 
The results indeed show that the non-isometries (\ref{nonisom}) act in the way they should
\beq
\label{nonisom-check}
V^\dagger_I|\Psi_I\rangle 
= \bigl|\Phi_I\bigr\rangle
\qquad \mbox{and}\qquad V^\dagger_H|\Psi_H\rangle 
= \bigl|\Phi_H\bigr\rangle~.
\eeq 
This completes our proof of our main results (\ref{Inverse-Fourier-transfer}), (\ref{Fourier-transfer}) and (\ref{FourierDef}) and their equivalence to the algebraic description of the black hole evaporation process in spacetime.

In our description of the inverse Fourier transform ${\cal F}^{-1}$, we used the probabilistic version $S_I$ of the entanglement swap. 
The Fourier transform ${\cal F}$ has a similar description in terms of the subsequent application of $V_I$, $S_H$ and $V^\dagger_H$, and simply inverts each of the three steps in ${\cal F}^{-1}$. The probabilistic entanglement swaps $S_I$ and $S_H$ can also be replaced by the deterministic version $S_G$ given in terms of the Grover-search. This makes clear that the Fourier transform and its inverse also carry a high computational complexity that is quantified by the Jones index, following the results of section \ref{sec:Information Transfer and Algebraic Complexity}.

\subsection{Information recovery and black hole complementarity}

The microscopic state $|\Phi\rangle$ has two distinct representatives $|\Phi_I\rangle$ and $|\Phi_H\rangle$ in the GNS Hilbert spaces associated with the $a$- and $b$-modes. The quantum information of Alice's diary is contained in an easily accessible form in the state $|\Phi_I\rangle$ and may be read from the expectation values of the operators $a \in {\cal B}_I$. 
This information is not easily accessible to Bob, who only gets access to the state $|\Phi_H\rangle$. This means he can easily find the expectation values of operators $b\in {\cal B}_H$.

We have also just shown that the state $|\Phi_H\rangle$ is obtained from $|\Phi_I\rangle$ via the Fourier transform ${\cal F}$. The Fourier transform also maps operators $a \in {\cal B}_I$ to operators ${\cal F}(a)\in {\cal B}_H$. Does this mean that Bob can read the information about Alice's diary? In other words, is Bob able to determine the expectation values of the operators $a \in {\cal B}_I$? To do so, he needs to undo the Fourier transform ${\cal F}$. This requires a complex quantum operation that involves an entanglement swap similar to the black hole evaporation process and in essence, amounts to an implementation of the Hayden--Preskill protocol. 

For this recovery process to succeed, the black hole and radiation system must be after Page time, so that Bob can apply the required quantum operations. Even after Page time, the quantum information of Alice's diary has two complementary representations that are related by a complex operation: it either is easily accessible in terms of the $a$-operators or it is represented in a highly complex way in terms of $b$-operators. 
This leads to an algebraic explanation of the notion of black hole complementarity. 

We will now describe the complex recovery process that Bob needs 
to perform to recover the information contained in Alice's diary. First, he needs to identify the operator 
\begin{equation}
{\cal F}(a) \in {\cal B}_H\qquad \qquad \mbox{for}\qquad \qquad  a \in {\cal B}_I~. \\[2mm]
\end{equation}
Suppose Bob knows how to perform this Fourier transform. The next step is to ``act" with the operator ${\cal F}(a)$ on the state $|\Phi_H\rangle \in {\cal H}_H$. The required action is, however, not given by ordinary operator multiplication, but by a new kind of product defined in terms of the Fourier transform $\cal F$. Like the ordinary Fourier transform for functions, the transformation ${\cal F}$ takes the product of two operators $a_1$ and $a_2$ to a convolution product of their Fourier transforms:
\beq
{\cal F}(a_1 a_2) = {\cal F}(a_1)*{\cal F}(a_2)~.
\eeq
This equation defines the convolution product.
We will not attempt to work out a more explicit formula: it requires a combination of the Fourier transform and its inverse, both of which are already quite mathematically involved and computationally complex. 
It is more straightforward to give a graphical representation, as shown in 
Figure~\ref{Fig:convolution-product}. This figure shows that the convolution operator product can be interpreted as a $90^\circ$ rotation of the ordinary operator product. 

\begin{figure}
    \centering
    \includegraphics[width=1\linewidth]{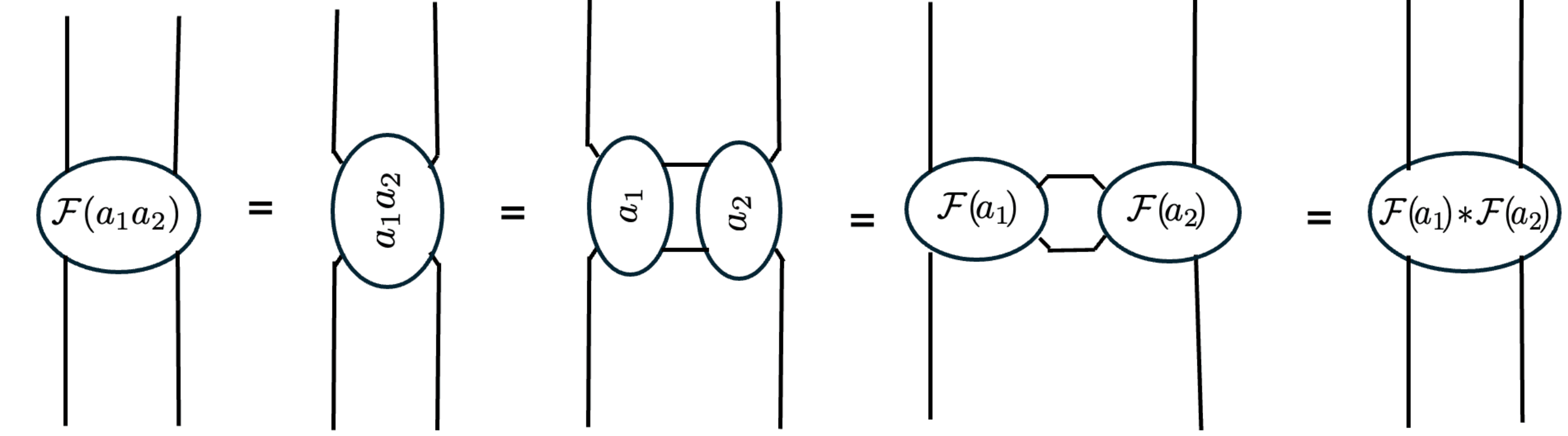}
    \caption{The Fourier transform maps operator multiplication onto a convolution product of the Fourier-transformed operators}
    \label{Fig:convolution-product}
\end{figure}

Suppose now that Bob has also figured out how to implement this convolution product. With this knowledge and his access to the operator ${\cal F}(a)$ he is able to determine the expectation values of the operator $a$ and thus recover the information in Alice's diary. The following formula holds
\beq
{\rm Tr^{}_{}}_H\bigl(\Phi_H^\dagger{}_{{}_{\,}}\! *  {\cal F}(a)\! * \Phi_H\bigr) =\langle\Phi|\,a\,|\Phi\rangle ~.
\eeq 
This identity shows that Bob is in principle able to decode the information if he can identify the operator ${\cal F}(a)$ and take its convolution product with the state $\Phi_H$. These operations are highly complex and involve adding additional auxiliary operators and applying (non-)isometric maps similar to the ones we introduced in our description of the evaporation process. 
They require that Bob has sufficient control over the emitted radiation and can access the entangled partner modes. This requirement is equivalent to imposing that the entanglement entropy of the black hole with the radiation has decreased by a sufficient amount.

\section{Discussion}

The black hole information problem concerns the gap between the description of information release in the underlying microscopic theory and its description in the emergent spacetime. While by now we have developed a good understanding of both perspectives separately, a unified language in which both mechanisms can be described simultaneously is still lacking. In this paper, we aimed to close part of this gap by formulating a model for the black hole information transfer, following the Yoshida--Kitaev protocol, in a type-independent algebraic language, thereby extending the type II$_1$ description developed in \cite{vanderHeijden:2024tdk}. This is another step towards explicitly relating the microscopic and spacetime mechanisms for information transfer.

We showed that the relevant dynamics decomposes into an isometric embedding, an entanglement swap, and a non-isometric map, together with a shift in the subdivision of algebras, i.e. our algebraic analogue of the quantum extremal surface. In the case of an irreducible depth two inclusion, this process can be viewed as an algebraic Fourier transform that relates interior and radiation operators. The evaporation step takes place in the part of the system that we call the extended island algebra. It contains the diary information in the so-called island algebra, a second relative commutant associated to a choice of Pimsner--Popa basis for the inclusion. This definition is rather abstract, and as we have not yet given an explicit spacetime realization it would be interesting to connect this to other approaches for describing entanglement islands in an algebraic language, e.g. \cite{Geng:2025bcb}.

After the evaporation step, the information of the diary is encoded in the radiation degrees of freedom in a non-trivial way and can only be decoded once there is sufficient entanglement between the interior and radiation degrees of freedom. In a follow-up paper \cite{2Tu2025}, we will show how to keep track of changes in the entanglement entropy during an evaporation step, and demonstrate a version of the Page curve in our model. After an evaporation step, Bob can in principle recover Alice’s diary from the island algebra by applying the inverse of the algebraic Fourier transform to the relevant radiation degrees of freedom. We show that the associated complexity of the decoding is quantified by the Jones--Kosaki index of the inclusion. This gives an operational meaning to complexity in terms of Jones projectors that applies to a general finite-index inclusion (going beyond its definition in finite-dimensional models). 

The concept of complementarity has long interacted with that of complexity in the context of black holes. The firewall paradox showed that, after the Page time, semiclassical interior and exterior descriptions imply incompatible entanglement patterns, threatening entanglement monogamy. In qubit toy models, complexity then entered as a “safety barrier”: the operations needed to reveal this incompatibility are exponentially hard in the black hole entropy, so within the regime of low-complexity experiments the two descriptions never visibly conflict \cite{akers2022black}. When that barrier breaks down in late evaporation, complexity alone no longer protects complementarity, motivating a new operational version of complementarity \cite{engelhardt2025observer}. In our model, this connection becomes explicit: the same quantum information admits two complementary representations: a simple one in terms of interior operators and a complicated one in terms of radiation operators, which are related by a transformation of very high algebraic complexity, the algebraic Fourier transform, rather than by a simple local unitary. We interpret this as an algebraic realization of black hole complementarity in our model.

Our approach does have important limitations. Not all von Neumann algebra inclusions admit conditional expectations with finite index, and this is a rather strong assumption\footnote{Recent work has explored algebra extensions for certain generalized notions of conditional expectations, in which some of the defining properties are relaxed \cite{AliAhmad:2024saq, AliAhmad:2025oli, Klinger:2026kqj}. It would be interesting to investigate whether there is a version of our protocol in terms of generalized conditional expectations.}; we motivated it physically by requiring that a finite amount of information can be extracted from the black hole in a single evaporation step. However, a naive application of the protocol to an inclusion of type III$_1$ von Neumann algebras associated to spacetime subregions in QFT fails, because these algebras generally do not admit conditional expectations of finite index. It is possible, however, to extend elements of our protocol even to the case of an infinite index inclusion, see e.g. \cite{HermanOcneanu1989, Nill:1994be, EnockNest1996}. In this approach the notion of a conditional expectation gets replaced by that of an operator-valued weight, and for the case of an irreducible depth two inclusion there is still a well-defined Jones tower, canonical endomorphism and algebraic Fourier transform. The main issue seems to be the entanglement swap, which is crucial for the information transfer. For an infinite-index inclusion, the algebraic complexity barrier associated to this operation goes to infinity preventing us from a naive implementation. Without the entanglement swap in place, it seems that both the black hole and radiation subsystem keep growing with each evaporation step. One could interpret this failure of our protocol as a manifestation of the black hole information problem. 

Our discussion has remained at the level of abstract operator algebras and has not addressed in detail how this framework is mapped onto a specific spacetime geometry. Recent investigations into the algebraic structure of quantum gravity reveal that such a mapping could be rather non-trivial (see e.g. \cite{Sahu:2025upe}). The abstract discussion has the benefit that some of the algebraic tools developed here have the potential of being more widely applicable. For example, we expect that the discussion of complementarity and complexity is relevant for understanding the existence of the baby universe in the Antonini-Sasieta-Swingle (AS$^2$) model \cite{antonini2023cosmology}. This will be the subject of future work.

\section*{Acknowledgements}
We would like to thank Elliott Gesteau, Misha Isachenkov, Marc Klinger, Bahman Najian, Antony Speranza and Herman Verlinde for useful discussions. AC is supported by the Heising-Simons Foundation ‘Observational Signatures
of Quantum Gravity’ QuRIOS collaboration grant. JT, PvdH and EV are supported by the Spinoza grant and the Delta ITP consortium, both of which are programs of the Netherlands Organisation for Scientific Research (NWO) that is funded by the Dutch Ministry of Education, Culture and Science (OCW). JvdH is supported in part by the National Science and Engineering Research Council of Canada (NSERC) and the Simons Foundation via a Simons Investigator Award. 

\appendix

\section{Mathematical background}
\label{app:math_background}

In this appendix, we collect several operator algebraic constructions used in the main text. Our aim is to fix the notation and explain the relations among conditional expectations, Jones projections, canonical endomorphisms, the Jones tower, finite depth, and Pimsner--Popa bases.

Throughout this appendix, we consider an inclusion
\begin{equation}
    \mathcal A_B\subset\mathcal A_{BH}
    \label{eq:appendix-AB-ABH-inclusion}
\end{equation}
of properly infinite von Neumann factors.\footnote{This includes von Neumann algebras of type I$_{\infty}$, II$_{\infty}$ and type III.} The algebras are assumed to act on a common Hilbert space
\(\mathcal H_\Phi\). The corresponding commutant
inclusion is
\begin{equation}
    \mathcal A_{BH}'\subset\mathcal A_B'~.
\end{equation}
Within the Hilbert space $\mathcal{H}_{\Phi}$, we identify
\begin{equation}
    \mathcal A_R:=\mathcal A_{BH}'~,
    \qquad
    \mathcal A_{RH}:=\mathcal A_B'~.
    \label{eq:appendix-commutant-dictionary}
\end{equation}

We assume that there exists a vector
\(\lvert\Phi\rangle\in\mathcal H_\Phi\) that is cyclic and separating for
both \(\mathcal A_B\) and \(\mathcal A_{BH}\). We use the same symbol
\(\Phi\) for the associated faithful normal vector state,
\begin{equation}
    \Phi(x)
    :=
    \langle\Phi|x|\Phi\rangle~.
\end{equation}
Whenever the Jones tower, finite depth, or Pimsner--Popa bases are discussed, we additionally assume that the inclusion has finite index.

\subsection{Standard inclusions and modular conjugations}
\label{app:standard-inclusions}

\begin{definition}[Common standard vector]
\label{def:common-standard-vector}
A vector \(\lvert\Phi\rangle\in\mathcal H_\Phi\) is a common standard vector
for the inclusion
\begin{equation}
    \mathcal A_B\subset\mathcal A_{BH}
\end{equation}
if it is cyclic and separating for both algebras. We denote the associated
modular conjugations by
\begin{equation}
    J_B^\Phi~,
    \qquad
    J_{BH}^\Phi~.
\end{equation}
\end{definition}

The Tomita--Takesaki theorem gives the relations
\begin{equation}
    J_B^\Phi\mathcal A_BJ_B^\Phi
    =
    \mathcal A_B'
    =
    \mathcal A_{RH}~,
    \label{eq:JB-commutant}
\end{equation}
and
\begin{equation}
    J_{BH}^\Phi\mathcal A_{BH}J_{BH}^\Phi
    =
    \mathcal A_{BH}'
    =
    \mathcal A_R~.
    \label{eq:JBH-commutant}
\end{equation}
Moreover, an algebra and its commutant have the same modular conjugation:
\begin{equation}
    J_{RH}^\Phi
    =
    J_B^\Phi~,
    \qquad
    J_{R}^\Phi
    =
    J_{BH}^\Phi~.
    \label{eq:algebra-commutant-modular-conjugations}
\end{equation}

Although the same vector \(\lvert\Phi\rangle\) is used for both algebras,
the modular conjugations \(J_B^\Phi\) and \(J_{BH}^\Phi\) are generally
different. In particular, \(J_B^\Phi\) is not simply obtained by restricting
\(J_{BH}^\Phi\) to the subalgebra \(\mathcal A_B\). Their composition gives rise to a unitary operator:

\begin{definition}[Canonical shift and canonical endomorphism]
\label{def:canonical-endomorphism}
The \emph{canonical shift operator} is defined as the unitary
\begin{equation}
    \Gamma:=J_{BH}^{\Phi}J_B^{\Phi}~,
    \label{eq:canonical-unitary-direct}
\end{equation}
and the corresponding \emph{canonical endomorphism} is
\begin{equation}
    \gamma:
    \mathcal A_{BH}\longrightarrow\mathcal A_B~,
    \qquad
    \gamma(x):=\Gamma^{\dagger} x\Gamma~.
    \label{eq:canonical-endomorphism-direct}
\end{equation}
\end{definition}

\begin{proposition}
\label{prop:canonical-endomorphism-range-direct}
The map \(\gamma\) is a normal unital injective \(^*\)-homomorphism
satisfying
\begin{equation}
    \gamma(\mathcal A_{BH})
    \subset
    \mathcal A_B.
    \label{eq:gamma-range-direct}
\end{equation}
\end{proposition}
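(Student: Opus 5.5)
First I would establish that $\Gamma$ is unitary. The modular conjugations $J_B^\Phi$ and $J_{BH}^\Phi$ are antiunitary involutions on $\mathcal H_\Phi$, so their product $\Gamma=J_{BH}^\Phi J_B^\Phi$ is a linear unitary. Its inverse is $\Gamma^\dagger=J_B^\Phi J_{BH}^\Phi$. It follows that $\gamma(x)=\Gamma^\dagger x\Gamma$ is the restriction to $\mathcal A_{BH}$ of an inner automorphism of $\mathcal B(\mathcal H_\Phi)$. This gives the algebraic properties almost immediately: $\gamma$ is linear, multiplicative, $^*$-preserving, unital ($\Gamma^\dagger\Gamma=\mathbf 1$) and injective. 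It is also normal, since conjugation by a fixed unitary is $\sigma$-weakly (indeed weak-operator) continuous.

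The only real content is the range statement $\gamma(\mathcal A_{BH})\subset\mathcal A_B$. I would obtain it by tracing the algebra through the two antiunitary conjugations using the Tomita--Takesaki relations (\ref{eq:JB-commutant}) and (\ref{eq:JBH-commutant}). For $x\in\mathcal A_{BH}$,
\begin{equation}
\gamma(x)=J_B^\Phi\bigl(J_{BH}^\Phi x J_{BH}^\Phi\bigr)J_B^\Phi~.
\end{equation}
By (\ref{eq:JBH-commutant}) the inner factor $J_{BH}^\Phi xJ_{BH}^\Phi$ lies in $\mathcal A_{BH}'=\mathcal A_R$. The inclusion $\mathcal A_B\subset\mathcal A_{BH}$ gives $\mathcal A_{BH}'\subset\mathcal A_B'$, so this element also lies in $\mathcal A_B'=\mathcal A_{RH}$. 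Next, (\ref{eq:JB-commutant}) together with $(J_B^\Phi)^2=\mathbf 1$ gives $J_B^\Phi\mathcal A_B'J_B^\Phi=\mathcal A_B$. Hence $\gamma(x)\in\mathcal A_B$. The computation also shows
\begin{equation}
\gamma(\mathcal A_{BH})=J_B^\Phi\mathcal A_RJ_B^\Phi~,
\end{equation}
which I would record because it is useful later in identifying the tower.

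I expect the main obstacle to be a point of care rather than a hard step: checking the ordering conventions so that the conjugation $x\mapsto\Gamma^\dagger x\Gamma$ applies $J_{BH}^\Phi$ first and $J_B^\Phi$ second. With the reverse order one would instead map $\mathcal A_B$ into $\mathcal A_{BH}'\cdots$ and lose the inclusion. One should also note that the hypothesis that $|\Phi\rangle$ is a common standard vector, as in Definition \ref{def:common-standard-vector}, is exactly what makes both $J$'s defined on the same Hilbert space. Neither finite index nor proper infiniteness is needed for this proposition.
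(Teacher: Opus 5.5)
Your proposal is correct and is essentially the paper's proof: the algebraic properties and normality come from $\gamma$ being implemented by the unitary $\Gamma$, and the range statement follows from $J_{BH}^\Phi\mathcal A_{BH}J_{BH}^\Phi=\mathcal A_R\subset\mathcal A_{RH}$ together with $J_B^\Phi\mathcal A_{RH}J_B^\Phi=\mathcal A_B$. Your side remark about the reversed order is slightly off, though it plays no role in the proof: conjugating by $\Gamma$ instead gives $\gamma^{-1}$, which sends $\mathcal A_B$ onto $J_{BH}^\Phi\mathcal A_B'J_{BH}^\Phi=\mathcal B_{BH}$, i.e.\ up the tower.
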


\begin{proof}
Normality, unitality, injectivity, and multiplicativity follow because
\(\gamma\) is implemented by the unitary \(\Gamma\). Moreover, for
\(x\in\mathcal A_{BH}\) we have
\begin{align}
    \gamma(x)
    &=
    J_B^\Phi J_{BH}^\Phi xJ_{BH}^\Phi J_B^\Phi 
    \in
    J_{B}^\Phi\mathcal A_R J_{B}^\Phi~.
\end{align}
Since $\mathcal A_R
    =
    \mathcal A_{BH}'
    \subset
    \mathcal A_B'
    =
    \mathcal A_{RH}$ we have
\begin{equation}
    J_{B}^\Phi\mathcal A_R J_{B}^\Phi
    \subset
    J_{B}^\Phi\mathcal A_{RH}J_{B}^\Phi
    =
    \mathcal A_B~.
\end{equation}
Therefore,
\begin{equation}
    \gamma(\mathcal A_{BH})
    \subset
    \mathcal A_B~.
\end{equation}
\end{proof}
\begin{remark}[Unitary realization]
Suppose that, in a particular realization, there exists a unitary \(U\) such
that
\begin{equation}
    \mathcal A_B=U^\dagger\mathcal A_{BH}U~.
\end{equation}
If the standard forms are chosen compatibly so that
\begin{equation}
    J_{B}^{\Phi}
    =
    U^\dagger J_{BH}^{\Phi}U~,
\end{equation}
then
\begin{equation}
    \Gamma
    =
    J_{BH}^{\Phi}U^{\dagger}J_{BH}^{\Phi}U ~.
\end{equation}
Introducing the conjugate unitary
\begin{equation}
    \overline U
    :=
    J_{BH}^{\Phi}U^\dagger J_{BH}^{\Phi}~,
\end{equation}
one obtains
\begin{equation}
    \Gamma=\overline U U~.
\end{equation}
\end{remark}

\subsection{Conditional expectations and isometric embeddings}
\label{app:conditional-expectation-PsiI-direct}

We assume there exists a faithful normal conditional expectation:\footnote{
Faithfulness ensures that no positive operator is annihilated by the conditional expectation, i.e. $E(x^\dagger x)=0$ implies $x=0$. Normality guarantees compatibility with the von Neumann algebra topology, i.e. the conditional expectation is continuous with respect to increasing limits of bounded positive sequences $x_\alpha\uparrow x$ in the sense that $E(x) = \sup_\alpha E(x_\alpha)$.}
\begin{equation}
    E_I:
    \mathcal A_{BH}\longrightarrow\mathcal A_B
    \label{eq:EI-definition-direct}~.
\end{equation}

\begin{definition}[Conditional expectation]
\label{def:conditional-expectation-direct}
A \emph{conditional expectation}
\begin{equation}
    E_I:
    \mathcal A_{BH}\to\mathcal A_B
\end{equation}
is a unital, completely positive map satisfying
\begin{equation}
    E_I(a_1xa_2)
    =
    a_1E_I(x)a_2~,
    \qquad
    a_1,a_2\in\mathcal A_B~,
    \quad
    x\in\mathcal A_{BH}~.
    \label{eq:EI-bimodule-property-direct}
\end{equation}
\end{definition}

Composing the state \(\Phi\) with \(E_I\), we obtain the state
\begin{equation}
    \Psi_I
    :=
    \Phi\circ E_I
    \label{eq:PsiI-definition-direct}
\end{equation}
on \(\mathcal A_{BH}\). Equivalently,
\begin{equation}
    \Psi_I(x)
    =
    \langle\Psi_I|x|\Psi_I\rangle
    =
    \langle\Phi|E_I(x)|\Phi\rangle~,
    \qquad
    x\in\mathcal A_{BH}~.
    \label{eq:PsiI-vector-state-direct}
\end{equation}
Here, \(\lvert\Psi_I\rangle\) denotes the cyclic GNS vector associated with
the state \(\Psi_I\). Because \(E_I\) acts trivially on \(\mathcal A_B\), the states agree on the
subalgebra:
\begin{equation}
    \Psi_I(a)
    =
    \Phi(a)~,
    \qquad
    a\in\mathcal A_B~.
    \label{eq:PsiI-Phi-agree-AB}
\end{equation}
We denote by
\begin{equation}
    \mathcal H_{\Psi_I}
    :=
    \overline{
        \mathcal A_{BH}|\Psi_I\rangle
    }
\end{equation}
the GNS Hilbert space associated to the state \(\Psi_I\).

\begin{proposition}[Isometric embedding]
\label{prop:VI-isometry-direct}
The map
\begin{equation}
    V_I
    \bigl(
        a|\Phi\rangle
    \bigr)
    :=
    a|\Psi_I\rangle~,
    \qquad
    a\in\mathcal A_B~,
    \label{eq:VI-definition-direct}
\end{equation}
extends uniquely to an isometry
\begin{equation}
    V_I:
    \mathcal H_\Phi
    \longrightarrow
    \mathcal H_{\Psi_I}~.
\end{equation}
Its adjoint satisfies
\begin{equation}
    V_I^{\dagger}x|\Psi_I\rangle
    =
    E_I(x)|\Phi\rangle~,
    \qquad
    x\in\mathcal A_{BH}~,
    \label{eq:VI-adjoint-direct}
\end{equation}
from which it follows that
\begin{equation}
    E_I(x) =V_I^{\dagger}xV_I~,
    \qquad
    x\in\mathcal A_{BH}~.
    \label{eq:EI-compression-direct}
\end{equation}
\end{proposition}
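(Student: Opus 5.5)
We first show that the formula \eqref{eq:VI-definition-direct} defines an isometry on the dense subspace $\mathcal A_B|\Phi\rangle\subset\mathcal H_\Phi$. This subspace is dense because $|\Phi\rangle$ is cyclic for $\mathcal A_B$. For $a_1,a_2\in\mathcal A_B$ we compute
\begin{equation}
    \langle a_1\Psi_I|a_2\Psi_I\rangle
    =\Psi_I(a_1^\dagger a_2)
    =\Phi\bigl(E_I(a_1^\dagger a_2)\bigr)
    =\Phi(a_1^\dagger a_2)
    =\langle a_1\Phi|a_2\Phi\rangle~,
\end{equation}
using \eqref{eq:PsiI-vector-state-direct} and \eqref{eq:PsiI-Phi-agree-AB}. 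Hence $V_I$ preserves inner products on its domain.

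Well-definedness follows from the same identity. If $\sum_k a_k|\Phi\rangle=0$, then $\bigl\|\sum_k a_k|\Psi_I\rangle\bigr\|=\bigl\|\sum_k a_k|\Phi\rangle\bigr\|=0$. Linearity is clear from the formula. A linear isometry on a dense subspace extends uniquely by continuity to an isometry $V_I:\mathcal H_\Phi\to\mathcal H_{\Psi_I}$.

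For the adjoint, we use that $\mathcal A_{BH}|\Psi_I\rangle$ is dense in $\mathcal H_{\Psi_I}$ by construction, and that $\mathcal A_B|\Phi\rangle$ is dense in $\mathcal H_\Phi$. It therefore suffices to test $V_I^\dagger x|\Psi_I\rangle$ against vectors $a|\Phi\rangle$ with $a\in\mathcal A_B$. The bimodule property \eqref{eq:EI-bimodule-property-direct} gives
\begin{equation}
    \langle a\Phi|V_I^\dagger x\Psi_I\rangle
    =\langle a\Psi_I|x\Psi_I\rangle
    =\Phi\bigl(E_I(a^\dagger x)\bigr)
    =\Phi\bigl(a^\dagger E_I(x)\bigr)
    =\langle a\Phi|E_I(x)\Phi\rangle~.
\end{equation}
By density and boundedness of both sides in the test vector, this yields \eqref{eq:VI-adjoint-direct}.

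For the compression formula, take $x\in\mathcal A_{BH}$ and $a\in\mathcal A_B$. Since $xa\in\mathcal A_{BH}$, applying \eqref{eq:VI-adjoint-direct} to $xa$ gives
\begin{equation}
    V_I^\dagger xV_I\,a|\Phi\rangle
    =V_I^\dagger (xa)|\Psi_I\rangle
    =E_I(xa)|\Phi\rangle
    =E_I(x)\,a|\Phi\rangle~,
\end{equation}
where the last step is again the bimodule property. Both $V_I^\dagger xV_I$ and $E_I(x)$ are bounded operators on $\mathcal H_\Phi$, and they agree on the dense set $\mathcal A_B|\Phi\rangle$. Hence they coincide, which is \eqref{eq:EI-compression-direct}.

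The main subtlety lies in the bookkeeping of Hilbert spaces rather than in any single computation. $V_I$ maps between the two different GNS spaces $\mathcal H_\Phi$ and $\mathcal H_{\Psi_I}$, on which $\mathcal A_{BH}$ acts in two different representations. All density arguments must be carried out in the correct space: cyclicity of $\Phi$ for $\mathcal A_B$ in $\mathcal H_\Phi$, and cyclicity of $\Psi_I$ for $\mathcal A_{BH}$ in $\mathcal H_{\Psi_I}$.

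No faithfulness or normality assumption on $E_I$ is needed for this statement. Those properties would only enter if one wanted $|\Psi_I\rangle$ to be separating.
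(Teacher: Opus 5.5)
Your proof is correct and follows the same route as the paper: the isometry property comes from $\Phi\circ E_I=\Phi$ on $\mathcal A_B$, and the adjoint formula comes from testing against the dense set $\mathcal A_B|\Phi\rangle$ using the bimodule property of $E_I$. You go slightly further by checking well-definedness explicitly and by deriving the compression formula $V_I^{\dagger}xV_I=E_I(x)$ from $V_I^{\dagger}(xa)|\Psi_I\rangle=E_I(x)\,a|\Phi\rangle$, a step the paper leaves implicit.
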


\begin{proof}
For \(a\in\mathcal A_B\), we have
\begin{align}
    \left\|
        a|\Psi_I\rangle
    \right\|^2
    &=
    \Psi_I(a^*a)
    =
    \Phi\bigl(E_I(a^*a)\bigr)
    \nonumber\\
    &=
    \Phi(a^*a)
    =
    \left\|
        a|\Phi\rangle
    \right\|^2.
\end{align}
Hence, \(V_I\) is isometric. To prove the second statement, we note that for \(a\in\mathcal A_B\) and \(x\in\mathcal A_{BH}\),
\begin{align}
    \langle\Phi|
    a^{*}V_I^{\dagger}x
    |\Psi_I\rangle
    &=
    \langle\Psi_I|a^{*}x|\Psi_I\rangle
    \nonumber\\
    &=
    \Phi\bigl(E_I(a^{*}x)\bigr)
    \nonumber\\
    &=
    \Phi\bigl(a^{*}E_I(x)\bigr)
    \nonumber\\
    &=
    \langle\Phi|a^{*}E_I(x)|\Phi\rangle~.
\end{align}
Since \(\mathcal A_B|\Phi\rangle\) is dense in
\(\mathcal H_\Phi\), the result follows.
\end{proof}

\begin{definition}[Jones projection]
\label{def:Jones-projection-direct}
The \emph{Jones projection} associated with the conditional expectation \(E_I\) is given by
\begin{equation}
    e_I:=V_IV_I^\dagger~.
    \label{eq:eI-definition-direct}
\end{equation}
It is the orthogonal projection onto
\begin{equation}
   e_{I}: \mathcal H_{\Psi_I} \to  V_I\mathcal H_\Phi =\overline{
        \mathcal A_B|\Psi_I\rangle
    }
    ~.
\end{equation}
\end{definition}

For every \(x\in\mathcal A_{BH}\), the Jones projection satisfies
\begin{equation}
    e_Ix|\Psi_I\rangle
    =
    E_I(x)|\Psi_I\rangle~,
\end{equation}
which we can equivalently write as the following identity:
\begin{equation}
    e_Ixe_I
    =
    E_I(x)e_I~.
    \label{eq:Jones-relation-direct}
\end{equation}

On the radiation side, we can do a similar construction. We have the following commutant inclusion
\begin{equation}
    \mathcal A_R
    \subset
    \mathcal A_{RH}~,
\end{equation}
where we have identified $\mathcal A_R
    = \mathcal A_{BH}',
    \mathcal A_{RH}
=\mathcal A_B'$ within $\mathcal{H}_{\Phi}$. We assume that there exists a faithful normal conditional expectation
\begin{equation}
    E_H':
    \mathcal A_{RH}
    \longrightarrow
    \mathcal A_R~.
    \label{eq:EH-definition-direct}
\end{equation}
Let \(\Phi'\) denote the restriction of the vector state \(|\Phi\rangle\) to
\(\mathcal A_R\). We define
\begin{equation}
\Psi_H
    :=
    \Phi'\circ E_H'~.
    \label{eq:PsiH-definition-direct}
\end{equation}
Equivalently,
\begin{equation}
    \Psi_H(x)
    =
    \langle\Phi|E_H'(x)|\Phi\rangle~,
    \qquad
    x\in\mathcal A_{RH}~.
    \label{eq:PsiH-vector-state-direct}
\end{equation}
In the same way as for $E_I$, one can construct an isometry
\begin{equation}
V_H: \mathcal{H}_{\Phi} \to \mathcal{H}_{\Psi_H}~,
\end{equation}
and corresponding Jones projector $e_H  =  V_{H} V_H^{\dagger}$.

\subsection{The Jones basic construction}
\label{app:basic-construction-direct}

We now give a definition of the Jones basic construction. The idea is that one can use the Jones projector $e_I$ to extend the algebra inclusion $\mathcal{A}_B\subset \mathcal{A}_{BH}$:

\begin{definition}[Basic construction]
\label{def:basic-construction-direct}
The \emph{Jones basic construction} associated with the inclusion 
\(\mathcal A_B\subset\mathcal A_{BH}\) is the von Neumann algebra generated by $\mathcal{A}_{BH}$ and $e_I$:
\begin{equation}
    \mathcal B_{BH}
    :=
    \langle
        \mathcal A_{BH},e_I
    \rangle~.
    \label{eq:basic-construction-direct}
\end{equation}
\end{definition}

It is a standard result that one can equivalently express the basic construction as
\begin{equation}
\mathcal B_{BH} = J_{BH}^{\Phi} \mathcal A_{RH} J_{BH}^{\Phi}~.
\end{equation}
We have:
\begin{equation}
    \mathcal B_{BH} = J_{BH}^{\Phi} \mathcal A_{RH} J_{BH}^{\Phi}= J_{BH}^{\Phi} J_{B}^{\Phi}\mathcal A_{B} J_{B}^{\Phi} J_{BH}^{\Phi} =
    \gamma^{-1}(\mathcal A_B)~,
    \label{eq:basic-construction-gamma-direct}
\end{equation}
where the inverse image is taken under the canonical endomorphism.\footnote{In writing the above expressions, we have implicitly assumed that the basic extension algebra $\mathcal{B}_{BH}$ is represented on the same Hilbert space $\mathcal{H}_{\Phi}$.}

More generally, repeated applications of \(\gamma\) and
\(\gamma^{-1}\) define a bi-infinite family of von Neumann algebras,
\begin{equation}
    \mathcal A_{2k-1}
    =
    \gamma^{-k}(\mathcal A_B)~,
    \qquad
    \mathcal A_{2k}
    =
    \gamma^{-k}(\mathcal A_{BH})~,
    \qquad
    k\in\mathbb Z~.
    \label{eq:Ak-definition}
\end{equation}
In particular,
\begin{equation}
    \mathcal A_{-1}
    =
    \mathcal A_B,
    \qquad
    \mathcal A_0
    =
    \mathcal A_{BH}~,
    \qquad
    \mathcal A_1
    =
    \mathcal B_{BH}~.
\end{equation}

\begin{definition}[Jones tower/tunnel]
These algebras form an infinite sequence of inclusions:
\begin{equation}
    \cdots
    \subset
    \mathcal A_{-2}
    \subset
    \mathcal A_{-1}
    \equiv
    \mathcal A_B
    \subset
    \mathcal A_0
    \equiv
    \mathcal A_{BH}
    \subset
    \mathcal A_1
    \equiv
    \mathcal B_{BH}
    \subset
    \mathcal A_2
    \subset
    \cdots~.
    \label{eq:Jones-tower-tunnel}
\end{equation}
The increasing part of this sequence is called the \emph{Jones tower}, while
the decreasing part is called the \emph{Jones tunnel}.
\end{definition}

For every \(k\in\mathbb Z\), the basic construction gives rise to a faithful normal conditional
expectation
\begin{equation}
    E_k:
    \mathcal A_k
    \longrightarrow
    \mathcal A_{k-1}~,
    \label{eq:Ek-definition}
\end{equation}
with $E_0:=E_I$ and Jones projector $e_0: =e_I$. The corresponding Jones projection \(e_k\) belongs to
\(\mathcal A_{k+1}\) and satisfies
\begin{equation}
    e_kxe_k
    =
    E_k(x)e_k~,
    \qquad
    x\in\mathcal A_k~.
    \label{eq:ek-Jones-relation}
\end{equation}
Moreover,
\begin{equation}
    \mathcal A_{k+1}
    =
    \langle
        \mathcal A_k,e_k
    \rangle~.
    \label{eq:Ak-basic-construction}
\end{equation}
For future reference, we denote the conditional expectation $E_1$ in the Jones tower by
\begin{equation}
E_H: \mathcal{B}_{BH}\to \mathcal{A}_{BH}~.
\end{equation}

By construction, every consecutive inclusion in the tower has the same
Jones index as the original inclusion,
\begin{equation}
    [\mathcal A_{k+1}:\mathcal A_k]
    =
    [\mathcal A_{BH}:\mathcal A_B]~.
    \label{eq:index-constant-tower}
\end{equation}
For now we state the useful Jones relation:
\begin{equation}
E_{k+1}(e_k)=[\mathcal A_{BH}:\mathcal A_B]^{-1}\,\mathbf{1}~.
\end{equation}
By definition, the canonical endomorphism shifts the tower by two steps,
\begin{equation}
    \gamma(\mathcal A_k)
    =
    \mathcal A_{k-2}~.
    \label{eq:gamma-shifts-tower}
\end{equation}

On the radiation side, we can proceed in a similar fashion and apply the basic construction to the inclusion $\mathcal{A}_R\subset \mathcal{A}_{RH}$ to obtain the basic extension:
\begin{equation}
\mathcal{B}_{RH}:=\langle \mathcal{A}_{RH}, e_H \rangle~.
\end{equation}
Repeating the construction one obtains the sequence
\begin{equation}
    \cdots
    \subset
    \overline{\mathcal A}_{-2}
    \subset
    \overline{\mathcal A}_{-1}
    \equiv
    \mathcal A_R
    \subset
    \overline{\mathcal A}_0
    \equiv
    \mathcal A_{RH}
    \subset
    \overline{\mathcal A}_1
    \equiv
    \mathcal B_{RH}
    \subset
    \overline{\mathcal A}_2
    \subset
    \cdots~,
\end{equation}
with conditional expectations
\begin{equation}
E_k':\overline{\mathcal{A}}_{k}\to \overline{\mathcal{A}}_{k-1~,}
\end{equation}
and corresponding Jones projectors $e'_k\in \overline{\mathcal{A}}_{k+1}$ with $E'_0 = E_H'$ and $e'_0=e_H$. We denote the conditional expectation $E'_1$ by
\begin{equation}
E_I':\mathcal{B}_{RH}\to \mathcal{A}_{RH}~.
\end{equation}
Within the Hilbert space $\mathcal{H}_{\Phi}$ we now have the following relations: 
\begin{equation}
E_I' = \mathrm{Ad}J_{B}^{\Phi} \circ E_I \circ \mathrm{Ad}  J_{B}^{\Phi}~,\qquad E_H' = \mathrm{Ad}J_{BH}^{\Phi} \circ E_H \circ \mathrm{Ad}  J_{BH}^{\Phi}~.
\end{equation}

\subsection*{The canonical endomorphism in terms of $\Psi_I$ and $\Psi_H$}

To compare with the notation of the main text, we now represent the canonical endomorphism on the effective ``spacetime'' Hilbert space. 

Note that the state \(\lvert\Psi_I\rangle\) is cyclic and separating for
\(\mathcal A_{BH}\). We have 
\begin{equation}
\mathcal{H}_{\Psi_I}:= \overline{\mathcal{A}_{BH}|\Psi_I\rangle}~.
\end{equation}
We denote the modular conjugation for the algebra $\mathcal{A}_{BH}$ with respect to $|\Psi_I\rangle$ by
\begin{equation}
   J_I \equiv  J_{BH}^{\Psi_I}~.
\end{equation}
Similarly, the state
\(\lvert\Psi_H\rangle\) is cyclic and separating for
\(\mathcal A_{RH}\) in its GNS representation 
\begin{equation}
\mathcal{H}_{\Psi_H}:=\overline{\mathcal{A}_{RH}|\Psi_H\rangle}~.
\end{equation}
Denote by
\begin{equation}
    J_H \equiv J_{RH}^{\Psi_H}~
\end{equation}
the corresponding modular conjugation associated to the algebra $\mathcal{A}_{RH}$ within $\mathcal{H}_{\Psi_H}$.

The modular conjugations \(J_I\) and
\(J_H\) initially act on different Hilbert spaces. It will be useful to introduce a common Hilbert space on which they are represented. Let us use the canonical endomorphism to introduce the state:
\begin{equation} \label{eq:states_gamma}
\widetilde{\Psi}_{H} := \Psi_H\circ \gamma~.
\end{equation}
By writing out the definition of $\widetilde{\Psi}_H$, we obtain 
\begin{equation}
\widetilde{\Psi}_H = \Phi\circ E_H\circ \mathrm{Ad}( J_{BH}^{\Phi}J_{B}^{\Phi}J_{BH}^{\Phi})~,
\end{equation}
which shows that $\widetilde{\Psi}_H$ is obtained by extending the state $\Phi$ via the conditional expectation 
\begin{equation}
E_H:\mathcal{B}_{BH}\to \mathcal{A}_{BH}~,
\end{equation}
and is furthermore composed with the modular conjugation $J_{BH}^{\Phi}J_{B}^{\Phi}J_{BH}^{\Phi}$ to ensure that the state is defined on the commutant algebra $\mathcal{B}_{BH}'$. We can thus write the Hilbert space associated to the state $\widetilde{\Psi}_H$ in terms of the algebra $\mathcal{B}_{BH}$ as
\begin{equation}
\mathcal{H}_{\widetilde{\Psi}_H}:= \overline{\mathcal{B}_{BH}'|\widetilde{\Psi}_H\rangle} = \overline{\mathcal{B}_{BH}|\Phi \circ E_H\rangle}~.
\end{equation}
The relation 
\eqref{eq:states_gamma} gives us the following isomorphism:
\begin{equation}
\Gamma: \mathcal{H}_{\widetilde{\Psi}_H} \xrightarrow{\sim} \mathcal{H}_{\Psi_H}~.
\end{equation}
In the main text, whenever we mention the effective spacetime Hilbert space we shall mean the Hilbert space $\mathcal{H}_{\widetilde{\Psi}_H}$, but we will often denote it by $\mathcal{H}_{\Psi_H}$ with the above isomorphism understood. In particular, $J_H$ will denote the modular conjugation for the algebra $\mathcal{B}_{BH}$ with respect to the state $\widetilde{\Psi}_H$. In addition, since $\mathcal{A}_{BH}\subset \mathcal{B}_{BH}$ we can naturally identify the Hilbert space $\mathcal{H}_{\Psi_I}$ with the subspace $\mathcal{H}_{\Psi_I}\subset \mathcal{H}_{\widetilde{\Psi}_H}$ under the projector $e_I$.

Let us denote by $\mathcal{B}_B$ and $\mathcal{B}_{R}$ the algebras $\mathcal{A}_{BH}$ and $\mathcal{A}_{RH}$ when they are acting on the common Hilbert space $\mathcal{H}_{\widetilde{\Psi}_H}$. In this representation, we thus have:
\begin{equation}
J_I\mathcal{B}_BJ_I=\mathcal{B}'_B =\mathcal{B}_{RH}~, \qquad J_H\mathcal{B}_RJ_H=\mathcal{B}_R'=\mathcal{B}_{BH}~.
\end{equation}
Note that the modular conjugations can be identified with
\begin{equation}
J_I \simeq  J_{BH}^{\Phi}~, \qquad J_{H}\simeq J_{BH}^{\Phi}J_{B}^{\Phi}J_{BH}^{\Phi}~.
\end{equation}
Consequently, on the common representation the canonical shift operator may be expressed in terms of the modular data associated with \(\Psi_I\) and \(\Psi_H\) as
\begin{equation}
    \Gamma
    \simeq
    J_H J_I~.
    \label{eq:Gamma-PsiI-PsiH-direct}
\end{equation}

\subsection{Higher relative commutants and finite depth inclusions}

We now study the higher relative commutants:

\begin{definition}[First derived tower]
Given the Jones tower, we define the \emph{first derived tower} by
\begin{equation}
    \mathcal Y_k
    :=
    \mathcal A_B'
    \cap
    \mathcal A_{k-1}~,
    \qquad
    k\geq0~.
    \label{eq:Yk-definition}
\end{equation}
The algebra \(\mathcal Y_k\) consists of all operators in
\(\mathcal A_{k-1}\) that commute with \(\mathcal A_B\).
\end{definition}
The first term in the derived tower is
\begin{equation}
    \mathcal Y_0
    =
    \mathcal A_B'
    \cap
    \mathcal A_B
    =
    \mathbb C1~,
\end{equation}
since we have assumed that \(\mathcal A_B\) is a factor. The second term $\mathcal Y_1$ is given by the first relative commutant, which we denote by
\begin{equation}
  \mathcal A_H
    :=
    \mathcal A_B'
    \cap
    \mathcal A_{BH}
    ~.
    \label{eq:AI-relative-commutant}
\end{equation} 
\begin{definition}[Irreducible inclusion]
\label{def:irreducible-inclusion}
The inclusion \(\mathcal A_{B}\subset\mathcal A_{BH}\) is \emph{irreducible} if
\begin{equation}
    \mathcal A_H = \mathcal A_B'\cap\mathcal A_{BH}
    =
    \mathbb C1~.
\end{equation}
Otherwise, the inclusion is \emph{reducible}.
\end{definition}

For a finite-index
inclusion of factors, each \(\mathcal Y_k\) is finite-dimensional and
\begin{equation}
\dim_{\mathbb{C}}(\mathcal{Y}_{k}) \leq [\mathcal{A}_{BH}:\mathcal{A}_B]^{k}~.
\end{equation}
The conditional expectations in the Jones tower restrict to conditional
expectations between the relative commutants,
\begin{equation}
    E_{k-1}\big|_{\mathcal Y_k}:
    \mathcal Y_k
    \longrightarrow
    \mathcal Y_{k-1}~.
    \label{eq:restricted-expectation-Y}
\end{equation}
The inclusions
\begin{equation}
    \mathcal Y_{k-1}
    \subset
    \mathcal Y_k
    \subset
    \mathcal Y_{k+1}
\end{equation}
therefore inherit a compatible sequence of finite-dimensional
conditional expectations.

\begin{definition}[Finite depth]
\label{def:finite-depth}
The inclusion
\begin{equation}
    \mathcal A_B
    \subset
    \mathcal A_{BH}
\end{equation}
is said to have \emph{finite depth} if there exists an integer \(k\geq1\) such
that
\begin{equation}
    \mathcal Y_{k-1}
    \subset
    \mathcal Y_k
    \subset
    \mathcal Y_{k+1}
    \label{eq:finite_depth1}
\end{equation}
is an instance of the Jones basic construction. Equivalently,
\begin{equation}
    \mathcal Y_{k+1}
    =
    \langle
        \mathcal Y_k,e_k
    \rangle~,
    \label{eq:Y-basic-construction}
\end{equation}
where the relevant conditional expectations and Jones projections are
obtained by restricting those of the original Jones tower. The smallest such \(k\) is called the \emph{depth} of the inclusion.
\end{definition}

At and beyond the depth, the inclusions in the derived tower have the
same index as the original inclusion,
\begin{equation}
    [\mathcal Y_k:\mathcal Y_{k-1}]
    =
    [\mathcal A_{BH}:\mathcal A_B]~.
    \label{eq:finitedepth2}
\end{equation}
If this condition holds for some \(k\), then the corresponding basic
construction and index relations continue to hold for all \(n\geq k\).

\subsubsection*{Depth one inclusions}

Suppose first that the inclusion has depth one. Then
\begin{equation}
    \mathbb C1
    =
    \mathcal A_B'
    \cap
    \mathcal A_B
    \subset
    \mathcal A_B'
    \cap
    \mathcal A_{BH}
    \subset
    \mathcal A_B'
    \cap
    \mathcal B_{BH}
    \label{eq:depth-one-basic-construction}
\end{equation}
is a basic construction. We thus get that the corresponding index equality is
\begin{equation}
    [
        \mathcal A_H
        :
        \mathbb C1
    ]
    =
    [
        \mathcal A_{BH}
        :
        \mathcal A_B
    ]~,
    \label{eq:depth-one-AI-index}
\end{equation}
where $\mathcal A_H
    =
    \mathcal A_B'
    \cap
    \mathcal A_{BH}$. For the trace-preserving conditional expectation,
the index of the finite-dimensional inclusion
\(\mathbb C1\subset\mathcal A_H\) agrees with the complex vector space
dimension of \(\mathcal A_H\). Hence, we have
\begin{equation}
    \dim_{\mathbb C}(\mathcal A_H)
    =
    [
        \mathcal A_{BH}
        :
        \mathcal A_B
    ]~.
    \label{eq:depth-one-dimension}
\end{equation}

\subsubsection*{Depth two inclusions}

We write
\begin{equation}
\mathcal{B}_{H} := \mathcal{A}_B'\cap \mathcal{B}_{BH}~, \qquad  \mathcal{B}_{HI}:=\mathcal{A}_B'
    \cap
    \mathcal A_2~.
\end{equation}
For a depth two inclusion, the sequence 
\begin{equation}
    \mathcal{A}_H
    \subset \mathcal{B}_{H}
    \subset
    \mathcal{B}_{HI} ~,
    \label{eq:depth2}
\end{equation}
is an instance of the basic construction, i.e. $\mathcal{B}_{HI}=\langle \mathcal{B}_H, e_H\rangle$. We distinguish two cases:
\begin{itemize}
\item We first consider an \emph{irreducible} inclusion, i.e. with 
\begin{equation}
    \mathcal A_H
    =
    \mathcal A_B'
    \cap
    \mathcal A_{BH}
    =
    \mathbb C1~.
    \label{eq:irreducible-inclusion}
\end{equation}
In this case, we have the index equality
\begin{equation}
    \dim_{\mathbb C}
    \left(
        \mathcal B_{H}
    \right)
    =[
      \mathcal B_{H}
        :
        \mathbb C1
    ] =
    [
        \mathcal A_{BH}
        :
        \mathcal A_B
    ]~.
    \label{eq:irreducible-depth-two-dimension}
\end{equation}

\item We then consider a \emph{reducible} inclusion, i.e. with
\begin{equation}
    \mathcal A_H
    \neq
    \mathbb C1~.
    \label{eq:reducible-inclusion}
\end{equation}
In that case, the depth two index relation becomes
\begin{equation}
    [
    \mathcal{B}_H
        :
        \mathcal A_H
    ]
    =
    [
        \mathcal A_{BH}
        :
        \mathcal A_B
    ]~.
    \label{eq:reducible-depth-two-index}
\end{equation}

Using multiplicativity of the index,
\begin{align}
    [
        \mathcal{B}_H
        :
        \mathbb C1
    ]
    =
    [
        \mathcal{B}_H
        :
        \mathcal A_H
    ]
    [
        \mathcal A_H
        :
        \mathbb C1
    ]=
    [
        \mathcal A_{BH}
        :
        \mathcal A_B
    ]
    [
        \mathcal A_H
        :
        \mathbb C1
    ]~,
    \label{eq:depth-two-index-multiplicativity}
\end{align}
this becomes
\begin{equation}
    \dim_{\mathbb C}
    \left(
        \mathcal{B}_H
    \right)
    =
    [
        \mathcal A_{BH}
        :
        \mathcal A_B
    ]
    \dim_{\mathbb C}(\mathcal A_H)~.
    \label{eq:depth-two-dimension-reducible}
\end{equation}
\end{itemize}
More generally, if the inclusion has depth \(k\), then for every
\(n\geq k\), one can show that:
\begin{equation}
    \dim_{\mathbb C}(\mathcal Y_n)
    =
    [
        \mathcal A_{BH}
        :
        \mathcal A_B
    ]^{\,n-k+1}
    \dim_{\mathbb C}(\mathcal Y_{k-1})~.
    \label{eq:general-relative-commutant-dimension}
\end{equation}

It will also be useful to introduce the second derived tower:

\begin{definition}[Second derived tower]
The \emph{second derived tower} is defined by:
\begin{equation}
    \mathcal X_k
    :=
    \mathcal A_{BH}'
    \cap
    \mathcal A_{k-1}~,
    \qquad
    k\geq1~.
    \label{eq:Xk-definition}
\end{equation}
The algebra \(\mathcal X_k\) consists of all operators in
\(\mathcal A_{k-1}\) that commute with \(\mathcal A_{BH}\).
\end{definition}
Because we have the inclusion
\begin{equation}
    \mathcal A_{BH}'
    \subset
    \mathcal A_B'~,
\end{equation}
the second derived tower is contained in the first one. The two towers may therefore be arranged into the diagram
\begin{equation}
\begin{matrix}
\mathbb C1=\mathcal Y_0
&\subset&
\mathcal Y_1
&\subset&
\mathcal Y_2
&\subset&
\mathcal Y_3
&\subset&
\cdots
\\
&&\cup&&\cup&&\cup
\\
&&\mathbb C1=\mathcal X_1
&\subset&
\mathcal X_2
&\subset&
\mathcal X_3
&\subset&
\cdots~.
\end{matrix}
\label{eq:two-derived-towers}
\end{equation}
For each \(n\), this diagram contains a square of inclusions of the form:
\begin{equation}
\begin{matrix}
\mathcal Y_n
&\subset&
\mathcal Y_{n+1}
\\
\cup&&\cup
\\
\mathcal X_n
&\subset&
\mathcal X_{n+1}
\end{matrix}
\label{eq:commuting_square}
\end{equation}
One can show that this is an example of a commuting square:

\begin{definition}[Commuting square]
\label{def:commuting-square}
A diagram of von Neumann algebras
\begin{equation}
\begin{matrix}
\mathcal A&\subset&\mathcal D\\
\cup&&\cup\\
\mathcal C&\subset&\mathcal B
\end{matrix}
\label{eq:abstract-commuting-square}
\end{equation}
is a commuting square if the
corresponding conditional expectations commute:
\begin{equation}
    E_{\mathcal A}E_{\mathcal B}
    = E_{\mathcal B}E_{\mathcal A} =
    E_{\mathcal C}~.
\end{equation}
Equivalently,
\begin{equation}
    E_{\mathcal A}(\mathcal B)
    \subset
    \mathcal C~.
\end{equation}
\end{definition}

The square in \eqref{eq:commuting_square} is a commuting square because
the expectations on both the \(\mathcal X\)- and \(\mathcal Y\)-towers
are obtained by restricting the same conditional expectation
\begin{equation}
    E_{n}:
    \mathcal A_{n}
    \longrightarrow
    \mathcal A_{n-1}~,
\end{equation}
to the respective subalgebras.

\begin{definition}[Non-degenerate commuting square]
\label{def:nondegenerate-commuting-square}
Let
\begin{equation}
\begin{matrix}
\mathcal A&\subset&\mathcal D\\
\cup&&\cup\\
\mathcal C&\subset&\mathcal B
\end{matrix}
\end{equation}
be a commuting square of finite-index inclusions. The commuting square is
called \emph{non-degenerate} if any of the following
equivalent conditions are satisfied:
\begin{enumerate}

    \item The algebra \(\mathcal D\) is generated as a right
    \(\mathcal A\)-module by \(\mathcal B\). In other words, the linear
    span of
    \begin{equation}
        \{ba
        \mid
        b\in\mathcal B,\,
        a\in\mathcal A\}
    \end{equation}
    is weakly dense in \(\mathcal D\).

    \item The indices of the inclusions $\mathcal{A}\subset \mathcal{D}$ and $\mathcal{C}\subset \mathcal{B}$ satisfy
    \begin{equation}
        [\mathcal D:\mathcal A]
        =
        [\mathcal B:\mathcal C]~.
    \end{equation}
\end{enumerate}
\end{definition}
If \(\mathcal A_B\subset\mathcal A_{BH}\) has depth \(k\), the square \eqref{eq:commuting_square} is non-degenerate for each $n\geq k$. Consequently,
\begin{equation}
    \mathcal X_k
    \subset
    \mathcal X_{k+1}
    \subset
    \mathcal X_{k+2}
    \label{eq:X-basic-construction}
\end{equation}
is also an instance of the basic construction and we have the index equality
\begin{equation}
    [
        \mathcal X_{k+1}
        :
        \mathcal X_k
    ]
    =
    [
        \mathcal Y_{k+1}
        :
        \mathcal Y_k
    ]
    =
    [
        \mathcal A_{BH}
        :
        \mathcal A_B
    ]~.
    \label{eq:XY-index-equality}
\end{equation}

\subsection*{Shifted relative commutants}

We next relate the derived towers to relative commutants of the form
\begin{equation}
    \mathcal A_{-k}'
    \cap
    \mathcal A_{BH}~.
\end{equation}
The precise relation depends on the parity of \(k\), since odd and even
levels of the Jones tunnel are obtained from \(\mathcal A_B\) and
\(\mathcal A_{BH}\), respectively.

\begin{proposition}
\label{prop:tunnel-derived-tower-identification}
Let
\begin{equation}
    \mathcal A_B
    \subset
    \mathcal A_{BH}
\end{equation}
be a finite-index inclusion with Jones tower and tunnel
\(\{\mathcal A_j\}_{j\in\mathbb Z}\), and let
\begin{equation}
    \mathcal Y_j
    :=
    \mathcal A_B'
    \cap
    \mathcal A_{j-1}~,
    \qquad
    \mathcal X_j
    :=
    \mathcal A_{BH}'
    \cap
    \mathcal A_{j-1}~,
\end{equation}
be the first and second derived tower. Then the following statements hold:
\begin{itemize}
\item If \(k=2m+1\) is odd, with \(m\in\mathbb Z_{\geq0}\), then
\begin{equation}
    \mathcal A_{-k}'
    \cap
    \mathcal A_B
    =
    \gamma^m(\mathcal Y_{k-1})~, \quad \mathcal A_{-k}'
    \cap
    \mathcal A_{BH}
    =
    \gamma^m(\mathcal Y_k)~,\quad 
    \mathcal A_{-k}'
    \cap
    \mathcal B_{BH}
    =
    \gamma^m(\mathcal Y_{k+1})~.
\end{equation}
\item If \(k=2m\) is even, with \(m\in\mathbb Z_{\geq1}\), then
\begin{equation}
    \mathcal A_{-k}'
    \cap
    \mathcal A_B=
    \gamma^m(\mathcal X_k)~,\quad 
    \mathcal A_{-k}'
    \cap
    \mathcal A_{BH}=
    \gamma^m(\mathcal X_{k+1})~,\quad 
    \mathcal A_{-k}'
    \cap
    \mathcal B_{BH}=
    \gamma^m(\mathcal X_{k+2})~.
\end{equation}
\end{itemize}
\end{proposition}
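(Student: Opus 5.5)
The plan is to reduce everything to a single mechanism: $\gamma$ is implemented by the unitary $\Gamma$, so $\mathrm{Ad}\,\Gamma^\dagger$ maps commutants to commutants and intersections to intersections. Hence for any $j$ and $l$,
\begin{equation}
\gamma^m\bigl(\mathcal A_j'\cap\mathcal A_l\bigr)=\gamma^m(\mathcal A_j)'\cap\gamma^m(\mathcal A_l)=\mathcal A_{j-2m}'\cap\mathcal A_{l-2m}~,
\end{equation}
where the last equality uses \eqref{eq:gamma-shifts-tower} iterated $m$ times; this holds for negative indices as well, by the definition \eqref{eq:Ak-definition}. Commutants are taken in $B(\mathcal H_\Phi)$, and $\Gamma$ is a unitary on that space, so no issue of the ambient algebra arises. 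All six identities then follow by choosing $j$ and $l$ appropriately.

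\emph{Odd case $k=2m+1$.} Here $\mathcal Y_n=\mathcal A_{-1}'\cap\mathcal A_{n-1}$. The displayed identity with $j=-1$ gives
\begin{equation}
\gamma^m(\mathcal Y_n)=\mathcal A_{-1-2m}'\cap\mathcal A_{n-1-2m}=\mathcal A_{-k}'\cap\mathcal A_{n-k}~.
\end{equation}
Setting $n=k-1,\,k,\,k+1$ yields $\mathcal A_{-k}'\cap\mathcal A_{-1}$, $\mathcal A_{-k}'\cap\mathcal A_0$ and $\mathcal A_{-k}'\cap\mathcal A_1$. These are exactly the three claimed algebras, since $\mathcal A_{-1}=\mathcal A_B$, $\mathcal A_0=\mathcal A_{BH}$ and $\mathcal A_1=\mathcal B_{BH}$.

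\emph{Even case $k=2m$.} Here $\mathcal X_n=\mathcal A_0'\cap\mathcal A_{n-1}$, and the same identity with $j=0$ gives
\begin{equation}
\gamma^m(\mathcal X_n)=\mathcal A_{-2m}'\cap\mathcal A_{n-1-2m}=\mathcal A_{-k}'\cap\mathcal A_{n-1-k}~.
\end{equation}
Setting $n=k,\,k+1,\,k+2$ gives the three claimed algebras with $\mathcal A_{-1},\mathcal A_0,\mathcal A_1$ in the second slot. This explains why the $\mathcal X$-indices in the statement are shifted by one relative to the odd case.

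The only genuine point to verify, and the expected main obstacle, is the shift property \eqref{eq:gamma-shifts-tower}. It is needed for all $k\in\mathbb Z$ and as an actual equality $\gamma(\mathcal A_k)=\mathcal A_{k-2}$, not merely the inclusion of Proposition \ref{prop:canonical-endomorphism-range-direct}. By \eqref{eq:Ak-definition}, the tower is indexed through powers of $\gamma^{-1}$ applied to $\mathcal A_B$ and $\mathcal A_{BH}$, so $\gamma(\mathcal A_{2k-1})=\gamma^{-(k-1)}(\mathcal A_B)=\mathcal A_{2k-3}$, and similarly for even indices. This is tautological once $\gamma$ is regarded as the bijection $\mathrm{Ad}\,\Gamma^\dagger$ on $B(\mathcal H_\Phi)$.

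If one instead insists on the intrinsic tower $\mathcal A_{k+1}=\langle\mathcal A_k,e_k\rangle$, I would first check that the two descriptions agree at the bottom. For the tunnel step this means $\mathcal B_{BH}=J^\Phi_{BH}\mathcal A_{RH}J^\Phi_{BH}=\gamma^{-1}(\mathcal A_B)$, as in \eqref{eq:basic-construction-gamma-direct}. For the next tower step, I would show $\langle\mathcal B_{BH},e_1\rangle=\gamma^{-1}(\mathcal A_{BH})$ by transporting the basic construction through $\mathrm{Ad}\,\Gamma$; this works because unitary conjugation carries a Jones projection to the Jones projection of the conjugated inclusion. Uniqueness of the basic construction up to spatial isomorphism then propagates the identification inductively to all levels. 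With this in hand, the six identities are immediate from the first display.
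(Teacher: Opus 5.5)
Your proposal is correct and follows essentially the same route as the paper: both arguments rest on $\gamma^m=\mathrm{Ad}\,(\Gamma^\dagger)^m$ being spatially implemented, so it commutes with taking commutants and intersections, combined with the shift $\gamma(\mathcal A_j)=\mathcal A_{j-2}$, which is tautological from the definition \eqref{eq:Ak-definition}. Your single identity $\gamma^m(\mathcal A_j'\cap\mathcal A_l)=\mathcal A_{j-2m}'\cap\mathcal A_{l-2m}$ just packages the six cases that the paper checks one at a time, and your final paragraph on the intrinsic basic-construction tower is not needed for the statement as posed.
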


\begin{proof}
Suppose first that \(k\) is odd. Writing
\begin{equation}
    k=2m+1~,
    \qquad
    m\in\mathbb Z_{\geq0}~,
\end{equation}
the definition
\begin{equation}
    \mathcal A_{2j-1}
    =
    \gamma^{-j}(\mathcal A_B)~,
    \qquad
    \mathcal A_{2j}
    =
    \gamma^{-j}(\mathcal A_{BH})
\end{equation}
gives
\begin{equation}
    \mathcal A_{-k}
    =
    \gamma^m(\mathcal A_B)~.
    \label{eq:odd-k-Aminus}
\end{equation}
The same shift relation yields
\begin{equation}
    \mathcal A_B
    =
    \gamma^m(\mathcal A_{k-2})~,
    \qquad
    \mathcal A_{BH}
    =
    \gamma^m(\mathcal A_{k-1})~,
    \qquad
    \mathcal B_{BH}
    =
    \gamma^m(\mathcal A_k)~.
    \label{eq:odd-k-gamma-relations}
\end{equation}

Since \(\gamma^m\) is spatially implemented on the standard
representation, it preserves commutants and intersections. In
particular, we obtain
\begin{equation}
    \gamma^m(\mathcal Y_{k-1}) =
    \gamma^m
    \left(
        \mathcal A_B'
        \cap
        \mathcal A_{k-2}
    \right)=
    \gamma^m(\mathcal A_B)'
    \cap
    \gamma^m(\mathcal A_{k-2})
    =
    \mathcal A_{-k}'
    \cap
    \mathcal A_B~.
\end{equation}
This proves the first equality. The other equalities are proved in a similar fashion.

Suppose now that \(k\) is even. Writing
\begin{equation}
    k=2m~,
    \qquad
    m\in\mathbb Z_{\geq1}~,
\end{equation}
the definition of the Jones tunnel gives
\begin{equation}
    \mathcal A_{-k}
    =
    \gamma^m(\mathcal A_{BH})~.
    \label{eq:even-k-Aminus}
\end{equation}
Moreover,
\begin{equation}
    \mathcal A_B
    =
    \gamma^m(\mathcal A_{k-1})~,
    \qquad
    \mathcal A_{BH}
    =
    \gamma^m(\mathcal A_k)~,
    \qquad
    \mathcal B_{BH}
    =
    \gamma^m(\mathcal A_{k+1})~.
    \label{eq:even-k-gamma-relations}
\end{equation}
Using again that \(\gamma^m\) preserves relative commutants and intersections, we get
\begin{equation}
    \gamma^m(\mathcal X_k)=
    \gamma^m
    \left(
        \mathcal A_{BH}'
        \cap
        \mathcal A_{k-1}
    \right)=
    \gamma^m(\mathcal A_{BH})'
    \cap
    \gamma^m(\mathcal A_{k-1})
    \nonumber=
    \mathcal A_{-k}'
    \cap
    \mathcal A_B~,
\end{equation}
which proves the required relation. The other equalities are proved in a similar fashion.
\end{proof}

We now derive the following result:

\begin{proposition}
\label{prop:depth-k-tunnel-basic-construction}
Suppose that $\mathcal A_B
    \subset
    \mathcal A_{BH}$ has depth \(k\). Then
\begin{equation}
    \mathcal A_{-k}'
    \cap
    \mathcal A_B
    \subset
    \mathcal A_{-k}'
    \cap
    \mathcal A_{BH}
    \subset
    \mathcal A_{-k}'
    \cap
    \mathcal B_{BH}
    \label{eq:relative_commutants}
\end{equation}
is an instance of the Jones basic construction. Consequently, we have the index equality:
\begin{equation}
    \left[
        \mathcal A_{-k}'
        \cap
        \mathcal A_{BH}
        :
        \mathcal A_{-k}'
        \cap
        \mathcal A_B
    \right]
    =
    \left[
        \mathcal A_{BH}
        :
        \mathcal A_B
    \right]~.
    \label{eq:index_equality}
\end{equation}
\end{proposition}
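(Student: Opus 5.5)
The plan is to transport the depth-$k$ basic construction in the derived towers down the tunnel with $\gamma^m$, using Proposition~\ref{prop:tunnel-derived-tower-identification}. Since $\gamma^m$ is implemented by the unitary $\Gamma^{m}$, it maps a basic construction to a basic construction. It carries Jones projections to Jones projections, and it intertwines the conditional expectations of the tower, since $\gamma(\mathcal A_j)=\mathcal A_{j-2}$ and the minimal expectations are unique and hence mapped to each other. Once the triple in (\ref{eq:relative_commutants}) is identified with the $\gamma^m$-image of a triple known to be a basic construction, the claim follows. The index equality (\ref{eq:index_equality}) is then read off from (\ref{eq:finitedepth2}) or (\ref{eq:XY-index-equality}), because index is invariant under spatial isomorphism.

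I split into the two parity cases.

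\emph{Odd $k=2m+1$.} By Proposition~\ref{prop:tunnel-derived-tower-identification}, the triple (\ref{eq:relative_commutants}) equals $\gamma^m$ applied to $\mathcal Y_{k-1}\subset\mathcal Y_k\subset\mathcal Y_{k+1}$. This is exactly the basic construction in Definition~\ref{def:finite-depth}, with $\mathcal Y_{k+1}=\langle\mathcal Y_k,e_k\rangle$. Applying $\gamma^m$ gives
\[
\gamma^m(\mathcal Y_{k+1})=\langle\gamma^m(\mathcal Y_k),\gamma^m(e_k)\rangle .
\]
Here $\gamma^m(e_k)$ is the Jones projection $e_{k-2m}=e_1$, i.e.\ $e_H$, for the inclusion $\mathcal A_{BH}\subset\mathcal B_{BH}$. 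One checks that it lies in $\mathcal A_{-k}'\cap\mathcal B_{BH}$: we have $e_1\in\mathcal A_2$, and $e_1$ commutes with $\mathcal A_0\supset\mathcal A_{-k}$. Hence it implements the restricted conditional expectation $E_H$ on $\mathcal A_{-k}'\cap\mathcal A_{BH}$.

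\emph{Even $k=2m$.} Here the identification is with $\gamma^m$ applied to $\mathcal X_k\subset\mathcal X_{k+1}\subset\mathcal X_{k+2}$. I invoke the non-degenerate commuting square result (\ref{eq:X-basic-construction}), which says that this $\mathcal X$-triple is a basic construction with index $[\mathcal A_{BH}:\mathcal A_B]$. Transporting by $\gamma^m$ concludes this case in the same way as the odd case.

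The main obstacle is the even case. It relies on the non-degeneracy of the square (\ref{eq:commuting_square}) at depth $k$, which the paper asserts but does not prove. If I had to supply this, I would argue via the index criterion in Definition~\ref{def:nondegenerate-commuting-square}. First, show that $\mathcal Y_{k+1}$ is spanned by $\mathcal X_{k+1}\mathcal Y_k$, using a Pimsner--Popa basis of $\mathcal A_k\subset\mathcal A_{k+1}$ that can be chosen in $\mathcal X_{k+2}$. The natural choice is $\{\lambda^{-1/2}... \}$-type elements $x_ie_k$ built from a basis of $\mathcal X_{k+1}$ over $\mathcal X_k$, where finite depth guarantees that the index matches. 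Second, conclude that the restricted expectations and the projection $e_{k+1}$ realize $\mathcal X_{k+2}=\langle\mathcal X_{k+1},e_{k+1}\rangle$.

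A secondary point is that the minimal conditional expectations restrict compatibly to the relative commutants. This follows from the commuting-square property stated earlier. It also guarantees that the index computed inside the finite-dimensional relative commutants, with respect to the restricted trace, is the original Jones--Kosaki index.
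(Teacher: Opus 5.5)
Your strategy matches the paper's. You split by parity and use Proposition~\ref{prop:tunnel-derived-tower-identification} to identify the triple with the $\gamma^m$-image of $\mathcal Y_{k-1}\subset\mathcal Y_k\subset\mathcal Y_{k+1}$ (odd $k$) or of $\mathcal X_k\subset\mathcal X_{k+1}\subset\mathcal X_{k+2}$ (even $k$). You then transport the basic construction and its index along the spatially implemented $\gamma^m$. For even $k$ you, like the paper, rely on the asserted non-degeneracy of \eqref{eq:commuting_square}. Your sketch of that is not a proof, but the paper gives none either, so it is not a gap relative to the paper.

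The one verification you add, however, is off by one level of the tower. The projection that generates $\mathcal A_{-k}'\cap\mathcal B_{BH}$ from $\mathcal A_{-k}'\cap\mathcal A_{BH}$ cannot be $e_1=e_H$:
\begin{itemize}
\item You note yourself that $e_1\in\mathcal A_2$. For index $>1$ it does not lie in $\mathcal B_{BH}=\mathcal A_1$, so your membership check does not establish what you need.
\item $E_H$ is an expectation $\mathcal B_{BH}\to\mathcal A_{BH}$, not the expectation of the lower inclusion $\mathcal A_{-k}'\cap\mathcal A_B\subset\mathcal A_{-k}'\cap\mathcal A_{BH}$.
\end{itemize}

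The slip comes from \eqref{eq:Y-basic-construction}. Since $\mathcal A_k=\langle\mathcal A_{k-1},e_{k-1}\rangle$, it should read $\mathcal Y_{k+1}=\langle\mathcal Y_k,e_{k-1}\rangle$. This agrees with the depth-two statement $\mathcal B_{HI}=\langle\mathcal B_H,e_H\rangle$, where $e_H=e_1$. With the correction, the transported projection is $\gamma^m(e_{k-1})=e_0=e_I$ for $k=2m+1$. Likewise, for $k=2m$ it is $\gamma^m(e_k)=e_0$, since $\mathcal X_{k+2}=\langle\mathcal X_{k+1},e_k\rangle$.

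You can also check this directly:
\begin{itemize}
\item $e_I\in\mathcal B_{BH}$ commutes with $\mathcal A_B\supset\mathcal A_{-k}$.
\item By the bimodule property, $E_I$ maps $\mathcal A_{-k}'\cap\mathcal A_{BH}$ into $\mathcal A_{-k}'\cap\mathcal A_B$.
\item Hence $e_Ixe_I=E_I(x)e_I$ exhibits $e_I$ as the Jones projection of the restricted $E_I$.
\end{itemize}
With $e_H$ replaced by $e_I$ and $E_H$ by $E_I$, your argument goes through.
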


\begin{proof}
Suppose first that \(k\) is odd. By definition of depth \(k\), the
sequence
\begin{equation}
    \mathcal Y_{k-1}
    \subset
    \mathcal Y_k
    \subset
    \mathcal Y_{k+1}
    \label{eq:Y-depth-k-basic-construction}
\end{equation}
is an instance of the Jones basic construction. By
Proposition~\ref{prop:tunnel-derived-tower-identification}, applying
\(\gamma^m\), where \(k=2m+1\), maps this sequence to
\begin{equation}
    \mathcal A_{-k}'
    \cap
    \mathcal A_B
    \subset
    \mathcal A_{-k}'
    \cap
    \mathcal A_{BH}
    \subset
    \mathcal A_{-k}'
    \cap
    \mathcal B_{BH}~.
\end{equation}
Since \(\gamma^m\) is spatially implemented, it preserves the basic
construction and its index. Hence,
\begin{equation}
    \left[
        \mathcal A_{-k}'
        \cap
        \mathcal A_{BH}
        :
        \mathcal A_{-k}'
        \cap
        \mathcal A_B
    \right]
    =
    [
        \mathcal Y_k
        :
        \mathcal Y_{k-1}
    ]=
    [
        \mathcal A_{BH}
        :
        \mathcal A_B
    ]~.
\end{equation}

Suppose next that \(k\) is even. At the stabilization level, the
corresponding sequence 
\begin{equation}
    \mathcal X_k
    \subset
    \mathcal X_{k+1}
    \subset
    \mathcal X_{k+2}
    \label{eq:X-depth-k-basic-construction}
\end{equation}
is an instance of the Jones basic construction. By
Proposition~\ref{prop:tunnel-derived-tower-identification}, applying
\(\gamma^m\), where \(k=2m\), maps this sequence to
\begin{equation}
    \mathcal A_{-k}'
    \cap
    \mathcal A_B
    \subset
    \mathcal A_{-k}'
    \cap
    \mathcal A_{BH}
    \subset
    \mathcal A_{-k}'
    \cap
    \mathcal B_{BH}~.
\end{equation}
Therefore,
\begin{equation}
    \left[
        \mathcal A_{-k}'
        \cap
        \mathcal A_{BH}
        :
        \mathcal A_{-k}'
        \cap
        \mathcal A_B
    \right]
    =
    [
        \mathcal X_{k+1}
        :
        \mathcal X_k
    ] = [
        \mathcal A_{BH}
        :
        \mathcal A_B
    ]~.
\end{equation}
\end{proof}

\begin{remark}
The parity distinction arises because the Jones tunnel alternates between
images of \(\mathcal A_B\) and \(\mathcal A_{BH}\):
\begin{equation}
    \mathcal A_{-(2m+1)}
    =
    \gamma^m(\mathcal A_B)~,
    \qquad
    \mathcal A_{-2m}
    =
    \gamma^m(\mathcal A_{BH})~.
\end{equation}
Consequently, odd levels are naturally related to the first derived
tower \(\mathcal Y_k\), whereas even levels are naturally related to the
second derived tower \(\mathcal X_k\).
\end{remark}

\subsection{Pimsner--Popa bases}
\label{app:Pimsner-Popa-bases}

We now introduce a notion of basis for \(\mathcal A_{BH}\), regarded as
a right module over \(\mathcal A_B\).

\begin{definition}[Pimsner--Popa basis]
\label{def:Pimsner-Popa-frame}
A finite family
\begin{equation}
    \{a_i\}_{i=1}^r
    \subset
    \mathcal A_{BH}
\end{equation}
is called a \emph{(right) Pimsner--Popa basis} for the expectation
\(E_I:\mathcal A_{BH}\to\mathcal A_B\) if every
\(x\in\mathcal A_{BH}\) admits the reconstruction
\begin{equation}
    x
    =
    \sum_{i=1}^r
    a_iE_I(a_i^\dagger x)~,
    \label{eq:Pimsner-Popa-reconstruction}
\end{equation}
with the property that
\begin{equation}
    E_I(a_i^\dagger a_j)
    =
    \delta_{ij}1_{\mathcal A_B}~.
    \end{equation}
\end{definition}

In our definition, we have assumed that the basis is orthonormal with respect to the natural \(\mathcal A_B\)-valued inner product on
\(\mathcal A_{BH}\), i.e.
\begin{equation}
    \langle a,b\rangle_{\mathcal A_B}
    :=
    E_I(a^\dagger b)~.
    \label{eq:AB-valued-inner-product}
\end{equation} 
The coefficients
\begin{equation}
    x_i
    :=
    E_I(a_i^\dagger x)
    \in
    \mathcal A_B
    \label{eq:PP-coefficients}
\end{equation}
are \(\mathcal A_B\)-valued coefficients of \(x\) with respect to the basis. There is a direct relation between the Pimsner--Popa basis and the index:

\begin{definition}[Index]
\label{def:index-element}
Let
\begin{equation}
    E_I:
    \mathcal A_{BH}
    \longrightarrow
    \mathcal A_B
\end{equation}
be a finite-index conditional expectation, and let
\begin{equation}
    \{a_i\}_{i=1}^r
    \subset
    \mathcal A_{BH}
\end{equation}
be a (two-sided) Pimsner--Popa basis. The \emph{Watatani index} of
\(E_I\) is defined by
\begin{equation}
    \operatorname{Ind}(E_I)
    :=
    \sum_{i=1}^r
    a_ia_i^\dagger~.
    \label{eq:index-element}
\end{equation}
\end{definition}

This index element is independent of the choice of two-sided
Pimsner--Popa basis and belongs to the center of
\(\mathcal A_{BH}\). If \(\mathcal A_{BH}\) is a factor and \(E_I\) is
the minimal conditional expectation, then
\begin{equation}
    \operatorname{Ind}(E_I)
    =
    [\mathcal A_{BH}:\mathcal A_B]\,
    \mathbf{1}~.
\end{equation}
\begin{remark}
The number of elements in a Pimsner--Popa frame is generally \emph{not} equal to
the Jones index. Such an equality requires additional assumptions.
\end{remark}

To study the localization of Pimsner--Popa bases, we consider the square
\begin{equation}
\begin{matrix}
\mathcal A_B
&\subset&
\mathcal A_{BH}
\\
\cup&&\cup
\\
\mathcal A_{-k}'
\cap
\mathcal A_B
&\subset&
\mathcal A_{-k}'
\cap
\mathcal A_{BH}~.
\end{matrix}
\label{eq:localized-commuting-square}
\end{equation}
The vertical inclusions are natural inclusions, and the lower expectation
is obtained by restricting \(E_I\) to the respective relative commutant. For a depth-\(k\) inclusion, the index equality
\eqref{eq:index_equality} of Proposition \ref{prop:depth-k-tunnel-basic-construction} implies that the square is non-degenerate. We now prove the following:

 \begin{proposition}[Localization of Pimsner--Popa bases]
\label{prop:localized-Pimsner-Popa-frame}
Assume that the square
\begin{equation}
\begin{matrix}
\mathcal A_B
&\subset&
\mathcal A_{BH}
\\
\cup&&\cup
\\
\mathcal A_{-k}'
\cap
\mathcal A_B
&\subset&
\mathcal A_{-k}'
\cap
\mathcal A_{BH}
\end{matrix}
\label{eq:localized-square}
\end{equation}
is a non-degenerate commuting square with respect to the compatible conditional expectations. Then a Pimsner--Popa basis for the lower inclusion
\begin{equation}
    \mathcal A_{-k}'
    \cap
    \mathcal A_B
    \subset
    \mathcal A_{-k}'
    \cap
    \mathcal A_{BH}
\end{equation}
is also a Pimsner--Popa basis for the original inclusion
\begin{equation}
    \mathcal A_B
    \subset
    \mathcal A_{BH}~.
\end{equation}
In particular, the basis elements may be chosen such that
\begin{equation}
    a_i
    \in
    \mathcal A_{-k}'
    \cap
    \mathcal A_{BH}~.
    \label{eq:localized-frame-elements}
\end{equation}
\end{proposition}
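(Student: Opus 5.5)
We write $\mathcal C:=\mathcal A_{-k}'\cap\mathcal A_B$ and $\mathcal D:=\mathcal A_{-k}'\cap\mathcal A_{BH}$. Let $E_0:=E_I|_{\mathcal D}$, and let $\{a_i\}_{i=1}^r\subset\mathcal D$ be a Pimsner--Popa basis for $E_0:\mathcal D\to\mathcal C$. By the commuting square property (Definition~\ref{def:commuting-square}), $E_I(\mathcal D)\subset\mathcal C$, so $E_0$ is well defined and agrees with $E_I$ on $\mathcal D$. Orthonormality then transfers at once:
\[
E_I(a_i^\dagger a_j)=E_0(a_i^\dagger a_j)=\delta_{ij}\mathbf 1 ,
\]
because $a_i^\dagger a_j\in\mathcal D$. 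The claim about where the basis elements lie, (\ref{eq:localized-frame-elements}), holds by construction. What remains is the reconstruction formula (\ref{eq:Pimsner-Popa-reconstruction}) for an arbitrary $x\in\mathcal A_{BH}$.

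For the reconstruction we use non-degeneracy in the module form of Definition~\ref{def:nondegenerate-commuting-square}, applied with the roles $\mathcal A=\mathcal A_B$, $\mathcal B=\mathcal D$, $\mathcal C=\mathcal C$, $\mathcal D=\mathcal A_{BH}$. It says that the span of products $d\,b$ with $d\in\mathcal D$ and $b\in\mathcal A_B$ is weakly dense in $\mathcal A_{BH}$. We first check the formula on such products. Take $x=d\,b$ with $d\in\mathcal D$ and $b\in\mathcal A_B$. The bimodule property (\ref{eq:EI-bimodule-property-direct}) gives $E_I(a_i^\dagger d b)=E_I(a_i^\dagger d)\,b=E_0(a_i^\dagger d)\,b$. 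Hence
\[
\sum_i a_iE_I(a_i^\dagger x)=\Bigl(\sum_i a_iE_0(a_i^\dagger d)\Bigr)b=d\,b=x,
\]
where the middle equality is the lower reconstruction formula. By linearity the identity then holds on the whole span.

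To pass to all of $\mathcal A_{BH}$, note that the map $x\mapsto \sum_{i=1}^r a_iE_I(a_i^\dagger x)$ is a finite sum of maps of the form left multiplication composed with $E_I$. Since $E_I$ is normal, this map is $\sigma$-weakly continuous. The identity map is $\sigma$-weakly continuous as well. Two $\sigma$-weakly continuous maps that agree on a weakly dense subspace agree everywhere, so the formula extends to all of $\mathcal A_{BH}$.

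The main obstacle is this density step. The non-degeneracy condition gives weak density of the span, whereas normality gives continuity in the $\sigma$-weak topology. On a convex set such as a linear span, weak and $\sigma$-weak closures coincide, by Krein--Smulian or the bicommutant theorem, so the two notions match. Alternatively, if the algebraic span is in fact all of $\mathcal A_{BH}$, no limit argument is needed. This is what happens in the finite-index setting, where $\mathcal A_{BH}$ is a finitely generated projective $\mathcal A_B$-module and a finite basis already spans. The commutation $E_I\circ E_{\mathcal D}=E_0\circ E_{\mathcal D}$ is not needed beyond $E_I(\mathcal D)\subset\mathcal C$.
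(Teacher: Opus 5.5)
Your main argument matches the paper's proof step for step: restrict $E_I$ to $\mathcal D=\mathcal A_{-k}'\cap\mathcal A_{BH}$ using $E_I(\mathcal D)\subset\mathcal C$, verify the reconstruction on products $db$ with the bimodule property, and extend by normality. Your explicit check that $E_I(a_i^\dagger a_j)=\delta_{ij}\mathbf 1$ is a harmless addition that the paper leaves implicit.

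What fails is your last paragraph, which is the step you yourself single out as the main obstacle. Weak-operator and $\sigma$-weak closures of convex sets, even of linear subspaces, do not coincide in general. Take $\omega=\mathrm{Tr}(\rho\,\cdot\,)$ with $\rho$ trace class of infinite rank. Then $\ker\omega$ is a proper $\sigma$-weakly closed subspace of $B(\mathcal H)$. But $\omega$ is not weak-operator continuous, so $\ker\omega$ is weak-operator dense. Krein--\v{S}mulian only identifies the two topologies on bounded sets. The bicommutant theorem needs a unital $^*$-algebra, which $\mathrm{span}\,\mathcal D\mathcal A_B$ is not in general. Your fallback is circular: $\mathrm{span}\,\mathcal D\mathcal A_B=\mathcal A_{BH}$ is a consequence of the proposition. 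It does not follow from $\mathcal A_{BH}$ being finitely generated projective over $\mathcal A_B$; already $t\,L^\infty[0,1]$ is a $\sigma$-weakly dense, proper, singly generated submodule of the free module $L^\infty[0,1]$.

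The paper takes the same step without comment (``extends by weak continuity''), so it effectively reads ``weakly dense'' in the non-degeneracy condition as $\sigma$-weakly dense. Under that reading your proof is complete, and the last paragraph should simply be dropped.

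If you want to work with weak-operator density, do not try to make $E_I$ weak-operator continuous; move the limit onto the Jones projection instead. Work on $\mathcal H_{\Psi_I}$, where $e_I|\Psi_I\rangle=|\Psi_I\rangle$ and $|\Psi_I\rangle$ is separating for $\mathcal A_{BH}$. The relation $e_Iye_I=E_I(y)e_I$ gives
\[
\sum_i a_iE_I(a_i^\dagger x)\,|\Psi_I\rangle=p\,x\,|\Psi_I\rangle~,\qquad p:=\sum_i a_ie_Ia_i^\dagger~.
\]
For every vector $\eta$, the functional $x\mapsto\langle\eta|(p-1)x|\Psi_I\rangle$ is weak-operator continuous and vanishes on the dense span, hence on all of $\mathcal A_{BH}$. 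Thus $\sum_i a_iE_I(a_i^\dagger x)|\Psi_I\rangle=x|\Psi_I\rangle$ for every $x\in\mathcal A_{BH}$, and the separating property gives the reconstruction formula.
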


\begin{proof}
Let
\begin{equation}
    \{a_i\}_{i=1}^r
    \subset
    \mathcal A_{-k}'
    \cap
    \mathcal A_{BH}
\end{equation}
be a Pimsner--Popa frame for the lower inclusion. Then, for every
\begin{equation}
    x
    \in
    \mathcal A_{-k}'
    \cap
    \mathcal A_{BH}~,
\end{equation}
the reconstruction formula reads
\begin{equation}
    x
    =
    \sum_i
    a_i
    E_I|_{\mathcal A_{-k}'\cap\mathcal A_{BH}}
    \left(
        a_i^* x
    \right)=\sum_i
    a_i
    E_I
    \left(
        a_i^* x
    \right)~,
    \label{eq:lower-frame-reconstruction}
\end{equation}
because the
conditional expectation for the lower inclusion is the restriction of
the original conditional expectation
\begin{equation}
    E_I:
    \mathcal A_{BH}
    \longrightarrow
    \mathcal A_B~.
\end{equation}
Non-degeneracy of the commuting square implies that
\(\mathcal A_{BH}\) is generated as a right \(\mathcal A_B\)-module by
the algebra $\mathcal A_{-k}'
        \cap
        \mathcal A_{BH}$. Equivalently, the linear span
\begin{equation}
    \operatorname{span}
    \left\{
        x b
        \,\middle|\,
        x\in
        \mathcal A_{-k}'
        \cap
        \mathcal A_{BH},
        \ 
        b\in\mathcal A_B
    \right\}
    \label{eq:module-dense-subspace}
\end{equation}
is weakly dense in \(\mathcal A_{BH}\). Consider an element of this dense subspace. Using the right \(\mathcal A_B\)-module property of \(E_I\), we obtain
\begin{align}
    \sum_i
    a_i
    E_I
    \left(
        a_i^\dagger x b
    \right)
    &=
    \sum_i
    a_i
    E_I
    \left(
        a_i^\dagger x
    \right)b
    \nonumber\\
    &=
    x b~.
    \label{eq:frame-reconstruction-module-span}
\end{align}
By linearity, the same reconstruction formula holds on the entire
linear span in \eqref{eq:module-dense-subspace}. Finally, since \(E_I\) is normal and the sum over \(i\) is finite, the
map
\begin{equation}
    a
    \longmapsto
    \sum_i
    a_i
    E_I
    \left(
        a_i^\dagger a
    \right)
\end{equation}
is normal. The reconstruction formula therefore extends by weak
continuity to every \(a\in\mathcal A_{BH}\). Thus, \(\{a_i\}_{i=1}^r\) is a Pimsner--Popa frame for
\(\mathcal A_B\subset\mathcal A_{BH}\).
\end{proof}

\begin{figure}[t]
    \centering    \includegraphics[width=0.7\linewidth]{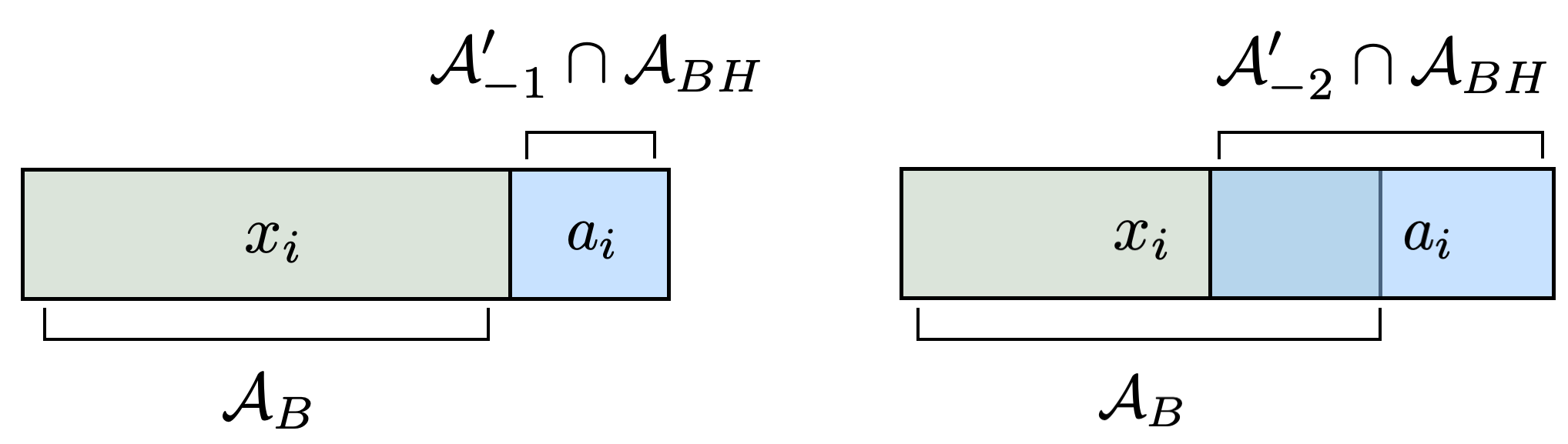}
    \caption{Pimsner--Popa bases for the inclusion $\mathcal{A}_B\subset \mathcal{A}_{BH}$. Left: A depiction of a depth one inclusion, where one can choose a Pimsner--Popa basis $\{a_i
    \}$ to be localized in the first relative commutant $\mathcal{A}_H=\mathcal{A}_{-1}\cap \mathcal{A}_{BH}$. Right: A depiction of a depth two inclusion, where one can choose a Pimsner--Popa basis $\{a_i\}$ to be localized in the second relative commutant $\mathcal{A}_{-2}'\cap \mathcal{A}_{BH}$.}
    \label{fig:pimsner-popa}
\end{figure}

We will study the depth one and two examples in more detail:

\subsubsection*{Depth one inclusion}

For a depth one inclusion, Proposition \ref{prop:localized-Pimsner-Popa-frame} implies that the Pimsner--Popa basis \(\{a_i\}_{i=1}^r\) for $\mathcal{A}_{B}\subset \mathcal{A}_{BH}$ may be
chosen inside the first relative commutant, i.e.
\begin{equation}
    a_i \in
    \mathcal A_B'
    \cap
    \mathcal A_{BH}
    =
    \mathcal A_H~.
    \label{eq:depth-one-localized-basis}
\end{equation}

Since the Pimsner--Popa basis is assumed to be orthonormal, the elements \(\{a_i\}\) form an
ordinary vector space basis of \(\mathcal A_H\). In that case, the number of basis elements is given by the index
\begin{equation}
   r= \dim_{\mathbb C}(\mathcal A_H)
    =
    [
        \mathcal A_{BH}
        :
        \mathcal A_B
    ]~.
    \label{eq:depth-one-number-elements}
\end{equation}

\subsubsection* {Depth two inclusion}

For a depth two inclusion, the localization shows that a Pimsner--Popa basis $\{a_i\}_{i=1}^r$ for the inclusion $\mathcal{A}_B\subset \mathcal{A}_{BH}$ can be chosen such that:
\begin{equation}
    a_i
    \in
    \mathcal A_{-2}'\cap \mathcal{A}_{BH}=\gamma(\mathcal A_{BH})'
    \cap
    \mathcal A_{BH}~.
    \label{eq:depth-two-localized-basis}
\end{equation}
The index of this inclusion is
\begin{equation}
    [
        \gamma(\mathcal A_{BH})'
        \cap
        \mathcal A_{BH}
        :
        \gamma(\mathcal A_{BH})'
        \cap
        \mathcal A_B
    ]
    =
    [
        \mathcal A_{BH}
        :
        \mathcal A_B
    ]~,
    \label{eq:depth-two-localized-index}
\end{equation}
from which it follows that 
\begin{equation}
    \dim_{\mathbb C}
    \left(
        \gamma(\mathcal A_{BH})'
        \cap
        \mathcal A_{BH}
    \right)
    =
    [
        \mathcal A_{BH}
        :
        \mathcal A_B
    ]
    \dim_{\mathbb C}
    \left(
        \gamma(\mathcal A_{BH})'
        \cap
        \mathcal A_B
    \right)~.
    \label{eq:depth-two-localized-dimension}
\end{equation}

If the inclusion is furthermore irreducible, the lower relative commutant is trivial, i.e. 
\begin{equation}\gamma(\mathcal A_{BH})'
    \cap
    \mathcal A_B
    =  J_{B}^{\Phi}(\mathcal{A}_B'\cap \mathcal{A}_{BH})J_{B}^{\Phi}=
    \mathbb C\mathbf{1}~.
\end{equation}
In that case,
\begin{equation}
    r= \dim_{\mathbb C}
    \left(
        \gamma(\mathcal A_{BH})'
        \cap
        \mathcal A_{BH}
    \right)
    =
    [
        \mathcal A_{BH}
        :
        \mathcal A_B
    ]~,
    \label{eq:depth-two-irreducible-dimension}
\end{equation}
so the number of Pimsner--Popa basis elements is given by the index $[
        \mathcal A_{BH}
        :
        \mathcal A_B
    ]$.

\subsection{The algebraic Fourier transform}
\label{app:canonical-intertwiners}

We now introduce the algebraic Fourier transform associated
with a finite index inclusion
\begin{equation}
    \mathcal A_B
    \subset
    \mathcal A_{BH}~,
\end{equation}
of depth two. Our definition will be intrinsic using the Jones projections and the canonical conditional expectation. Recall that the basic constructions
\begin{equation}
    \mathcal B_{BH}
    =
    \langle
        \mathcal A_{BH},
        e_I
    \rangle~,
    \qquad
    \mathcal B_{RH}
    =
    \langle
        \mathcal A_{RH},
        e_H
    \rangle
\end{equation}
contain two distinguished relative commutants:
\begin{equation}
    \mathcal B_I
    :=
    \mathcal B_{RH}
    \cap
    \mathcal A_R'~,
    \qquad
    \mathcal B_H
    :=
    \mathcal B_{BH}
    \cap
    \mathcal A_B'~.
    \label{eq:BIBH}
\end{equation}
These algebras play symmetric roles and will be identified through the algebraic Fourier transform. To define the map, we recall that for a depth two inclusion the following sequence
\begin{equation}
\mathcal{A}_H \subset \mathcal{B}_{H} \subset \mathcal{B}_{HI}~,
\end{equation}
is an instance of Jones basic construction, i.e. 
\begin{equation}
    \mathcal B_{HI}
=
    \langle \mathcal B_{H},e_H \rangle~.
    \label{eq:BIH}
\end{equation}
The conditional expectation $\mathcal{E}_I:\langle \mathcal{B}_{BH},e_H\rangle \to \mathcal{B}_{BH}$, which is related to $E_I:\mathcal{A}_{BH}\to \mathcal{A}_B$ by an application of the canonical endomorphism,
\begin{equation} \label{eq:cond_curly_EI}
\mathcal{E}_I := \gamma^{-1} \circ E_I \circ \gamma~,
\end{equation}
restricts to a conditional expectation on the corresponding relative
commutant. Without loss of generality, we denote its restriction by
\begin{equation}
    \mathcal E_I:
    \mathcal B_{HI}
    \longrightarrow
    \mathcal B_H~.
\end{equation}
We can now introduce the following map:

\begin{definition}[algebraic Fourier transform]
\label{def:quantum-fourier-transform}
The \emph{algebraic Fourier transform} is the linear map
\begin{equation}
    \mathcal F:
    \mathcal B_I
    \longrightarrow
    \mathcal B_H
\end{equation}
defined by
\begin{equation}
    \mathcal F(a)
    :=
    d_I^{3}
    \,
    \mathcal E_I
    \!\left(
        e_He_Ia
    \right)~,
    \label{eq:Fourier-transform}
\end{equation}
\end{definition}
where
\begin{equation}
d_I = [\mathcal{A}_{BH}:\mathcal{A}_B]^{1/2}~.
\end{equation}

The algebraic Fourier transform is invertible. To show this, we will give an explicit expression for the inverse map. Introducing the relative commutant
\begin{equation}
\mathcal A_I
    :=
    \mathcal A_{RH}
    \cap
    \mathcal A_R'~,
\end{equation}
the sequence 
\begin{equation}
\mathcal{A}_I \subset \mathcal{B}_{I} \subset \mathcal{B}_{HI}~,
\end{equation}
is part of the first derived tower associated to the inclusion $\mathcal{A}_R\subset \mathcal{A}_{RH}$. For a depth two inclusion, the algebra $\mathcal{B}_{HI}$ is given by Jones basic construction, i.e.
\begin{equation}
\mathcal{B}_{HI}=\langle \mathcal{B}_I,e_I\rangle~.
\end{equation}
We denote by 
\begin{equation}
    \mathcal E_H':
    \mathcal B_{HI}
    \longrightarrow
    \mathcal B_I
\end{equation}
the conditional expectation \(\mathcal{E}_H': \langle \mathcal{B}_{RH},e_I\rangle \to \mathcal B_{RH}\), restricted to
the appropriate relative commutant. 

We then have the following result:

\begin{proposition}[Inverse algebraic Fourier transform]
\label{prop:inverse-quantum-Fourier-transform}
The inverse algebraic Fourier transform
\begin{equation}
    \mathcal F^{-1}:
    \mathcal B_H
    \longrightarrow
    \mathcal B_I
\end{equation}
is given by
\begin{equation}
    \mathcal F^{-1}(b)
    =
    d_I^3\,
    \mathcal E_H'
    \!\left(
        e_Ie_Hb
    \right)~.
    \label{eq:inverse-Fourier-transform}
\end{equation}
It satisfies
\begin{equation}
    \mathcal F^{-1}\circ\mathcal F
    =
    \operatorname{id}_{\mathcal B_I}~,
    \qquad
    \mathcal F\circ\mathcal F^{-1}
    =
    \operatorname{id}_{\mathcal B_H}~.
\end{equation}
\end{proposition}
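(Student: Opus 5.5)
The strategy is to compute $\mathcal F^{-1}\circ\mathcal F$ directly inside the finite-dimensional algebra $\mathcal B_{HI}$. Only three ingredients are needed: the Temperley--Lieb relations \eqref{TL2}, the Jones relations $e\,x\,e=E(x)\,e$, and the bimodule property of the conditional expectations $\mathcal E_I$ and $\mathcal E'_H$. The working picture is the standard one for depth-two irreducible inclusions (Ocneanu). $\mathcal B_H$ and $\mathcal B_I$ are mutually ``orthogonal'' subalgebras of $\mathcal B_{HI}$. By the depth-two assumption, $\mathcal B_{HI}=\langle\mathcal B_H,e_H\rangle=\langle\mathcal B_I,e_I\rangle$, and in the irreducible case both have dimension $d_I^2$. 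Consequently $\mathcal B_{HI}$ is spanned by products $b\,e_I\,e_H\,a$ with $b\in\mathcal B_H$ and $a\in\mathcal B_I$, or equivalently by $a'\,e_H\,e_I\,b'$.

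The key steps are as follows.

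\textbf{Step 1.} I first establish the ``pull-through'' identities
\begin{equation}
e_H\,\mathcal E_I(e_He_Ia)\;=\;d_I^{-2}\,e_H e_I a\,\Big|_{\text{appropriately}},
\end{equation}
or, more usefully, a Fourier-transform formula for the product with a Jones projection:
\begin{equation}
\mathcal F(a)\,e_H=d_I\,e_He_I a\,e_H\cdot(\text{const}).
\end{equation}
Concretely, I use that $e_H$ is the Jones projection for $\mathcal E_I$ restricted to $\mathcal B_{HI}\to\mathcal B_H$, since $\mathcal B_{HI}=\langle\mathcal B_H,e_H\rangle$ is a basic construction. This gives $e_H x e_H=\mathcal E_I(x)e_H$ for $x\in\mathcal B_{HI}$. (I expect to need to verify that the projection implementing $\mathcal E_I$ on $\mathcal B_{HI}$ is indeed $e_H$, by the transport $\mathcal E_I=\gamma^{-1}\circ E_I\circ\gamma$.) Applying this with $x=e_Ie_Ha$ gives
\begin{equation}
e_H e_I e_H\, a\, e_H = \mathcal E_I(e_He_Ia)\,e_H,
\end{equation}
after moving $a$, which commutes with $e_H$ as an element of the relative commutant. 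Then \eqref{TL2} turns the left side into $d_I^{-2}e_Ha e_H$. This equation lets me move $\mathcal F(a)$ next to $e_H$ and identify it modulo multiplication by a Jones projection. The analogous identity holds for $\mathcal F^{-1}$ with $e_I$ and $\mathcal E'_H$.

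\textbf{Step 2.} I compute
\begin{equation}
\mathcal F^{-1}(\mathcal F(a))=d_I^3\,\mathcal E'_H\bigl(e_Ie_H\mathcal F(a)\bigr).
\end{equation}
I rewrite $e_H\mathcal F(a)=d_I^3\,e_H\mathcal E_I(e_He_Ia)$. Because $\mathcal E_I(y)e_H=e_Hye_H$, the term $e_I\,e_H\,\mathcal F(a)$ becomes $d_I^3\,e_Ie_He_Ia\,e_H$ after using that $\mathcal F(a)\in\mathcal B_H$ commutes appropriately; this has to be checked carefully with adjoints, using $\mathcal F(a)^\dagger$. By \eqref{TL2} this equals $d_I\,e_I a\,e_H$. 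Hence $\mathcal F^{-1}\mathcal F(a)=d_I^4\,\mathcal E'_H(e_I a e_H)$. Next, $a\in\mathcal B_I$ is fixed by the bimodule property, and $e_I$ commutes past $a$ within the basic construction for $\mathcal B_{HI}=\langle\mathcal B_I,e_I\rangle$. This gives $d_I^4\,a\,\mathcal E'_H(e_Ie_H)$. The Markov-type relation $E_{k+1}(e_k)=d_I^{-2}$, transported to $\mathcal E'_H$, together with a second Markov step, should yield $\mathcal E'_H(e_Ie_H)=d_I^{-4}$. The result is $a$. The composition $\mathcal F\circ\mathcal F^{-1}=\mathrm{id}$ follows by the $H\leftrightarrow I$ symmetry, exchanging the roles of the two towers via $J_H,J_I$.

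\textbf{Main obstacle.} The hard part is the bookkeeping of which elements commute with which projection, and the exact normalisation of $\mathcal E'_H(e_Ie_H)$. In particular, one must justify that $e_I$ and $e_H$ satisfy the Markov property with respect to the restricted expectations on $\mathcal B_{HI}$, not just on the ambient tower. If the naive commutation fails, my fallback is a dimension argument. I show that $\mathcal F$ is injective by checking $\tau(\mathcal F(a)^\dagger\mathcal F(a))=\tau(a^\dagger a)$ (Plancherel) via the same Jones-relation manipulations. Since $\dim\mathcal B_I=\dim\mathcal B_H=d_I^2$, this makes $\mathcal F$ bijective. The formula for $\mathcal F^{-1}$ is then confirmed on the spanning set by the Step 2 computation, which needs the Markov trace property only once.
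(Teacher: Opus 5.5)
\textbf{There is a genuine gap, in Step 1.} The identity you rely on, $e_H x e_H=\mathcal{E}_I(x)e_H$ for $x\in\mathcal{B}_{HI}$, is false. For $x=e_H\in\mathcal{B}_{HI}$ it would give $e_H=\mathcal{E}_I(e_H)e_H=d_I^{-2}e_H$.

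The projection $e_H$ implements the expectation one step lower in the tower, $E_H:\mathcal{B}_{BH}\to\mathcal{A}_{BH}$. That is, $e_Hye_H=E_H(y)e_H$ holds only for $y\in\mathcal{B}_{BH}$ (in particular $y\in\mathcal{B}_H$). Transporting $E_I$ by $\gamma^{-1}$, as you proposed, shows that $\mathcal{E}_I$ is implemented by $\tilde e_I=\gamma^{-1}(e_I)$. That projection sits one step higher in the tower, outside $\mathcal{B}_{HI}$. So the check you postponed fails.

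Step 2 does not close either, for three further reasons:
\begin{itemize}
\item It reuses the false identity.
\item $e_I$ does not commute with $a\in\mathcal{B}_I$. Take $a=e_H$, which lies in $\mathcal{B}_I$; the Temperley--Lieb relations forbid $e_Ie_H=e_He_I$ when $d_I>1$.
\item $\mathcal{E}'_H(e_Ie_H)=\mathcal{E}'_H(e_I)e_H=d_I^{-2}e_H$ is an operator, not $d_I^{-4}$. The number $d_I^{-4}$ is the trace of $e_Ie_H$, not its conditional expectation.
\end{itemize}
Evaluated correctly, your own intermediate expression $d_I^4\,\mathcal{E}'_H(e_Iae_H)$ equals $d_I^2\,ae_H$, which is not $a$. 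Your fallback reuses the same manipulations. Moreover, bijectivity of $\mathcal{F}$ alone would not show that the stated formula is its inverse.

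Your reason for $[a,e_H]=0$ (``as an element of the relative commutant'') is also invalid, since $e_H$ itself lies in the noncommutative algebra $\mathcal{B}_I$. The commutation happens to hold for irreducible inclusions, where $e_H$ is a minimal central projection of $\mathcal{B}_I$, but it needs proof and is not needed.

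\textbf{The missing idea} is the correct pull-through relation, the Pimsner--Popa ``pull-down'' identity
\[
ze_H=d_I^2\,\mathcal{E}_I(ze_H)\,e_H,\qquad z\in\mathcal{B}_{HI}=\langle\mathcal{B}_H,e_H\rangle.
\]
It is trivial for $z\in\mathcal{B}_H$. For $z=xe_Hy$ with $x,y\in\mathcal{B}_H$ it follows from $e_Hye_H=E_H(y)e_H$, the bimodule property, and $\mathcal{E}_I(e_H)=d_I^{-2}$. Apply it to $z=a^\dagger e_I$ and take adjoints to get $e_H\mathcal{F}(a)=d_I\,e_He_Ia$. Then
\[
\mathcal{F}^{-1}(\mathcal{F}(a))=d_I^4\,\mathcal{E}'_H(e_Ie_He_Ia)=d_I^2\,\mathcal{E}'_H(e_I)\,a=a,
\]
and the other composition follows by exchanging $H\leftrightarrow I$.

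This is precisely the paper's argument in repackaged form. The paper reduces by linearity to $a=xe_Hy$ with $x,y\in\mathcal{B}_H$. It uses $e_He_Ixe_H=E_H(e_Ix)e_H$ and $\mathcal{E}_I(e_H)=d_I^{-2}$ to reach $d_I^2\,\mathcal{E}'_H(e_Ia)$, and finishes with the bimodule property and $\mathcal{E}'_H(e_I)=d_I^{-2}$.
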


\begin{proof}
For \(a\in\mathcal B_I\), substituting
\eqref{eq:Fourier-transform} into
\eqref{eq:inverse-Fourier-transform} gives
\begin{align}
    \mathcal F^{-1}\!\left(\mathcal F(a)\right)
    &=
    d_I^6\,
    \mathcal E_H'
    \!\left[
        e_Ie_H\,
        \mathcal E_I
        \!\left(
            e_He_Ia
        \right)
    \right]~.
    \label{eq:inverse-Fourier-composition-I}
\end{align}
We need to simplify the above expression. First note that $\mathcal{B}_I $ is contained in the basic extension algebra 
\begin{equation}
\mathcal{B}_{HI} =\langle\mathcal{B}_H,e_H \rangle ~,
\end{equation}
The set of elements $xe_Hy$ with $x,y\in \mathcal{B}_{H}$ is weakly dense in $\mathcal{B}_{HI}$ so it suffices to prove our statement for elements $a$ of the form $a=xe_Hy$. Using that the conditional expectation 
\begin{equation}
\mathcal{E}_I: \mathcal{B}_{HI}\to \mathcal{B}_H~,
\end{equation}
satisfies $\mathcal{E}_I(e_H)=d_I^{-2}$, we find 
\begin{equation}
\mathcal{E}_I(e_He_Ixe_Hy)=\mathcal{E}_I(E_H(e_Ix)e_Hy)=d_I^{-2}E_H(e_Ix)y~.
\end{equation}
At the first equality, we have used the identity 
\begin{equation}
e_Hxe_H=e_HE_H(x)
\end{equation}
with $E_H:\mathcal{B}_H\to \mathcal{A}_H$. Plugging this back into \eqref{eq:inverse-Fourier-composition-I} and using the Jones relation \(e_Ie_He_I = d_I^{-2} e_I\) we find that 
\begin{equation}
    \mathcal F^{-1}\!\left(\mathcal F(a)\right)
    =
    d_I^6\,
    \mathcal E_H'
    \!\left[
        e_Ie_H d_I^{-2}E_H(e_Ix)y
    \right] =  d_I^4\,
    \mathcal E_H'
    \!\left[
        e_Ie_He_Ixe_H y
    \right]= d_I^2\,
    \mathcal E_H'
    \!\left[
        e_I a
    \right]~.
\end{equation}
Since $ae_H\in \mathcal{B}_I$, we can use the bi-module property for the conditional expectation
\begin{equation}
\mathcal{E}_H': \langle\mathcal{B}_I,e_I\rangle \to \mathcal{B}_I~,
\end{equation}
combined with $\mathcal{E}_H'(e_I)=d_I^{-2}$ to obtain our final result
\begin{equation}
    \mathcal F^{-1}\!\left(\mathcal F(a)\right)
    =d_I^2\,
    \mathcal E_H'
    \!\left[
        e_I a
    \right] = d_I^2\,d_I^{-2} a
    =a~.
\end{equation}

By exchanging the roles of \(I\) and \(H\) in the above argument, one can prove $\mathcal F\circ\mathcal F^{-1}(b)=b$ for \(b\in\mathcal B_H\) in a similar fashion.
\end{proof}

We have thus shown the following result:
\begin{proposition}
\label{prop:Fourier-isomorphism}
The algebraic Fourier transform defines a vector space isomorphism
\begin{equation}
    \mathcal F:
    \mathcal B_I
    \xrightarrow{\;\cong\;}
    \mathcal B_H~.
\end{equation}
\end{proposition}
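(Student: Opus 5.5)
The statement follows from Proposition~\ref{prop:inverse-quantum-Fourier-transform} together with the fact that $\mathcal F$ is linear and maps $\mathcal B_I$ into $\mathcal B_H$. The plan has three steps: check that $\mathcal F$ and $\mathcal F^{-1}$ are well-defined linear maps between the stated algebras, then invoke the two composition identities to obtain bijectivity.

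\emph{Linearity.} This is immediate from \eqref{eq:Fourier-transform} and \eqref{eq:inverse-Fourier-transform}. Each map is a composition of left multiplication by the fixed operators $e_He_I$ (respectively $e_Ie_H$), a conditional expectation, and the scalar $d_I^3$. Conditional expectations are linear maps.

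\emph{Ranges.} Take $a\in\mathcal B_I\subset\mathcal B_{HI}$. The projections $e_H,e_I$ lie in $\mathcal B_{HI}$, since $\mathcal B_{HI}=\langle\mathcal B_H,e_H\rangle=\langle\mathcal B_I,e_I\rangle$ by the depth two property. Hence $e_He_Ia\in\mathcal B_{HI}$. The restricted conditional expectation $\mathcal E_I:\mathcal B_{HI}\to\mathcal B_H$ then lands in $\mathcal B_H$.

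I expect this restriction to be the only point needing care. The argument is as follows. Recall $\mathcal E_I=\gamma^{-1}\circ E_I\circ\gamma$ maps $\langle\mathcal B_{BH},e_H\rangle$ onto $\mathcal B_{BH}$. Because it is $\mathcal B_{BH}$-bimodular, it is in particular $\mathcal A_B$-bimodular. So for $y\in\mathcal A_B'$ and $u\in\mathcal A_B$,
\[
u\,\mathcal E_I(y)=\mathcal E_I(uy)=\mathcal E_I(yu)=\mathcal E_I(y)\,u .
\]
This shows that $\mathcal E_I$ sends $\mathcal A_B'\cap\mathcal A_2=\mathcal B_{HI}$ into $\mathcal A_B'\cap\mathcal B_{BH}=\mathcal B_H$. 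The same argument, with $\mathcal A_R$ in place of $\mathcal A_B$, gives $\mathcal E'_H(\mathcal B_{HI})\subset\mathcal B_I$. Therefore $\mathcal F^{-1}$ maps $\mathcal B_H$ into $\mathcal B_I$.

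\emph{Bijectivity.} Proposition~\ref{prop:inverse-quantum-Fourier-transform} gives $\mathcal F^{-1}\circ\mathcal F=\mathrm{id}_{\mathcal B_I}$ and $\mathcal F\circ\mathcal F^{-1}=\mathrm{id}_{\mathcal B_H}$. The first identity makes $\mathcal F$ injective and the second makes it surjective. Hence $\mathcal F$ is a linear bijection, i.e.\ a vector space isomorphism $\mathcal B_I\cong\mathcal B_H$.

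In the irreducible case, consistency can also be checked by counting dimensions. By \eqref{eq:irreducible-depth-two-dimension}, $\dim\mathcal B_H=[\mathcal A_{BH}:\mathcal A_B]$. Applying the same argument on the radiation side gives $\dim\mathcal B_I=[\mathcal A_{RH}:\mathcal A_R]$, and the two indices are equal. In this finite-dimensional setting, either one-sided inverse identity would therefore already be enough.
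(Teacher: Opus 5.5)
Your proposal is correct and follows essentially the same route as the paper, which reads the isomorphism directly off the two-sided inverse identities $\mathcal F^{-1}\circ\mathcal F=\mathrm{id}_{\mathcal B_I}$ and $\mathcal F\circ\mathcal F^{-1}=\mathrm{id}_{\mathcal B_H}$ in Proposition~\ref{prop:inverse-quantum-Fourier-transform}. Your $\mathcal A_B$- and $\mathcal A_R$-bimodularity argument showing that $\mathcal E_I$ and $\mathcal E'_H$ restrict to maps $\mathcal B_{HI}\to\mathcal B_H$ and $\mathcal B_{HI}\to\mathcal B_I$ is a useful addition, since the paper only asserts this restriction.
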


\begin{remark}
Note that the algebraic Fourier transform does not preserve the naive algebraic structure associated to the relative commutants. Instead, one has to introduce a modified product on $\mathcal{B}_H$ given by
\begin{equation}
\mathcal{F}(a_1) * \mathcal{F}(a_2) :=\mathcal{F}(a_1a_2)~.
\end{equation}

\end{remark}

Note that there is a natural inner product on the relative commutant $\mathcal{B}_I$ given by
\begin{equation}
\langle a_1,a_2 \rangle_I :=\mathcal{E}_I(a_1^* a_2)~, 
\end{equation}
where we restrict $\mathcal{E}_I:\langle \mathcal{B}_{BH},e_H \rangle \to \mathcal{B}_{BH}$ to the relative commutant $\mathcal{B}_I$. Similarly, we have an inner product on $\mathcal{B}_H$ defined in terms of $\mathcal{E}_H'$. The algebraic Fourier transform has the property that it relates 
orthonormal bases within the two relative commutants:

\begin{proposition}[Fourier transform of orthonormal bases]
\label{prop:Fourier-localized-bases}
Given an orthonormal basis $\{ \hat{a}_i \}_{i=1}^r$ for the inclusion $\mathcal{B}_{I}$ with respect to the conditional expectation $\mathcal{E}_I$, the elements
\begin{equation}
    \hat b_i
    :=
    \mathcal F(\hat a_i)~,
    \qquad
    i=1,\ldots,r~
    \label{eq:basis-Fourier-forward}
\end{equation}
constitute an orthonormal basis for $\mathcal{B}_H$ with respect to the conditional expectation $\mathcal{E}_H'$.
\end{proposition}

\begin{proof}
Since
\begin{equation}
    \mathcal F:
    \mathcal B_I
    \xrightarrow{\;\cong\;}
    \mathcal B_H
\end{equation}
is an isomorphism, the elements $\hat b_i
    =
    \mathcal F(\hat a_i)$ form a basis for \(\mathcal B_H\).

What remains is to evaluate the inner product
\begin{equation}
\mathcal{E}_H'(\hat{b}_i^*\hat{b}_j) =\mathcal{E}_H'(\mathcal{F}(\hat{a}_i)^*\mathcal{F}(\hat{a}_j))= d_I^6\,\mathcal{E}_H'(\mathcal{E}_I(\hat{a}_i^{*}e_Ie_H )\mathcal{E}_I(e_He_I\hat{a}_j)) ~.
\end{equation}
From \eqref{eq:cond_curly_EI} we deduce that the conditional expectation $\mathcal{E}_I$ can be implemented by
\begin{equation}
\mathcal{E}_I(x) = \mathcal{V}_I^{\dagger} x \mathcal{V}_I~,
\end{equation}
where the isometry $\mathcal{V}_I$ is related to $V_I$ by $\mathcal{V}_I = \Gamma V_I \Gamma^{\dagger}=\gamma^{-1}(V_I) \in \gamma^{-1}(\mathcal{A}_B)'= \mathcal{B}_{R}$. Similarly, 
\begin{equation}
\mathcal{E}_H'(x) =\mathcal{V}_H^{\dagger}x\mathcal{V}_H~,
\end{equation}
where $\mathcal{V}_H=\Gamma^{\dagger}V_H\Gamma = \gamma(V_H) \in \gamma(\mathcal{A}_R)'=\mathcal{B}_{B}$. We have the projectors
\begin{equation}
\mathcal{V}_I\mathcal{V}_I^{\dagger} = \gamma^{-1}(e_I) =: \tilde{e}_I~, \qquad \mathcal{V}_H\mathcal{V}_H^{\dagger} = \gamma(e_H) =: \tilde{e}_H
\end{equation}
which satisfies the Jones relation 
\begin{equation}
e_H\tilde{e}_I e_H =\frac{1}{d_I^2}e_H~, \qquad e_I\tilde{e}_H e_I =\frac{1}{d_I^2}e_I~.
\end{equation}
Using the above formulas we can rewrite:
\begin{align}
d_I^6\,\mathcal{E}_H'(\mathcal{E}_I(\hat{a}_i^{*}e_Ie_H )\mathcal{E}_I(e_He_I\hat{a}_j)) & =d_I^6\,\mathcal{V}_H^{\dagger}\mathcal{V}_I^{\dagger}\hat{a}_i^{*}e_Ie_H\mathcal{V}_I \mathcal{V}_I^{\dagger}e_He_I\hat{a}_j\mathcal{V}_I\mathcal{V}_H \nonumber \\
&=d_I^6\,\mathcal{V}_H^{\dagger}\mathcal{V}_I^{\dagger}\hat{a}_i^{*}e_Ie_H \tilde{e}_Ie_He_I\hat{a}_j\mathcal{V}_I\mathcal{V}_H  \nonumber \\
&=d_I^2\,\mathcal{V}_H^{\dagger}\mathcal{V}_I^{\dagger}\hat{a}_i^{*}e_I\hat{a}_j\mathcal{V}_I\mathcal{V}_H \nonumber \\
&=d_I^2\,\mathcal{V}_I^{\dagger}\hat{a}_i^{*}\mathcal{V}_H^{\dagger}e_I\mathcal{V}_H\hat{a}_j\mathcal{V}_I \nonumber \\
&=d_I^2\,\mathcal{V}_I^{\dagger}\hat{a}_i^{*}\mathcal{E}_H'(e_I)\hat{a}_j\mathcal{V}_I\nonumber \\
&=\mathcal{V}_I^{\dagger}(\hat{a}_i^{*}\hat{a}_j)\mathcal{V}_I \nonumber \\
&=\mathcal{E}_I(\hat{a}_i^{*}\hat{a}_j)=\delta_{ij}\mathbf{1}~, 
\end{align}
where we have used that $\mathcal{V}_H\in \mathcal{B}_B$ commutes with $\hat{a}_j\mathcal{V}_I \in \mathcal{B}_{RH}$. At the last step, we have used that $\{\hat{a}_i\}$ is orthonormal with respect to the conditional expectation $\mathcal{E}_I:\mathcal{B}_{HI}\to \mathcal{B}_H$. This proves the statement.
\end{proof}

Let $\{a_i\}_{i=1}^{r}$ be a Pimsner--Popa basis for $\mathcal{A}_B\subset \mathcal{A}_{BH}$ localized in the corresponding relative commutant 
\begin{equation}
a_i \in \gamma(\mathcal{A}_{BH})'\cap \mathcal{A}_{BH}~.
\end{equation}
According to Proposition \ref{prop:tunnel-derived-tower-identification}, the operators defined by
\begin{equation}
\hat a_i := \Gamma a_i \Gamma^{\dagger}
\end{equation}
are contained in
\begin{equation}
\hat a_i\in 
    \mathcal{A}_{BH}' \cap \langle \mathcal{B}_{BH},e_H\rangle = \mathcal{B}_I~,
\end{equation}
and they form a Pimsner--Popa basis for the inclusion $\mathcal{B}_{BH}\subset \langle \mathcal{B}_{BH},e_H\rangle$ with the property that
\begin{equation}
\mathcal{E}_I(\hat{a}_i^{*}\hat{a}_j)=\delta_{ij}\mathbf{1}~.
\end{equation}
The previous proposition shows that the elements 
\begin{equation}
\hat b_i = \mathcal{F}(\hat a_i)
\end{equation}
form a basis for $\mathcal{B}_H$ with the property that  \begin{equation}\mathcal{E}_H'(\hat{b}_i^*\hat{b}_j) = \delta_{ij}\mathbf{1}~.\end{equation}
Using that the relevant square of inclusions is a non-degenerate commuting square, we can lift the basis to a Pimsner--Popa basis for the inclusion $\mathcal{B}_{RH}\subset \langle \mathcal{B}_{RH},e_I\rangle$. Consequently, the shifted operators
\begin{equation}
b_i := \Gamma \hat{b}_i \Gamma^{\dagger} \in \mathcal{\gamma}^{-1}(\mathcal{A}_{RH})\cap \mathcal{A}_{RH}~,
\end{equation}
form a Pimsner--Popa basis for the inclusion $\mathcal{A}_R\subset \mathcal{A}_{RH}$ with respect to the conditional expectation $E_H'$.

We end this section with an equivalent expression for the algebraic Fourier transform in terms of the isometric embeddings. Let us write $\mathcal{H}_I$ and $\mathcal{H}_H$ for the GNS representation of the algebras $\mathcal{B}_I$ and $\mathcal{B}_H$ respectively, and $\mathcal{H}_{HI}$ for the one associated to $\mathcal{B}_{HI}$. As before, we introduce the isometries \begin{equation}
\mathcal{V}_I:\mathcal{H}_{H}\to \mathcal{H}_{HI}~,\qquad \mathcal{V}_H:\mathcal{H}_{I}\to \mathcal{H}_{HI}~,
\end{equation}
where $\mathcal{V}_I = \gamma^{-1}(V_I)$, $\mathcal{V}_H =\gamma(V_H)$, so that the conditional expectations
\begin{equation}
\mathcal{E}_I:\mathcal{B}_{HI} \to \mathcal{B}_H~, \qquad \mathcal{E}_H':\mathcal{B}_{HI} \to \mathcal{B}_I
\end{equation}
can be represented as 
\begin{equation}
\mathcal{E}_I(x)=\mathcal{V}_I^{\dagger}x\mathcal{V}_I~, \qquad \mathcal{E}_H'(x)=\mathcal{V}_H^{\dagger}x\mathcal{V}_H
\end{equation}
respectively. 

Given the density operators $\Phi_I\in \mathcal{B}_I, \Phi_H\in \mathcal{B}_H$, we write their associated GNS states by $|\Phi_I\rangle \in \mathcal{H}_I$ and $|\Phi_H\rangle \in \mathcal{H}_H$, and we will use the same notation for their embeddings in the Hilbert space $\mathcal{H}_{HI}$ through $\mathcal{V}_H$ and $\mathcal{V}_I$. 

We now introduce the following isometric maps: 
\begin{equation}
V_I|\Phi_I\rangle = d_I|e_I\Phi_I\rangle~, \qquad V_H|\Phi_H\rangle  =d_I|e_H\Phi_H\rangle~,
\end{equation}
which are the restrictions of the isometries $V_I$ and $V_H$ to the relevant subspaces. This assignment is consistent with the Jones relations, i.e.
\begin{equation}
V_I = d_I V_I V_I^{\dagger}\mathcal{V}_H =d_I e_I \mathcal{V}_H~, \qquad V_H = d_I V_H V_H^{\dagger}\mathcal{V}_I =d_I e_H \mathcal{V}_I~.
\end{equation}

Given a state $|\Psi_{HI}\rangle \in \mathcal{H}_{HI}$, the conjugate maps 
\begin{equation}
V_I^{\dagger}:\mathcal{H}_{HI}\to \mathcal{H}_I~, \qquad V_H^{\dagger}: \mathcal{H}_{HI}\to \mathcal{H}_H
\end{equation}
can now be expressed as:
\begin{equation} 
V_I^{\dagger}|\Psi_{HI}\rangle =d_I|\mathcal{E}_H'(e_I\Psi_{HI})\rangle~, \qquad 
V_H^{\dagger}|\Psi_{HI}\rangle=d_I|\mathcal{E}_I(e_H\Psi_{HI})\rangle~. 
\end{equation}
 Indeed, for $a\in \mathcal{B}_I$ we have 
\begin{align}
\langle \Phi_I|a^*V_I^{\dagger}|\Psi_{HI}\rangle = d_I\langle \Phi_I|a^*\mathcal{V}_H^{\dagger} e_I|\Psi_{HI}\rangle = d_I\langle \Phi_I|a^*|\mathcal{E}_H'(e_I\Psi_{HI})\rangle~,
\end{align}
which shows the required identity. A similar argument can be used for $V_H^{\dagger}$. Note that this assignment is consistent with the Jones relations, i.e.
\begin{equation}
V_I^{\dagger}= d_I \mathcal{V}_H^{\dagger}V_IV_I^{\dagger} = d_I\mathcal{V}_H^{\dagger}e_I~, \qquad V_H^{\dagger}= d_I \mathcal{V}_I^{\dagger}V_HV_H^{\dagger} = d_I\mathcal{V}_I^{\dagger}e_H~.
\end{equation}
As an additional check, one can show that 
\begin{equation}
V_I^{\dagger}V_I|\Phi_I\rangle =d_I^2|\mathcal{E}_H'(e_I\Phi_I)\rangle =|\Phi_I\rangle~,
\end{equation}
and 
\begin{equation}
V_IV_I^{\dagger}|\Psi_{HI}\rangle = d_I^2|e_I\mathcal{E}_H'(e_I\Psi_{HI}))=d_I^2 e_I\tilde{e}_He_I|\Psi_{HI}\rangle = e_I|\Psi_{HI}\rangle~.
\end{equation}

We are now ready to give the expression for the Fourier transform and its inverse in terms of the relevant isometries and their conjugates:

\begin{proposition}
The algebraic Fourier transform satisfies
\begin{equation}
|\mathcal{F}(\Phi_I)\rangle = (V_H^{\dagger}\circ S_H\circ V_I)|\Phi_I\rangle ~,
\end{equation}
where we have introduced the entanglement swap operator $S_H:=d_I e_H$. Similarly, the inverse Fourier transform satisfies
\begin{equation}
|\mathcal{F}^{-1}(\Phi_H)\rangle = (V_I^{\dagger} \circ S_I\circ V_H)|\Phi_H\rangle~,
\end{equation}
where we have introduced the entanglement swap operator $S_I:=d_I e_I$. 
\end{proposition}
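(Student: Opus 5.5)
The plan is to unfold the right-hand side step by step, using only the explicit formulas for the restricted isometries and their conjugates established just before the statement. These are
\[
V_I|\Phi_I\rangle = d_I|e_I\Phi_I\rangle, \qquad V_H^{\dagger}|\Psi_{HI}\rangle = d_I\,|\mathcal{E}_I(e_H\Psi_{HI})\rangle ,
\]
together with $S_H = d_I e_H$. The second identity has already been verified for arbitrary $|\Psi_{HI}\rangle\in\mathcal{H}_{HI}$.

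For the forward transform I compose the three maps in order. The first step gives $V_I|\Phi_I\rangle = d_I|e_I\Phi_I\rangle$. Next, $S_H$ acts in $\mathcal{H}_{HI}$ by left multiplication of the representing operator, since $e_H\in\mathcal{B}_{HI}$ acts on the GNS space of $\mathcal{B}_{HI}$ from the left. This yields $d_I^2|e_He_I\Phi_I\rangle$. Applying $V_H^{\dagger}$ with $\Psi_{HI} = d_I^2\, e_He_I\Phi_I$ then gives
\[
d_I\,\bigl|\mathcal{E}_I\bigl(e_H\, d_I^2\, e_He_I\Phi_I\bigr)\bigr\rangle = d_I^3\,\bigl|\mathcal{E}_I(e_He_I\Phi_I)\bigr\rangle ,
\]
where I use $e_H^2=e_H$. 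By Definition~\ref{def:quantum-fourier-transform} the last expression is exactly $|\mathcal{F}(\Phi_I)\rangle$.

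The inverse statement follows the same pattern with $I$ and $H$ interchanged. The three steps are $V_H|\Phi_H\rangle = d_I|e_H\Phi_H\rangle$, then $S_I=d_Ie_I$, then $V_I^{\dagger}|\Psi_{HI}\rangle = d_I|\mathcal{E}_H'(e_I\Psi_{HI})\rangle$. Composing them gives $d_I^3|\mathcal{E}_H'(e_Ie_H\Phi_H)\rangle$, which by Proposition~\ref{prop:inverse-quantum-Fourier-transform} equals $|\mathcal{F}^{-1}(\Phi_H)\rangle$.

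The main obstacle is bookkeeping rather than algebra. The formulas for $V_H^{\dagger}$ and $V_I^{\dagger}$ act on a general vector of $\mathcal{H}_{HI}$ written as $|\Psi_{HI}\rangle$ with $\Psi_{HI}\in\mathcal{B}_{HI}$. I must therefore check that the intermediate vector $d_I^2 e_He_I\Phi_I$ really is the GNS vector associated with that operator in the tracial GNS representation of $\mathcal{B}_{HI}$, and that $S_H$ acts by left multiplication there. Both facts follow from the consistency relations $V_I = d_I e_I\mathcal{V}_H$ and $V_H^{\dagger} = d_I\mathcal{V}_I^{\dagger}e_H$. In particular, the composite operator is
\[
V_H^{\dagger} S_H V_I = d_I^3\, \mathcal{V}_I^{\dagger} e_H e_I \mathcal{V}_H .
\]
As a cross-check I apply this operator directly to $|\Phi_I\rangle$ and use $\mathcal{E}_I(x) = \mathcal{V}_I^{\dagger}x\mathcal{V}_I$ to recover $d_I^3\mathcal{E}_I(e_He_I\Phi_I)$. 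This uses that $\mathcal{V}_H$ embeds $|\Phi_I\rangle$ as the vector of $\Phi_I$ in $\mathcal{H}_{HI}$, which is the convention fixed in the text.
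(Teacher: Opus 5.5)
Your proposal is correct and is essentially the paper's proof: both just compose the restricted maps $V_I$, $S_H=d_Ie_H$, $V_H^{\dagger}$ (and their $I\leftrightarrow H$ counterparts) defined immediately before the statement. The paper works through $\mathcal{E}_I(x)=\mathcal{V}_I^{\dagger}x\mathcal{V}_I$ and the identity $V_H^{\dagger}S_HV_I=d_I^3\,\mathcal{V}_I^{\dagger}e_He_I\mathcal{V}_H$, which is exactly your cross-check, whereas your main line uses the equivalent $\mathcal{E}_I$-form of $V_H^{\dagger}$, so the definition of $\mathcal{F}$ appears directly.
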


\begin{proof}
Using the representation of the conditional expectation $\mathcal{E}_I$ in terms of the isometry $\mathcal{V}_I$, we find that:
\begin{equation}
\mathcal{F}(\Phi_I) = d_I^3\, \mathcal{V}_I^{\dagger} e_H e_I \Phi_I\mathcal{V}_I~.
\end{equation}
Using the expressions for $V_I,S_H$ and $V_I^{\dagger}$ as defined above, we thus find that
\begin{equation}
|\mathcal{F}(\Phi_I)\rangle = d_I^3\mathcal{V}_I^{\dagger}e_He_I|\Phi_I\rangle=V_H^{\dagger}\circ S_H\circ V_I|\Phi_I\rangle~.
\end{equation}
By a similar argument we have 
\begin{equation}
\mathcal{F}^{-1}(\Phi_H) = d_I^3\, \mathcal{V}_H^{\dagger} e_I e_H \Phi_H\mathcal{V}_H~,
\end{equation}
which shows the result for the inverse Fourier transform. 
\end{proof}

\bibliographystyle{JHEP}
\bibliography{references}

@article{Kang:2018xqy,
    author = "Kang, Monica Jinwoo and Kolchmeyer, David K.",
    title = "{Holographic Relative Entropy in Infinite-Dimensional Hilbert Spaces}",
    eprint = "1811.05482",
    archivePrefix = "arXiv",
    primaryClass = "hep-th",
    doi = "10.1007/s00220-022-04627-z",
    journal = "Commun. Math. Phys.",
    volume = "400",
    number = "3",
    pages = "1665--1695",
    year = "2023"
}

@article{Kang:2019dfi,
    author = "Kang, Monica Jinwoo and Kolchmeyer, David K.",
    title = "{Entanglement wedge reconstruction of infinite-dimensional von Neumann algebras using tensor networks}",
    eprint = "1910.06328",
    archivePrefix = "arXiv",
    primaryClass = "hep-th",
    reportNumber = "CALT-TH-2019-042",
    doi = "10.1103/PhysRevD.103.126018",
    journal = "Phys. Rev. D",
    volume = "103",
    number = "12",
    pages = "126018",
    year = "2021"
}

@article{Gesteau:2021jzp,
    author = "Gesteau, Elliott and Kang, Monica Jinwoo",
    title = "{Nonperturbative gravity corrections to bulk reconstruction}",
    eprint = "2112.12789",
    archivePrefix = "arXiv",
    primaryClass = "hep-th",
    reportNumber = "CALT-TH-2021-042",
    doi = "10.1088/1751-8121/acef7d",
    journal = "J. Phys. A",
    volume = "56",
    number = "38",
    pages = "385401",
    year = "2023"
}

@article{Leutheusser:2025zvp,
    author = "Leutheusser, Samuel and Liu, Hong",
    title = "{Volume as an index of a subalgebra}",
    eprint = "2508.00056",
    archivePrefix = "arXiv",
    primaryClass = "hep-th",
    month = "7",
    year = "2025"
}

@Article{antonini2023cosmology,
author={Antonini, Stefano
and Sasieta, Martin
and Swingle, Brian},
title={Cosmology from random entanglement},
journal={Journal of High Energy Physics},
year={2023},
month={Nov},
day={27},
volume={2023},
number={11},
pages={188},
issn={1029-8479},
doi={10.1007/JHEP11(2023)188},
url={https://doi.org/10.1007/JHEP11(2023)188}
}

@article{Sahu:2025upe,
    author = "Sahu, Abhisek and van der Heijden, Jeremy and Van Raamsdonk, Mark and Zibakhsh, Rana",
    title = "{Algebras for generalized entanglement wedges}",
    eprint = "2511.21852",
    archivePrefix = "arXiv",
    primaryClass = "hep-th",
    doi = "10.1007/JHEP07(2026)192",
    journal = "JHEP",
    volume = "07",
    pages = "192",
    year = "2026"
}

@article{DeBoer:2019kdj,
    author = "de Boer, Jan and Lamprou, Lampros",
    title = "{Holographic Order from Modular Chaos}",
    eprint = "1912.02810",
    archivePrefix = "arXiv",
    primaryClass = "hep-th",
    doi = "10.1007/JHEP06(2020)024",
    journal = "JHEP",
    volume = "06",
    pages = "024",
    year = "2020"
}

@article{Furuya:2023fei,
    author = "Furuya, Keiichiro and Lashkari, Nima and Moosa, Mudassir and Ouseph, Shoy",
    title = "{Information loss, mixing and emergent type III$_{1}$ factors}",
    eprint = "2305.16028",
    archivePrefix = "arXiv",
    primaryClass = "hep-th",
    doi = "10.1007/JHEP08(2023)111",
    journal = "JHEP",
    volume = "08",
    pages = "111",
    year = "2023"
}

@article{Ouseph:2023juq,
    author        = "Ouseph, Shoy and Furuya, Keiichiro and Lashkari, Nima and Leung, Kwing Lam and Moosa, Mudassir",
    title         = "{Local Poincar\'e algebra from quantum chaos}",
    eprint        = "2310.13736",
    archivePrefix = "arXiv",
    primaryClass  = "hep-th",
    doi           = "10.1007/JHEP01(2024)112",
    journal       = "JHEP",
    volume        = "01",
    pages         = "112",
    year          = "2024"
}

@article{deBoer:2025rxx,
    author = "de Boer, Jan and Najian, Bahman and van der Heijden, Jeremy and Zukowski, Claire",
    title = "{Modular chaos, operator algebras, and the Berry phase}",
    eprint = "2505.04682",
    archivePrefix = "arXiv",
    primaryClass = "hep-th",
    doi = "10.1007/JHEP09(2025)086",
    journal = "JHEP",
    volume = "09",
    pages = "086",
    year = "2025"
}

@article{Geng:2025bcb,
    author = "Geng, Hao and Jiang, Yikun and Xu, Jiuci",
    title = "{Algebras, entanglement islands, and observers}",
    eprint = "2506.12127",
    archivePrefix = "arXiv",
    primaryClass = "hep-th",
    doi = "10.1088/1361-6382/ae811e",
    journal = "Class. Quant. Grav.",
    volume = "43",
    number = "13",
    pages = "135012",
    year = "2026"
}

@article{Klinger:2026kqj,
    author = "Klinger, Marc S.",
    title = "{How to have your wormholes and factorize, too}",
    eprint = "2602.15120",
    archivePrefix = "arXiv",
    primaryClass = "hep-th",
    month = "2",
    year = "2026"
}

@article{AliAhmad:2025oli,
    author = "Ali Ahmad, Shadi and Klinger, Marc S.",
    title = "{Extensions from within}",
    eprint = "2503.02944",
    archivePrefix = "arXiv",
    primaryClass = "hep-th",
    month = "3",
    year = "2025"
}

@article{AliAhmad:2024saq,
    author = "Ali Ahmad, Shadi and Klinger, Marc S.",
    title = "{Emergent geometry from quantum probability}",
    eprint = "2411.07288",
    archivePrefix = "arXiv",
    primaryClass = "hep-th",
    doi = "10.1103/PhysRevD.111.105015",
    journal = "Phys. Rev. D",
    volume = "111",
    number = "10",
    pages = "105015",
    year = "2025"
}

@article{Faulkner:2020iou,
    author = "Faulkner, Thomas and Hollands, Stefan and Swingle, Brian and Wang, Yixu",
    title = "{Approximate Recovery and Relative Entropy I: General von Neumann Subalgebras}",
    eprint = "2006.08002",
    archivePrefix = "arXiv",
    primaryClass = "quant-ph",
    doi = "10.1007/s00220-021-04143-6",
    journal = "Commun. Math. Phys.",
    volume = "389",
    number = "1",
    pages = "349--397",
    year = "2022"
}

@article{Faulkner:2020hzi,
    author = "Faulkner, Thomas",
    title = "{The holographic map as a conditional expectation}",
    eprint = "2008.04810",
    archivePrefix = "arXiv",
    primaryClass = "hep-th",
    month = "8",
    year = "2020"
}

@article{Beny:2007ewj,
    author = "B\'eny, C\'edric and Kempf, Achim and Kribs, David W.",
    title = "{Generalization of Quantum Error Correction via the Heisenberg Picture}",
    doi = "10.1103/PhysRevLett.98.100502",
    journal = "Phys. Rev. Lett.",
    volume = "98",
    number = "10",
    pages = "100502",
    year = "2007"
}

@article{Beny:2008,
author = {Bény, Cédric and Kempf, Achim and Kribs, David},
year = {2008},
month = {11},
pages = {},
title = "{Quantum error correction on infinite-dimensional Hilbert spaces}",
volume = {50},
journal = {Journal of Mathematical Physics},
doi = {10.1063/1.3155783}
}

@article{Furuya:2020tzv,
    author = "Furuya, Keiichiro and Lashkari, Nima and Ouseph, Shoy",
    title = "{Real-space RG, error correction and Petz map}",
    eprint = "2012.14001",
    archivePrefix = "arXiv",
    primaryClass = "hep-th",
    doi = "10.1007/JHEP01(2022)170",
    journal = "JHEP",
    volume = "01",
    pages = "170",
    year = "2022"
}

@article{hayden2007blackholes,
    author = "Hayden, Patrick and Preskill, John",
    title = "{Black holes as mirrors: Quantum information in random subsystems}",
    eprint = "0708.4025",
    archivePrefix = "arXiv",
    primaryClass = "hep-th",
    reportNumber = "CALT-68-2659",
    doi = "10.1088/1126-6708/2007/09/120",
    journal = "JHEP",
    volume = "09",
    pages = "120",
    year = "2007"
}

@article{engelhardt2025observer,
    author = "Engelhardt, Netta and Gesteau, Elliott and Harlow, Daniel",
    title = "{Observer complementarity for black holes and holography}",
    eprint = "2507.06046",
    archivePrefix = "arXiv",
    primaryClass = "hep-th",
    reportNumber = "MIT-CTP/5884",
    month = "7",
    year = "2025"
}

@article{harlow2013quantum,
    author = "Harlow, Daniel and Hayden, Patrick",
    title = "{Quantum Computation vs. Firewalls}",
    eprint = "1301.4504",
    archivePrefix = "arXiv",
    primaryClass = "hep-th",
    doi = "10.1007/JHEP06(2013)085",
    journal = "JHEP",
    volume = "06",
    pages = "085",
    year = "2013"
}

@article{faulkner2013quantum,
  title={Quantum corrections to holographic entanglement entropy},
  author={Faulkner, Thomas and Lewkowycz, Aitor and Maldacena, Juan},
  journal={Journal of High Energy Physics},
  volume={2013},
  number={11},
  pages={1--18},
  year={2013},
  publisher={Springer}
}

@article{lewkowycz2013generalised,
   title={Generalized gravitational entropy},
   volume={2013},
   ISSN={1029-8479},
   url={http://dx.doi.org/10.1007/JHEP08(2013)090},
   DOI={10.1007/jhep08(2013)090},
   number={8},
   journal={Journal of High Energy Physics},
   publisher={Springer Science and Business Media LLC},
   author={Lewkowycz, Aitor and Maldacena, Juan},
   year={2013},
   month=aug}

@article{dong2018entropy,
   title={Entropy, extremality, euclidean variations, and the equations of motion},
   volume={2018},
   ISSN={1029-8479},
   url={http://dx.doi.org/10.1007/JHEP01(2018)081},
   DOI={10.1007/jhep01(2018)081},
   number={1},
   journal={Journal of High Energy Physics},
   publisher={Springer Science and Business Media LLC},
   author={Dong, Xi and Lewkowycz, Aitor},
   year={2018},
   month=jan }

@misc{yoshida2017efficient,
      title="{Efficient decoding for the Hayden-Preskill protocol}", 
      author={Beni Yoshida and Alexei Kitaev},
      year={2017},
      eprint={1710.03363},
      archivePrefix={arXiv},
      primaryClass={hep-th}
}

@article{akers2022black,
  title={The black hole interior from non-isometric codes and complexity},
  author={Akers, Chris and Engelhardt, Netta and Harlow, Daniel and Penington, Geoff and Vardhan, Shreya},
  journal={arXiv preprint arXiv:2207.06536},
  year={2022}
}

@article{Hollands_2021,
   title="{Variational approach to relative entropies with an application to QFT}",
   volume={111},
   ISSN={1573-0530},
   url={http://dx.doi.org/10.1007/s11005-021-01474-2},
   DOI={10.1007/s11005-021-01474-2},
   number={6},
   journal={Letters in Mathematical Physics},
   publisher={Springer Science and Business Media LLC},
   author={Hollands, Stefan},
   year={2021},
   month=oct }

@book{kosaki1998type,
  title={Type III Factors and Index Theory},
  author={Kosaki, H.},
  series={Lecture Notes Series},
  url={https://books.google.nl/books?id=1W_vAAAAMAAJ},
  year={1998},
  publisher={Research Institute of Mathematics, Global Analysis Research Center, Seoul National University}
}

@article{kosaki1991index,
  title={Index theory for operator algebras},
  author={Kosaki, H},
  journal={Sugaku Expositions},
  volume={4},
  pages={177--197},
  year={1991}
}

@article{longo1989index,
  title={Index of subfactors and statistics of quantum fields. I},
  author={Longo, Roberto},
  journal={Communications in mathematical physics},
  volume={126},
  number={2},
  pages={217--247},
  year={1989},
  publisher={Springer}
}

@article{longo1990index,
  title="{Index of subfactors and statistics of quantum fields: II. Correspondences, braid group statistics and Jones polynomial}",
  author={Longo, Roberto},
  journal={Communications in mathematical physics},
  volume={130},
  number={2},
  pages={285--309},
  year={1990},
  publisher={Springer}
}

@inproceedings{pimsner1986entropy,
  title={Entropy and index for subfactors},
  author={Pimsner, Mihai and Popa, Sorin},
  booktitle={Annales scientifiques de l'Ecole normale sup{\'e}rieure},
  volume={19},
  pages={57--106},
  year={1986}
}

@article{almheiri2013black,
  title={Black holes: complementarity or firewalls?},
  author={Almheiri, Ahmed and Marolf, Donald and Polchinski, Joseph and Sully, James},
  journal={Journal of High Energy Physics},
  volume={2013},
  number={2},
  pages={1--20},
  year={2013},
  publisher={Springer}
}

@article{page1993information,
  title={Information in black hole radiation},
  author={Page, Don},
  journal={Physical review letters},
  volume={71},
  number={23},
  pages={3743},
  year={1993},
  publisher={APS}
}

@article{almheiri2020page,
   title="{The Page curve of Hawking radiation from semiclassical geometry}",
   volume={2020},
   ISSN={1029-8479},
   url={http://dx.doi.org/10.1007/JHEP03(2020)149},
   DOI={10.1007/jhep03(2020)149},
   number={3},
   journal={Journal of High Energy Physics},
   publisher={Springer Science and Business Media LLC},
   author={Almheiri, Ahmed and Mahajan, Raghu and Maldacena, Juan and Zhao, Ying},
   year={2020},
   month=mar }

@article{almheiri2013apologia,
  title={An apologia for firewalls},
  author={Almheiri, Ahmed and Marolf, Donald and Polchinski, Joseph and Stanford, Douglas and Sully, James},
  journal={Journal of High Energy Physics},
  volume={2013},
  number={9},
  pages={1--32},
  year={2013},
  publisher={Springer}
}

@article{susskind1993strechted,
	author = {Susskind, Leonard and Thorlacius, L{\'a}rus and Uglum, John},
	doi = {10.1103/physrevd.48.3743},
	issn = {0556-2821},
	journal = {Physical Review D},
	month = {Oct},
	number = {8},
	pages = {3743--3761},
	publisher = {American Physical Society (APS)},
	title = {The stretched horizon and black hole complementarity},
	url = {http://dx.doi.org/10.1103/PhysRevD.48.3743},
	volume = {48},
	year = {1993}}

@article{leutheusser2023emergent,
    author = "Leutheusser, Samuel Aaron Wehlau and Liu, Hong",
    title = "{Emergent Times in Holographic Duality}",
    eprint = "2112.12156",
    archivePrefix = "arXiv",
    primaryClass = "hep-th",
    reportNumber = "MIT-CTP/5382",
    doi = "10.1103/PhysRevD.108.086020",
    journal = "Phys. Rev. D",
    volume = "108",
    number = "8",
    pages = "086020",
    year = "2023"
}

@article{leutheusser2023causal,
    author = "Leutheusser, Samuel and Liu, Hong",
    title = "{Causal connectability between quantum systems and the black hole interior in holographic duality}",
    eprint = "2110.05497",
    archivePrefix = "arXiv",
    primaryClass = "hep-th",
    reportNumber = "MIT-CTP/5335",
    doi = "10.1103/PhysRevD.108.086019",
    journal = "Phys. Rev. D",
    volume = "108",
    number = "8",
    pages = "086019",
    year = "2023"
}

@article{vanderHeijden:2024tdk,
    author = "van der Heijden, Jeremy and Verlinde, Erik",
    title = "{An operator algebraic approach to black hole information}",
    eprint = "2408.00071",
    archivePrefix = "arXiv",
    primaryClass = "hep-th",
    doi = "10.1007/JHEP02(2025)207",
    journal = "JHEP",
    volume = "02",
    pages = "207",
    year = "2025"
}

@article{HermanOcneanu1989,
  author    = {Herman, Richard H. and Ocneanu, Adrian},
  title     = {Index theory and Galois theory for infinite index inclusions of factors},
  journal   = {Comptes Rendus de l'Académie des Sciences, Série I, Mathématique},
  volume    = {309},
  number    = {17},
  pages     = {923--927},
  year      = {1989},
  publisher = {Gauthier-Villars}
}

@article{Nill:1994be,
    author = "Nill, Florian and Wiesbrock, Hans-Werner",
    title = "{A Comment on Jones inclusions with infinite index}",
    eprint = "hep-th/9411084",
    archivePrefix = "arXiv",
    reportNumber = "SFB-288-144",
    doi = "10.1142/S0129055X95000244",
    journal = "Rev. Math. Phys.",
    volume = "7",
    pages = "599--630",
    year = "1995"
}

@article{EnockNest1996,
  author  = {Enock, Michel and Nest, Ryszard},
  title   = {Irreducible inclusions of factors, multiplicative unitaries and {Kac} algebras},
  journal = {Journal of Functional Analysis},
  volume  = {137},
  number  = {2},
  pages   = {466--543},
  year    = {1996},
  doi     = {10.1006/jfan.1996.0053},
  url     = {https://www.sciencedirect.com/science/article/pii/S0022123696900531}
}

@article{ocneanu1991quantum,
  title="{Quantum symmetry, differential geometry of finite graphs and classification of subfactors, University of Tokyo Seminary Notes 45}",
  author={Ocneanu, Adrian},
  journal={Notes recorded by Y. Kawahigashi},
  year={1991}
}

@article{2Tu2025,
    author = "Tu, Jingxin and van den Heuvel, Pim and van der Heijden, Jeremy and Verlinde, Erik",
    title = "{Black Holes, Crossed Products and the Page Curve}",
    eprint = "To appear",
    archivePrefix = "arXiv",
    primaryClass = "hep-th",
    month = "9",
    year = "2026"
}

@inbook{Ocneanu1989operator, place={Cambridge}, series={London Mathematical Society Lecture Note Series}, title={Quantized groups, string algebras, and Galois theory for algebras}, booktitle={Operator Algebras and Applications}, publisher={Cambridge University Press}, author={Ocneanu, A.}, editor={Evans, David E. and Takesaki, MasamichiEditors}, year={1989}, pages={119–172}, collection={London Mathematical Society Lecture Note Series}}

@article{jones1983index,
  title={Index for subfactors},
  author={Jones, Vaughan FR},
  journal={Inventiones mathematicae},
  volume={72},
  number={1},
  pages={1--25},
  year={1983},
  publisher={Springer}
}

@article{brown2020python,
  title="{The Python’s Lunch: geometric obstructions to decoding Hawking radiation}",
  author={Brown, Adam R and Gharibyan, Hrant and Penington, Geoff and Susskind, Leonard},
  journal={Journal of High Energy Physics},
  volume={2020},
  number={8},
  pages={1--53},
  year={2020},
  publisher={Springer}
}

@article{horowitz2004blackhole,
   title={The black hole final state},
   volume={2004},
   ISSN={1029-8479},
   url={http://dx.doi.org/10.1088/1126-6708/2004/02/008},
   DOI={10.1088/1126-6708/2004/02/008},
   number={02},
   journal={Journal of High Energy Physics},
   publisher={Springer Science and Business Media LLC},
   author={Horowitz, Gary T and Maldacena, Juan},
   year={2004},
   month=feb, pages={008–008} }

@article{stanford2014complexity,
    author = "Stanford, Douglas and Susskind, Leonard",
    title = "{Complexity and Shock Wave Geometries}",
    eprint = "1406.2678",
    archivePrefix = "arXiv",
    primaryClass = "hep-th",
    doi = "10.1103/PhysRevD.90.126007",
    journal = "Phys. Rev. D",
    volume = "90",
    number = "12",
    pages = "126007",
    year = "2014"
}

@article{almheiri2019entropy,
  title={The entropy of bulk quantum fields and the entanglement wedge of an evaporating black hole},
  author={Almheiri, Ahmed and Engelhardt, Netta and Marolf, Donald and Maxfield, Henry},
  journal={Journal of High Energy Physics},
  volume={2019},
  number={12},
  pages={1--47},
  year={2019},
  publisher={Springer}
}

@article{dong2016deriving,
   title={Deriving covariant holographic entanglement},
   volume={2016},
   ISSN={1029-8479},
   url={http://dx.doi.org/10.1007/JHEP11(2016)028},
   DOI={10.1007/jhep11(2016)028},
   number={11},
   journal={Journal of High Energy Physics},
   publisher={Springer Science and Business Media LLC},
   author={Dong, Xi and Lewkowycz, Aitor and Rangamani, Mukund},
   year={2016},
   month=nov }

@article{penington2020entanglement,
  title={Entanglement wedge reconstruction and the information paradox},
  author={Penington, Geoffrey},
  journal={Journal of High Energy Physics},
  volume={2020},
  number={9},
  pages={1--84},
  year={2020},
  publisher={Springer}
}

@article{engelhardt2015quantum,
   title={Quantum extremal surfaces: holographic entanglement entropy beyond the classical regime},
   volume={2015},
   ISSN={1029-8479},
   url={http://dx.doi.org/10.1007/JHEP01(2015)073},
   DOI={10.1007/jhep01(2015)073},
   number={1},
   journal={Journal of High Energy Physics},
   publisher={Springer Science and Business Media LLC},
   author={Engelhardt, Netta and Wall, Aron C.},
   year={2015},
   month=jan }

@article{engelhardt2021world,
  title={A world without pythons would be so simple},
  author={Engelhardt, Netta and Penington, Geoff and Shahbazi-Moghaddam, Arvin},
  journal={Classical and Quantum Gravity},
  volume={38},
  number={23},
  pages={234001},
  year={2021},
  publisher={IOP Publishing}
}

@article{engelhardt2022finding,
  title={Finding pythons in unexpected places},
  author={Engelhardt, Netta and Penington, Geoff and Shahbazi-Moghaddam, Arvin},
  journal={Classical and Quantum Gravity},
  volume={39},
  number={9},
  pages={094002},
  year={2022},
  publisher={IOP Publishing}
}

@article{Longo:1990zp,
    author = "Longo, R.",
    title = "{Index of subfactors and statistics of quantum fields. 2: Correspondences, braid group statistics and Jones polynomial}",
    doi = "10.1007/BF02473354",
    journal = "Commun. Math. Phys.",
    volume = "130",
    pages = "285--309",
    year = "1990"
}

@article{Chandrasekaran:2022eqq,
    author = "Chandrasekaran, Venkatesa and Penington, Geoff and Witten, Edward",
    title = "{Large N algebras and generalized entropy}",
    eprint = "2209.10454",
    archivePrefix = "arXiv",
    primaryClass = "hep-th",
    doi = "10.1007/JHEP04(2023)009",
    journal = "JHEP",
    volume = "04",
    pages = "009",
    year = "2023"
}

@article{Chandrasekaran:2022cip,
    author = "Chandrasekaran, Venkatesa and Longo, Roberto and Penington, Geoff and Witten, Edward",
    title = "{An algebra of observables for de Sitter space}",
    eprint = "2206.10780",
    archivePrefix = "arXiv",
    primaryClass = "hep-th",
    doi = "10.1007/JHEP02(2023)082",
    journal = "JHEP",
    volume = "02",
    pages = "082",
    year = "2023"
}

\end{document}